\newcommand{\Eat}[1]{}
\newcommand{\RNum}[1]{\uppercase\expandafter{\romannumeral #1\relax}}
\newcommand{\SetCard}[1]{\ensuremath{| #1 |}}
\DeclareSymbolFont{AMSb}{U}{msb}{m}{n}
\DeclareMathSymbol{\N}{\mathord}{AMSb}{"4E}
\DeclareMathSymbol{\Z}{\mathord}{AMSb}{"5A}
\DeclareMathSymbol{\R}{\mathord}{AMSb}{"52}
\newcommand{\poly}{\ensuremath{\mbox{poly}}\xspace}
\DeclareMathOperator*{\conv}{conv}
\newcommand{\InsertAlgorithm}[3]%
{\begin{algorithm}[ht]
\caption{\sc #1}\label{#2}
\begin{algorithmic}[1]
\vspace{0.1cm}
\baselineskip=1.1\baselineskip
#3
\end{algorithmic}\end{algorithm}}
\newcommand{\AlgAssign}{\ensuremath{ \leftarrow }\xspace}
\newcommand{\Ordinal}[1]{\ensuremath{{#1}^{\rm th}}}
\newcommand{\Set}[1]{\ensuremath{\{ #1 \}}}
\newcommand{\SpecSet}[2]{\ensuremath{\Set{#1 \mid #2}}}
\newcommand{\Floor}[1]{\ensuremath{\left\lfloor{#1}\right\rfloor}}
\newcommand{\Angle}[1]{\ensuremath{\langle{#1}\rangle}}
\newcommand{\Size}[1]{\ensuremath{| #1 |}}
\newcommand{\Compl}[1]{\overline{#1}}
\def\half{\ensuremath{\frac{1}{2}}\xspace}
\providecommand{\Kth}[1]{\ensuremath{{#1}^{\rm th}}}
\newcommand{\Complexity}[1]{\ensuremath{\text{#1}}}
\newcommand{\Hardness}[2]{\ensuremath{\text{#1-#2}}}
\newtheorem{definition}{Definition}[section]
\newtheorem{problem}{Problem}[section]
\newtheorem{theorem}{Theorem}[section] 
\newtheorem{lemma}[theorem]{Lemma}
\newtheorem{corollary}[theorem]{Corollary}
\newtheorem{proposition}[theorem]{Proposition}
\newtheorem{observation}[theorem]{Observation}
\def\QED{{\phantom{x}} \hfill \ensuremath{\rule{1.3ex}{1.3ex}}}
\newcommand{\papertitle}{A General Framework for Robust Interactive Learning}
\title{\papertitle}
\author{%
Ehsan Emamjomeh-Zadeh%
\thanks{%
Department of Computer Science,
University of Southern California,
emamjome@usc.edu} 
\and
David Kempe%
\thanks{%
Department of Computer Science,
University of Southern California,
dkempe@usc.edu}}
\begin{document}

\maketitle

\def\structure{model\xspace}
\def\Structure{Model\xspace}
\def\structures{models\xspace}
\def\Structures{Models\xspace}

\def\Bsort{\textsc{Bubble Sort}\xspace}
\def\Isort{\textsc{Insersion Sort}\xspace}
\def\AnySwapText{Arbitrary Swap\xspace}
\def\AnySwap{\textsc{\AnySwapText}\xspace}

\def\resformat{format\xspace}
\def\Resformat{Format\xspace}
\def\resformats{formats\xspace}
\def\Resformats{Formats\xspace}

\def\length{length\xspace}
\def\Length{Length\xspace}
\def\lengths{lengths\xspace}
\def\Lengths{Lengths\xspace}

\def\weightV{weight\xspace}
\def\WeightV{Weight\xspace}
\def\weightsV{weights\xspace}
\def\WeightsV{Weights\xspace}


\newcommand{\tvd}[2]{\ensuremath{d_{\mathrm{TV}}(#1, #2)}\xspace}
\newcommand{\Entr}[1]{\ensuremath{H(#1)}\xspace}

\newcommand{\Weight}[1][]{\ensuremath{%
\ifthenelse{\equal{#1}{}}{\omega}{\omega_{#1}}}\xspace}
\newcommand{\Reach}[3][]{\ensuremath{%
\ifthenelse{\equal{#1}{}}{N(#2,#3)}{N_{#1}(#2,#3)}}\xspace}
\def\PotentialSym{\ensuremath{\Phi}\xspace}
\newcommand{\Potential}[2]{%
\ensuremath{\PotentialSym_{#1}(#2)}\xspace}
\newcommand{\MedianPotential}[2]{%
\ensuremath{\Psi_{#1}(#2)}\xspace}
\def\TVDis{\ensuremath{\Delta}\xspace}

\newcommand{\AllStructures}{\ensuremath{\Sigma}\xspace}
\def\ConsistentSet{\ensuremath{S}\xspace}
\def\InitSet{\ensuremath{\ConsistentSet_{\text{init}}}\xspace}

\def\TOTQ{\ensuremath{K}\xspace}
\def\ErrProb{\ensuremath{\delta}\xspace}
\def\SampleErr{\ensuremath{\varepsilon}\xspace}
\newcommand\NodeWeight[1][]{\ensuremath{%
\ifthenelse{\equal{#1}{}}{\mu}{\mu(#1)}}\xspace}
\newcommand\NodeWeightP[1][]{\ensuremath{%
\ifthenelse{\equal{#1}{}}{\mu'}{\mu'(#1)}}\xspace}
\def\Tolerance{\ensuremath{\lambda}\xspace}
\def\Marked{\ensuremath{M}\xspace}
\def\MultiWeights{\mbox{\ensuremath{%
\text{\sc Multiplicative-Weights}}}\xspace}
\newcommand{\WPotential}[2][]{\ensuremath{%
\ifthenelse{\equal{#1}{}}{\Phi(#2)}{\Phi_{#1}(#2)}}\xspace}
\newcommand{\SumWeight}[2][]{\ensuremath{%
\ifthenelse{\equal{#1}{}}{\Gamma(#2)}{\Gamma_{#1}(#2)}}\xspace}
\def\AlgThreshold{\ensuremath{\tau}\xspace}

\newcommand{\EventProb}[2]{\ensuremath{P_{#1, #2}}\xspace}
\def\NumSample{\ensuremath{r}\xspace}


\newcommand{\CLG}{\ensuremath{G_{\text{CL}}}\xspace}

\def\Family{\ensuremath{\mathcal{F}}\xspace}
\def\Classifier{\ensuremath{C}\xspace}
\def\ClassifierP{\ensuremath{C'}\xspace}
\def\Dim{\ensuremath{d}\xspace}
\def\Hyperplanes{\ensuremath{\mathcal{H}}\xspace}
\newcommand{\HyperplanesS}[2]{\ensuremath{\mathcal{H}^{#1}_{#2}}\xspace}


\newcommand{\BSG}{\ensuremath{G_{\text{BS}}}\xspace}
\newcommand{\ISG}{\ensuremath{G_{\text{IS}}}\xspace}
\newcommand{\PER}{\ensuremath{\pi}\xspace}
\newcommand{\PERP}{\ensuremath{\pi'}\xspace}
\newcommand{\Per}[1]{\ensuremath{\PER_{#1}}\xspace}
\newcommand{\PerP}[1]{\ensuremath{\PERP_{#1}}\xspace}
\newcommand{\sorted}{\ensuremath{\PER^*}\xspace}

\def\AlgSym{\ensuremath{\mathcal{A}}\xspace}
\def\AdvSym{\ensuremath{\mathcal{B}}\xspace}
\def\ListSym{\ensuremath{L}\xspace}

\def\QC{\ensuremath{Q}\xspace}

\def\PosLabel{\ensuremath{\text{`\ensuremath{+}'}}\xspace}
\def\NegLabel{\ensuremath{\text{`\ensuremath{-}'}}\xspace}

\newcommand{\ShiftPer}[3]{\ensuremath{{#1}_{#3 \leftarrow #2}}\xspace}

\def\LinearExt{\ensuremath{\mathcal{L}}\xspace}


\newcommand{\GCG}{\ensuremath{G_{\text{GC}}}\xspace}
\newcommand{\UCG}{\ensuremath{G_{\text{UC}}}\xspace}

\def\Split{\ensuremath{\mbox{\textsc{Split}}}\xspace}
\def\Merge{\ensuremath{\mbox{\textsc{Merge}}}\xspace}
\def\Clustering{\ensuremath{\mathcal{C}}\xspace}
\def\ClusteringOpt{\ensuremath{\mathcal{C}^*}\xspace}
\def\ClusteringP{\ensuremath{\mathcal{C}'}\xspace}
\def\ClusteringPP{\ensuremath{\mathcal{C}''}\xspace}
\def\ClusteringBar{\ensuremath{\bar{\mathcal{C}}}\xspace}
\def\OptClustering{\ensuremath{\mathcal{C^{*}}}\xspace}

\newcommand{\Adjacency}[1][]{\ensuremath{%
\ifthenelse{\equal{#1}{}}{\mathcal{A}}{\mathcal{A}^{(#1)}}}\xspace}
\newcommand{\Diff}[2]{\ensuremath{d(#1, #2)}\xspace}

\begin{abstract}
We propose a general framework for interactively learning \structures,
such as (binary or non-binary) classifiers, orderings/rankings of items,
or clusterings of data points.
Our framework is based on a generalization of
Angluin's equivalence query model
and Littlestone's online learning model:
in each iteration, the algorithm proposes a \structure,
and the user either accepts it
or reveals a specific mistake in the proposal.
The feedback is correct only with probability $p > \half$
(and adversarially incorrect with probability $1 - p$),
i.e., the algorithm must be able to learn
in the presence of arbitrary noise.
The algorithm's goal is to learn the ground truth \structure
using few iterations.

Our general framework is based on a graph representation
of the \structures and user feedback.
To be able to learn efficiently, it is sufficient that there be
a graph $G$ whose nodes are the \structures and (weighted) edges
capture the user feedback, with the property that if $s, s^*$
are the proposed and target \structures, respectively,
then any (correct) user feedback $s'$ must lie
on a shortest $s$-$s^*$ path in $G$.
Under this one assumption, there is a natural algorithm
reminiscent of the Multiplicative Weights Update algorithm,
which will efficiently learn $s^*$
even in the presence of noise in the user's feedback.

From this general result, we rederive with barely any extra effort
classic results on learning of classifiers
and a recent result on interactive clustering;
in addition, we easily obtain new interactive learning algorithms
for ordering/ranking.

\end{abstract}

\section{Introduction} \label{sec:introduction}

With the pervasive reliance on machine learning systems
across myriad application domains in the real world,
these systems frequently need to be deployed
before they are fully trained.
This is particularly true when the systems are supposed
to learn a specific user's (or a small group of users')
personal and idiosyncratic preferences.
As a result, we are seeing an increased practical interest
in online and interactive learning across a variety of domains.

A second feature of the deployment of such systems ``in the wild''
is that the feedback the system receives is likely to be noisy.
Not only may individual users give incorrect feedback,
but even if they do not, the preferences --- and hence feedback ---
across different users may vary.
Thus, interactive learning algorithms deployed
in real-world systems must be resilient to noisy feedback.

Since the seminal work of
Angluin~\cite{angluin:1988:queries-concept}
and Littlestone~\cite{littlestone:1988:online-learning},
the paradigmatic application of (noisy) interactive learning has been
online learning of a binary classifier when the algorithm is provided
with feedback on samples it had previously classified incorrectly.
However, beyond (binary or other) classifiers,
there are many other \structures
that must be frequently learned in an interactive manner.
Two particularly relevant examples are the following:

\begin{itemize}
\item Learning an ordering/ranking of items is a key part
of personalized Web search or other information-retrieval systems
(e.g., \cite{joachims:2002:search-engine-clickthrough,%
radlinski-joachims:2005:query-chain}).
The user is typically presented with an ordering of items,
and from her clicks or lack thereof,
an algorithm can infer items that are in the wrong order.

\item
Interactively learning a clustering
\cite{balcan-blum:2008:split-merge,%
awasthi-zadeh:2010:supervised-clustering,%
awasthi-balcan-voevodski:2017:local-algorithm-journal}
is important in many application domains,
such as interactively identifying communities in social
networks or partitioning an image into distinct objects.
The user will be shown a candidate clustering,
and can express that two clusters should be merged,
or a cluster should be split into two.
\end{itemize}

In all three examples --- classification, ranking, and clustering ---
the interactive algorithm will propose
a \emph{\structure}\footnote{We avoid
the use of the term ``concept,'' as it typically refers
to a binary function, and is thus associated
specifically with a classifier.}
(a classifier, ranking, or clustering) as a solution.
The user then provides --- explicitly or implicitly --- feedback
on whether the \structure is correct or needs to be fixed/improved.
This feedback may be incorrect with some probability.
Based on the feedback, the algorithm will propose a new
and possibly very different \structure, and the process repeats.
This type of interaction is the natural generalization of
Angluin's equivalence query model%
~\cite{angluin:1988:queries-concept,angluin:1992:colt-survey}.
It is worth noting that in contrast to active learning,
in interactive learning (which is the focus of this work),
the algorithm cannot ``ask'' direct questions;
it can only propose a \structure and receive feedback in return.
The algorithm should minimize the number of user interactions,
i.e., the number of times that the user needs to propose fixes.
A secondary goal is to make the algorithm's internal computations
efficient as well.

The main contribution of this article is a general framework
for efficient interactive learning of \structures
(even with noisy feedback),
presented in detail in Section~\ref{sec:preliminaries}.
We consider the set of all $N$ \structures as nodes
of a positively weighted undirected or directed graph $G$.
The one key property that $G$ must satisfy is the following:
(*) If $s$ is a proposed \structure,
and the user (correctly) suggests changing it to $s'$,
then the graph must contain the edge $(s, s')$;
furthermore, $(s, s')$ must lie on a shortest path
from $s$ to the target \structure $s^*$ (which is unknown to the algorithm).

We show that this single property is enough to learn the target
\structure $s^*$ using at most $\log N$ queries\footnote{
Unless specified otherwise, all logarithms are base 2.}
to the user, in the absence of noise.
When the feedback is correct with probability $p > \half$,
the required number of queries gracefully deteriorates to $O(\log N)$;
the constant depends on $p$.
We emphasize that the assumption (*) is not an assumption
on the user. We do not assume that the user somehow ``knows''
the graph $G$ and computes shortest paths in order to find a response.
Rather, (*) states that $G$ was correctly chosen to model
the underlying domain, so that correct answers by the user
must in fact have the property (*).
To illustrate the generality of our framework,
we apply it to ordering, clustering, and classification:

\begin{enumerate}
\item For ordering/ranking, each permutation is a node in $G$;
one permutation is the unknown target.
If the user can point out only \emph{adjacent} elements
that are out of order, then $G$ is
an adjacent transposition ``\Bsort'' graph,
which naturally has the property (*).
If the user can pick any element and suggest that it should precede
an entire block of elements it currently follows,
then we can instead use an ``\Isort'' graph;
interestingly, to ensure the property (*),
this graph must be weighted.
On the other hand, as we show in Section \ref{sec:sorting},
if the user can propose two arbitrary elements that should be swapped,
there is \emph{no} graph $G$ with the property (*).

Our framework directly leads to an interactive algorithm
that will learn the correct ordering of $n$ items
in $O(\log (n!)) = O(n \log n)$ queries;
we show that this bound is optimal under
the equivalence query model.

\item For learning a clustering of $n$ items,
the user can either propose merging two clusters,
or splitting one cluster.
In the interactive clustering model of
\cite{balcan-blum:2008:split-merge,%
awasthi-zadeh:2010:supervised-clustering,%
awasthi-balcan-voevodski:2017:local-algorithm-journal},
the user can specify \emph{that}
a particular cluster $C$ should be split,
but does not give a specific split.
We show in Section~\ref{sec:clustering} that
there is a weighted directed graph with the property (*);
then, if each cluster is from a ``small'' concept class
of size at most $M$ (such as having low VC-dimension),
there is an algorithm finding the true clustering in
$O(k \log M)$ queries,
where $k$ is number of the clusters (known ahead of time).

\item For binary classification,
$G$ is simply an $n$-dimensional hypercube
(where $n$ is the number of sample points that are to be classified).
As shown in Section~\ref{sec:classification},
one immediately recovers a close variant of
standard online learning algorithms within this framework.
An extension to classification with more than two classes
is very straightforward.
\end{enumerate}


\section{Learning Framework} \label{sec:preliminaries}

We define a framework for query-efficient interactive learning
of different types of \emph{\structures}.
Some prototypical examples of \structures to be learned are
rankings/orderings of items,
(unlabeled) clusterings of graphs or data points,
and (binary or non-binary) classifiers.
We denote the set of all candidate \structures
(permutations, partitions,
or functions from the hypercube to $\Set{0, 1}$)
by \AllStructures,
and individual \structures\footnote{When considering specific
applications, we will switch to notation more in line with that used
for the specific application.} by $s, s', s^*$, etc.
We write $N = \Size{\AllStructures}$ for the number of
candidate \structures.

We study interactive learning of such \structures
in a natural generalization of
the equivalence query model
of Angluin~\cite{angluin:1988:queries-concept,%
angluin:1992:colt-survey}.
This model is equivalent
to the more widely known online learning model
of Littlestone~\cite{littlestone:1988:online-learning},
but more naturally fits the description of
user interactions we follow here.
It has also served as the foundation
for the interactive clustering model of
Balcan and Blum~\cite{balcan-blum:2008:split-merge}
and Awasthi et al.~\cite{awasthi-zadeh:2010:supervised-clustering,%
awasthi-balcan-voevodski:2017:local-algorithm-journal}.

In the \emph{interactive learning framework},
there is an unknown ground truth \structure $s^*$ to be learned.
In each round, the learning algorithm proposes
a \structure $s$ to the user.
In response, with probability $p > \half$,
the user provides correct feedback.
In the remaining case (i.e., with probability $1 - p$),
the feedback is \emph{arbitrary}; in particular,
it could be arbitrarily and deliberately misleading.

Correct feedback is of the following form:
if $s = s^*$, then the algorithm is told this fact
in the form of a user response of $s$.
Otherwise, the user reveals a \structure $s' \neq s$
that is ``more similar'' to $s^*$ than $s$ was.
The exact nature of ``more similar,'' as well as the possibly
restricted set of suggestions $s'$ that the user can propose,
depend on the application domain.
Indeed, the strength of our proposed framework is that
it provides strong query complexity guarantees
under minimal assumptions about the nature of the feedback;
to employ the framework, one merely has to verify that the
the following assumption holds.

\begin{definition}[Graph Model for Feedback]
\label{def:shortest-path}
Define a weighted graph $G$
(directed or undirected) that contains
one node for each \structure $s \in \AllStructures$,
and an edge $(s, s')$ with arbitrary positive
edge \length $\Weight[(s,s')] > 0$ if
the user is allowed to propose $s'$ in response to $s$.
(Choosing the \lengths of edges
is an important part of using the framework.)
$G$ may contain additional edges
not corresponding to any user feedback.
The key property that $G$ must satisfy is the following:
(*) If the algorithm proposes $s$ and the ground truth
is $s^* \neq s$,
then every correct user feedback $s'$
lies on a shortest path from $s$ to $s^*$ in $G$
with respect to the \lengths \Weight[e].
If there are multiple candidate nodes $s'$,
then there is no guarantee on which one
the algorithm will be given by the user.
\end{definition}

\subsection{Algorithm and Guarantees}

Our algorithms are direct reformulations
and slight generalizations of algorithms recently proposed by
Emamjomeh-Zadeh et al.~\cite{2016:binary-search},
which itself was a significant generalization of the natural
``Halving Algorithm'' for learning a classifier
(e.g., \cite{littlestone:1988:online-learning}).
They studied the search problem as an abstract problem
they termed ``Binary Search in Graphs,''
without discussing any applications.
Our main contribution here is
the application of the abstract search problem
to a large variety of interactive learning problems,
and a framework that makes such applications easy.
We begin with the simplest case $p = 1$,
i.e., when the algorithm only receives correct feedback.

Algorithm~\ref{algo:no-errors} gives
essentially best-possible general guarantees
\cite{2016:binary-search}.
To state the algorithm and its guarantees,
we need the notion of an approximate median node
of the graph $G$. First, we denote by
\begin{align*}
  \Reach{s}{s'} :=
   \begin{cases}
      \Set{s} & \text{ if } s' = s \\                    
      \SpecSet{\hat{s}}{s' \text{ lies on a shortest path from }
      s \text{ to } \hat{s}}
      & \text{ if } s' \neq s
   \end{cases}
\end{align*}
the set of all \structures $\hat{s}$ that are consistent
with a user feedback of $s'$ to a \structure $s$.
In anticipation of the noisy case,
we allow \structures to be weighted%
\footnote{Edge \lengths are
part of the definition of the graph,
but node \weightsV will be assigned by our algorithm;
they basically correspond to likelihoods.},
and denote the node \weightsV or \emph{likelihoods}
by $\NodeWeight[s] \geq 0$.
If feedback is not noisy (i.e., $p = 1$),
all the non-zero node \weightsV are equal.
For every subset of \structures \ConsistentSet,
we write $\NodeWeight[\ConsistentSet] :=
\sum_{s \in \ConsistentSet} \NodeWeight[s]$
for the total node \weightV
of the \structures in \ConsistentSet.
Now, for every \structure $s$, define
\begin{align*}
\Potential{\NodeWeight}{s}
:= \frac{1}{\NodeWeight[\AllStructures]} \cdot
\max_{s' \neq s, (s, s') \in G}\NodeWeight[\Reach{s}{s'}]
\end{align*}
to be the largest fraction (with respect to node \weightsV)
of \structures that could still be consistent with a worst-case
response $s'$ to a proposed \structure of $s$.
For every subset of \structures \ConsistentSet,
we denote by $\NodeWeight_\ConsistentSet$
the likelihood function
that assigns \weightV $1$
to every node $s \in \ConsistentSet$
and $0$ elsewhere.
For simplicity of notation, we use
$\Potential{\ConsistentSet}{s}$
when the node \weightsV are $\NodeWeight_\ConsistentSet$.

The simple key insight of \cite{2016:binary-search}
can be summarized and reformulated as the following proposition:

\begin{proposition}[\cite{2016:binary-search},
Proofs of Theorems 3 and 14]
\label{prop:approximate-median-exists}
Let $G$ be a (weighted) directed graph in which each edge $e$
with \length \Weight[e] is part of a cycle of total edge \length
at most $c \cdot \Weight[e]$.
Then, for every node \weightV function \NodeWeight,
there exists a \structure $s$ such that 
$\Potential{\NodeWeight}{s} \leq \frac{c-1}{c}$.

When $G$ is undirected (and hence $c = 2$), 
for every node \weightV function \NodeWeight,
there exists an $s$ such that 
$\Potential{\NodeWeight}{s} \leq \half$.
\end{proposition}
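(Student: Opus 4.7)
The plan is to use a potential-function argument: define
\[
f(s) \;=\; \sum_{\hat{s} \in \AllStructures} \NodeWeight[\hat{s}] \cdot d(s, \hat{s}),
\]
where $d(s, \hat{s})$ is the (directed) shortest-path length from $s$ to $\hat{s}$ with respect to the edge \lengths \Weight, and take $s^\star$ to be any minimizer of $f$ over \AllStructures. I will then show that $\Potential{\NodeWeight}{s^\star} \leq \tfrac{c-1}{c}$ by contradiction: if some out-neighbor $s'$ of $s^\star$ satisfies $\NodeWeight[\Reach{s^\star}{s'}] > \tfrac{c-1}{c} \NodeWeight[\AllStructures]$, then I will exhibit $f(s') < f(s^\star)$, contradicting minimality.

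The first step is to split the sum defining $f(s') - f(s^\star)$ according to membership in $\Reach{s^\star}{s'}$. For $\hat{s} \in \Reach{s^\star}{s'}$, the fact that $s'$ lies on a shortest $s^\star$-to-$\hat{s}$ path gives exactly $d(s', \hat{s}) = d(s^\star, \hat{s}) - \Weight[(s^\star, s')]$, so each such term contributes $-\Weight[(s^\star, s')] \cdot \NodeWeight[\hat{s}]$. For $\hat{s} \notin \Reach{s^\star}{s'}$, I will use the triangle inequality $d(s', \hat{s}) \leq d(s', s^\star) + d(s^\star, \hat{s})$, which contributes at most $d(s', s^\star) \cdot \NodeWeight[\hat{s}]$.

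The main step, and the one where the cycle hypothesis enters essentially, is bounding $d(s', s^\star)$. By assumption the edge $(s^\star, s')$ lies on a directed cycle of total \length at most $c \cdot \Weight[(s^\star, s')]$, and so removing this edge leaves a directed path from $s'$ back to $s^\star$ of \length at most $(c-1) \cdot \Weight[(s^\star, s')]$. Hence $d(s', s^\star) \leq (c-1) \Weight[(s^\star, s')]$. Combining,
\[
f(s') - f(s^\star) \;\leq\; -\Weight[(s^\star, s')] \cdot \NodeWeight[\Reach{s^\star}{s'}] \;+\; (c-1)\Weight[(s^\star, s')] \cdot \bigl(\NodeWeight[\AllStructures] - \NodeWeight[\Reach{s^\star}{s'}]\bigr).
\]
Rearranging, this is strictly negative precisely when $\NodeWeight[\Reach{s^\star}{s'}] > \tfrac{c-1}{c} \NodeWeight[\AllStructures]$, which is exactly the hypothesis we assumed for contradiction. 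Thus the minimizer $s^\star$ must satisfy $\Potential{\NodeWeight}{s^\star} \leq \tfrac{c-1}{c}$.

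For the undirected case, I simply observe that every undirected edge of \length \Weight[e] forms a $2$-cycle of total \length $2\Weight[e]$ with its reverse, so the hypothesis holds with $c = 2$ and the bound becomes $\Potential{\NodeWeight}{s^\star} \leq \half$. The main obstacle I anticipate is purely a bookkeeping one: making sure the $d(s', s^\star) \leq (c-1)\Weight[(s^\star, s')]$ bound is justified cleanly from the abstract cycle hypothesis (in particular that the cycle really decomposes into the forward edge and a reverse path), and that the potential $f$ is well-defined, which follows because the cycle hypothesis implies every node can reach every other node to which $\NodeWeight$ assigns positive weight.
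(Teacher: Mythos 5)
Your proof is correct and is essentially the paper's own argument: the same weighted-distance potential $\sum_{\hat s}\NodeWeight[\hat s]\,d(s,\hat s)$, the same minimizer, the same split into $\Reach{s^\star}{s'}$ versus its complement, and the same use of the cycle hypothesis to bound the return distance by $(c-1)\Weight[(s^\star,s')]$; the only cosmetic difference is that you phrase it as a contradiction while the paper rearranges the inequality directly. (The paper likewise leaves the finiteness of the distances implicit, so no gap there relative to the source.)
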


\begin{proof}
For any pair $s, s'$ of \structures,
let $d(s,s')$ denote their distance in $G$
with respect to the edge \lengths \Weight.
Fix a node \weightV function \NodeWeight.
Define
$\MedianPotential{\NodeWeight}{s} :=
\sum_{s' \in \AllStructures} d(s, s') \NodeWeight[s']$
as the total weighted distance from $s$
to every \structure $s'$.
Let $\hat{s}$ be a \structure
that minimizes \MedianPotential{\NodeWeight}{s}.
We prove that $\hat{s}$ satisfies the claim of 
Proposition~\ref{prop:approximate-median-exists}.

Let $e = (\hat{s}, s)$ be an edge in $G$.
By definition, $s$ lies on a shortest path
from $\hat{s}$ to every \structure
$s' \in \Reach{\hat{s}}{s}$.
Therefore, for every $s' \in \Reach{\hat{s}}{s}$,
we have $d(s, s') = d(\hat{s}, s') - \Weight[e]$.
On the other hand, $e$ belongs to a cycle
of total \length at most $c \cdot \Weight[e]$,
so there is a path of total \length
no more than $(c - 1) \cdot \Weight[e]$ from $s$ to $\hat{s}$.
Thus, for every $s' \notin \Reach{\hat{s}}{s}$,
we have $d(s, s') \leq d(\hat{s}, s') + (c - 1) \cdot \Weight[e]$.
We can now bound \MedianPotential{\NodeWeight}{s}
from above:

\begin{align*}
\MedianPotential{\NodeWeight}{s}
& = \sum_{s' \in \AllStructures} \NodeWeight[s'] \cdot d(s, s') \\
& \leq \sum_{s' \in \Reach{\hat{s}}{s}}
\NodeWeight[s'] \cdot (d(\hat{s}, s') - \Weight[e]) 
  + \sum_{s' \notin \Reach{\hat{s}}{s}}
\NodeWeight[s'] \cdot (d(s, s') + (c - 1) \cdot \Weight[e]) \\
& = \MedianPotential{\NodeWeight}{\hat{s}} - \Weight[e] \cdot
\left(
\sum_{s' \in \Reach{\hat{s}}{s}} \NodeWeight[s']
- (c - 1) \cdot \sum_{s' \notin \Reach{\hat{s}}{s}} \NodeWeight[s']
\right).
\end{align*}

By definition of $\hat{s}$, we have
$\MedianPotential{\NodeWeight}{s} \geq
\MedianPotential{\NodeWeight}{\hat{s}}$,
so after dividing by $\Weight[e] > 0$, we get that
$\sum_{s' \in \Reach{\hat{s}}{s}} \NodeWeight[s']
- (c - 1) \cdot \sum_{s' \notin\Reach{\hat{s}}{s}}  \NodeWeight[s']$
must be non-positive, implying that 
$\sum_{s' \in \Reach{\hat{s}}{s}} \NodeWeight[s'] \leq
(c - 1) \cdot \sum_{s' \notin \Reach{\hat{s}}{s}} \NodeWeight[s']$.
This completes the proof.
\end{proof}

In Algorithm~\ref{algo:no-errors},
we always have uniform node \weightV for all the \structures
which are consistent with all the feedback received so far,
and node \weightV $0$ for \structures that are inconsistent
with at least one response.
Prior knowledge about candidates for $s^*$
can be incorporated by providing the algorithm with the input
$\InitSet \ni s^*$ to focus its search on;
in the absence of prior knowledge, the algorithm can be given
$\InitSet = \AllStructures$.

\InsertAlgorithm{Learning a \structure without Feedback Errors
$\normalfont{(\InitSet)}$}%
{algo:no-errors}{
\STATE{$\ConsistentSet \AlgAssign \InitSet$}.
\WHILE{$\SetCard{\ConsistentSet} > 1$}
\STATE{Let $s$ be a \structure
with a ``small'' value of \Potential{\ConsistentSet}{s}. \label{line:potential}}
\STATE{Let $s'$ be the user's feedback \structure.\label{line:response}} 
\STATE{Set $\ConsistentSet \AlgAssign \ConsistentSet \cap \Reach{s}{s'}$.}\label{line:update}
\ENDWHILE
\RETURN{the only remaining \structure in \ConsistentSet.}
}

Line~\ref{line:potential} is underspecified as ``small.''
Typically, an algorithm would choose the $s$ with smallest
\Potential{\ConsistentSet}{s}.
But computational efficiency constraints or other restrictions
(see Sections~\ref{sec:constraints} and \ref{sec:classification})
may preclude this choice and force the algorithm to choose a suboptimal $s$.
The guarantee of Algorithm~\ref{algo:no-errors} is summarized by the
following Theorem~\ref{thm:no-errors}.
It is a straightforward generalization of
Theorems 3 and 14 from
\cite{2016:binary-search}, but for completeness,
we give a self-contained proof.

\begin{theorem} \label{thm:no-errors}
Let $N_0 = \SetCard{\InitSet}$ be the
number of initial candidate \structures.
If each \structure $s$ chosen in Line~\ref{line:potential}
of Algorithm~\ref{algo:no-errors} has
$\Potential{\ConsistentSet}{s} \leq \beta$,
then Algorithm~\ref{algo:no-errors} finds $s^*$ using at most
$\log_{1/\beta} N_0$ queries.
\end{theorem}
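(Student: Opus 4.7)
The plan is to maintain two invariants across iterations of the while loop in Algorithm~\ref{algo:no-errors}: (I) the target \structure $s^*$ always lies in \ConsistentSet, and (II) after $k$ iterations, $\SetCard{\ConsistentSet} \leq \beta^k N_0$. Once $\beta^k N_0 < 2$, integrality combined with (I) forces $\ConsistentSet = \Set{s^*}$, so the loop terminates and returns $s^*$; this gives the query bound $\lceil \log_{1/\beta} N_0 \rceil$.

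For invariant (I), I would induct on the iteration count. The base case is $s^* \in \InitSet = \ConsistentSet$ by assumption on \InitSet. For the inductive step, consider a round in which the algorithm proposes $s$ and receives correct feedback $s'$. If $s = s^*$, then the correct response is $s' = s$ and $\Reach{s}{s} = \Set{s} \ni s^*$. If $s \neq s^*$, property~(*) of Definition~\ref{def:shortest-path} places $s'$ on a shortest $s$-$s^*$ path in $G$, so $s^* \in \Reach{s}{s'}$. In either case, the update in Line~\ref{line:update} preserves $s^* \in \ConsistentSet$.

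For invariant (II), I would observe that the node \weightsV implicitly used by the algorithm are the uniform indicator $\NodeWeight_\ConsistentSet$, so the definition of the potential specializes to
\[
\Potential{\ConsistentSet}{s} \;=\; \frac{1}{\SetCard{\ConsistentSet}} \max_{s' \neq s,\,(s,s') \in G} \SetCard{\ConsistentSet \cap \Reach{s}{s'}}.
\]
The hypothesis $\Potential{\ConsistentSet}{s} \leq \beta$ then directly yields $\SetCard{\ConsistentSet \cap \Reach{s}{s'}} \leq \beta \SetCard{\ConsistentSet}$ for every legal feedback $s' \neq s$ the user could return, so \SetCard{\ConsistentSet} contracts by a factor of at least $\beta$ per iteration. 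Induction on $k$ completes the argument.

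There is no real obstacle here; the proof is essentially bookkeeping once the definitions are unpacked. The one mildly subtle point is the edge case $s = s^*$, where the user's ``response'' $s' = s$ is not literally an edge in $G$ and is excluded from the maximum defining \Potential{\ConsistentSet}{s}; this case must be treated separately, but it is favorable, as \ConsistentSet immediately shrinks to $\Set{s^*}$ and the loop terminates on the spot.
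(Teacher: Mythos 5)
Your proof is correct and follows essentially the same route as the paper's: the target stays in $\ConsistentSet$ by property (*), and the hypothesis $\Potential{\ConsistentSet}{s} \leq \beta$ forces a factor-$\beta$ contraction of $\SetCard{\ConsistentSet}$ per query, giving $\log_{1/\beta} N_0$ rounds. You merely spell out the two invariants and the $s = s^*$ edge case more explicitly than the paper does.
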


\begin{corollary} \label{cor:no-errors:half}
When $G$ is undirected and the optimal $s$ is used in each iteration,
$\beta = \half$ and Algorithm~\ref{algo:no-errors} finds $s^*$
using at most $\log_2 N_0$ queries.
\end{corollary}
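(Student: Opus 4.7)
The plan is to derive the corollary as an immediate consequence of combining Proposition~\ref{prop:approximate-median-exists} with Theorem~\ref{thm:no-errors}. The new content to verify is simply that in the undirected setting, the ``optimal $s$'' chosen in Line~\ref{line:potential} actually attains the potential bound $\beta = \half$ at every iteration.

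First I would argue that for an undirected $G$, the constant $c$ in the hypothesis of Proposition~\ref{prop:approximate-median-exists} is $2$. The reason is that any undirected edge $e = \{u,v\}$ of length $\Weight[e]$ can be traversed from $u$ to $v$ and back along the same edge, yielding a closed walk of total length $2 \Weight[e]$ that contains $e$; this is the cycle-through-$e$ condition with $c = 2$. (One could equivalently view $G$ as a directed graph by replacing each undirected edge with two oppositely oriented arcs, each of length $\Weight[e]$, so that every arc lies in a $2$-cycle of total length $2\Weight[e]$.) The proposition then guarantees, for every node weight function $\NodeWeight$, the existence of an $s$ with $\Potential{\NodeWeight}{s} \leq (c-1)/c = \half$.

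Next I would specialize to the weight functions that actually arise inside Algorithm~\ref{algo:no-errors}. At the beginning of each iteration the algorithm maintains a consistent set $\ConsistentSet \subseteq \InitSet$ and, under the conventions of the (noise-free) section, uses the uniform indicator weights $\NodeWeight_\ConsistentSet$. Applying the proposition with $\NodeWeight = \NodeWeight_\ConsistentSet$ produces some structure $\hat{s}$ with $\Potential{\ConsistentSet}{\hat{s}} \leq \half$; since ``optimal $s$'' in Line~\ref{line:potential} means the minimizer of $\Potential{\ConsistentSet}{\cdot}$, the $s$ actually selected by the algorithm also satisfies $\Potential{\ConsistentSet}{s} \leq \half$. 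This verifies the hypothesis of Theorem~\ref{thm:no-errors} with $\beta = \half$ at every step.

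Finally, plugging $\beta = \half$ into Theorem~\ref{thm:no-errors} yields a query bound of $\log_{1/\beta} N_0 = \log_{2} N_0$, which is exactly the claim of the corollary. I do not expect a real obstacle here: the only point requiring any argument is the passage from ``undirected'' to $c = 2$ in the proposition, and once that is observed the corollary is a one-line combination of the two prior results.
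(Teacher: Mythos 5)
Your proof is correct and follows exactly the route the paper intends: the corollary is stated without a separate proof precisely because it is the immediate combination of Proposition~\ref{prop:approximate-median-exists} (which already notes that undirected graphs give $c=2$, hence an $s$ with $\Potential{\NodeWeight}{s}\leq\half$ always exists) with Theorem~\ref{thm:no-errors} at $\beta=\half$. Your explicit justification of $c=2$ via traversing each undirected edge in both directions matches the paper's implicit reasoning, so there is nothing to add.
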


\begin{extraproof}{Theorem~\ref{thm:no-errors}}
Let \ConsistentSet be the set of \structures
that are consistent with all the query responses so far.
(Initially, $\ConsistentSet = \InitSet$.)
Let $s$ be a \structure with
$\Potential{\ConsistentSet}{s} \leq \beta$.
If the algorithm proposes $s$ to the user,
the user's feedback will be consistent with
at most a $\beta$ fraction of the \structures in \ConsistentSet.
Note that in Algorithm~\ref{algo:no-errors},
$\NodeWeight(s') = 1$ for every $s' \in \ConsistentSet$.
Given that the set of consistent \structures
shrinks by at least a factor of $\beta$ in each round,
it takes no more than $\log_{1/\beta} \SetCard{\InitSet}$ rounds
to get it down to a single \structure
(which must then be the target).
\end{extraproof}

Next, we present the algorithm in the case of
probabilistically incorrect feedback.
It is a close adaptation of an algorithm by
Emamjomeh-Zadeh et al.~\cite{2016:binary-search},
which resembles a multiplicative-weights update algorithm.
It keeps track of node weights $\NodeWeight(s)$ for each
model, which now exactly correspond to likelihoods of the observed
responses, given that $s$ is the target.
Hence, instead of the generic name ``node weight,'' we will refer to
them as likelihoods.

In order to achieve an information-theoretically optimal dependence on $p$,
\cite{2016:binary-search} run the algorithm in several stages,
removing \emph{very likely} nodes from consideration
for later inspection.
Here, however, we slightly modify (in fact, simplify)
the algorithm by not removing the likely nodes during the execution.
This modification is crucial to make the algorithm
efficiently implementable using a sampling oracle,
as discussed in Section ~\ref{sec:constraints}.
The key ``multiplicative weights'' loop is reformulated and slightly
generalized by Algorithm~\ref{algo:mult-weights}.

\InsertAlgorithm{\MultiWeights ($\normalfont{\InitSet, \TOTQ}$)}%
{algo:mult-weights}{
  \STATE{Set $\NodeWeight[v] \AlgAssign 1$
for all \structures $s \in \InitSet$\\
and $\NodeWeight[s] \AlgAssign 0$ for all \structures $s \notin \InitSet$.}
\STATE{Set $\Marked \AlgAssign \emptyset$.}
\FOR{$\TOTQ + 1$ iterations}
	\IF{there exists a \structure $s$ with
	$\NodeWeight[s] \geq \half \NodeWeight[\InitSet]$}
		\STATE{Mark $s$, by setting $\Marked \AlgAssign \Marked \cup \Set{s}$.}
	\ENDIF
	\STATE{Let $s$ be a \structure with ``small'' \Potential{\NodeWeight}{s}.
	\label{line:errors-potential}}
	\STATE{Query \structure $s$, receiving feedback $s'$.}
	\FORALL{\structures $\hat{s} \in \InitSet$}
		\IF{$\hat{s} \in \Reach{s}{s'}$} 
			\STATE{$\NodeWeight[\hat{s}] \AlgAssign
			p \cdot \NodeWeight[\hat{s}]$.}
		\ELSE
			\STATE{$\NodeWeight[\hat{s}] \AlgAssign
			(1- p) \cdot \NodeWeight[\hat{s}]$.}
		\ENDIF
	\ENDFOR
\ENDFOR
\RETURN{$\Marked$}
}

Algorithm~\ref{algo:mult-weights} is invoked several times
in Algorithm~\ref{algo:errors}.
Algorithm~\ref{algo:errors} is given
an initial candidate \structure set \InitSet,
and also a \emph{threshold} $0 < \AlgThreshold < 1$
(to be specified later),
as well as a target error probability \ErrProb.
The algorithm must output $s^*$ (the target \structure)
with probability at least $1 - \ErrProb$,
assuming that $s^* \in \InitSet$.
The exact constants used are chosen to achieve an optimal dependency
of the number of queries on $p$ in the worst case over graphs
\cite{2016:binary-search}.
Here and below, $\Entr{p} = - p \log p - (1 - p) \log (1 - p)$
denotes the entropy.

\InsertAlgorithm{Online Learning of a \structure with Imperfect Feedback
$\normalfont{(\InitSet, \AlgThreshold, \ErrProb)}$}{algo:errors}{
\STATE{$\ErrProb' \AlgAssign \ErrProb/5$.}
\STATE{Fix $\Tolerance_1 = \max\Set{
\sqrt{\frac{1}{\log \log \SetCard{\InitSet}}},
\frac{\log (1/\AlgThreshold) - \Entr{p}}{2 \log(p/(1 - p))}}$.}
\label{line:lambda1}
\STATE{$\TOTQ_1 \AlgAssign
\max\Set{\frac{\log \SetCard{\InitSet}}{\log (1/\AlgThreshold)
- \Entr{p} - \Tolerance_1 \log(p/(1 - p))}
+ \frac{\log(1/\AlgThreshold)}{\log(1/\ErrProb')},
\frac{\ln(1 / \ErrProb')}{\Tolerance_1 ^ 2}} + 1$.}
\STATE{$S_1 \AlgAssign \MultiWeights(V, \TOTQ_1)$.}
\STATE{Fix $\Tolerance_2 =
\frac{\log (1/\AlgThreshold) - \Entr{p}}{2 \log(p/(1 - p))}$.}
\label{line:lambda2}
\STATE{$\TOTQ_2 \AlgAssign \max\Set{
\frac{\log(\SetCard{S_1})}{\log(1/\AlgThreshold) - \Entr{p}
- \Tolerance_2 \log(p/(1 - p))}
+ \frac{\log(1/\AlgThreshold)}{\log(1/\ErrProb')},
\frac{\ln(1 / \ErrProb')}{\Tolerance_2 ^ 2}} + 1$.}
\STATE{$S_2 \AlgAssign \MultiWeights(S_1, \TOTQ_2)$.}
\FORALL{$s \in S_2$}
	\STATE{Query $s$ repeatedly
	$\frac{2 \ln(\SetCard{S_2}/\ErrProb')}{(2p - 1)^2}$ times.}
	\IF{$s$ is returned as the correct \structure
	for at least half of these queries}
		\RETURN{$s$.}
	\ENDIF
\ENDFOR
\RETURN{failure.}
}

The performance of Algorithm~\ref{algo:errors} is summarized
in Theorem~\ref{thm:errors},
which generalizes the results of \cite{2016:binary-search}
to arbitrary values of $\beta$.

\begin{theorem} \label{thm:errors}
Let $\beta \in [\half, 1)$,
define $\AlgThreshold = \beta p + (1 - \beta)(1 - p)$,
and let $N_0 = \SetCard{\InitSet}$.
Assume that $\log (1/\AlgThreshold) > \Entr{p}$
where $\Entr{p} = - p \log p - (1 - p) \log(1- p)$
denotes the entropy.
(When $\beta = \half$,
this holds for every $p > \half$.)

If each \structure $s$ chosen in Line~\ref{line:errors-potential}
of Algorithm~\ref{algo:mult-weights} has
$\Potential{\NodeWeight}{s} \leq \beta$,
then with probability at least $1 - \ErrProb$,
Algorithm~\ref{algo:errors} finds $s^*$ using at most
$\frac{(1 - \ErrProb)}{\log (1/\AlgThreshold) - \Entr{p}} \log N_0
+ o(\log N_0) + O(\log^2(1/\ErrProb))$
queries in expectation.
\end{theorem}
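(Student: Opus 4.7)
The plan is to decompose the analysis into the two calls of \MultiWeights followed by the final repeated-query verification phase, and then union-bound the failure probabilities. The core quantitative lemma concerns a single call \MultiWeights$(V, K)$, and tracks two quantities: the total likelihood $\NodeWeight[V]$ and the target likelihood $\NodeWeight[s^*]$. Since every queried $s$ satisfies $\Potential{\NodeWeight}{s} \leq \beta$, for any feedback $s'$ the fraction of weight in $\Reach{s}{s'}$ is at most $\beta$; hence a fraction $\leq \beta$ gets multiplied by $p$ and the rest by $1-p$, so $\NodeWeight[V]$ shrinks by a factor of at most $\beta p + (1-\beta)(1-p) = \AlgThreshold$ each round \emph{regardless of whether the feedback is correct}. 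After $K$ rounds, $\NodeWeight[V] \leq |V| \cdot \AlgThreshold^K$ deterministically. For $\NodeWeight[s^*]$, property (*) ensures that whenever feedback is correct (probability $p$), $s^* \in \Reach{s}{s'}$, so $\NodeWeight[s^*]$ is multiplied by $p$; otherwise, in the worst case, by $1-p$. If $C_K$ is the number of correct rounds, then $\NodeWeight[s^*] \geq p^{C_K}(1-p)^{K-C_K}$, and a Chernoff--Hoeffding bound gives $C_K \geq (p-\Tolerance)K$ with probability $\geq 1 - e^{-2\Tolerance^2 K}$, yielding $\log \NodeWeight[s^*] \geq -K(\Entr{p} + \Tolerance \log(p/(1-p)))$.

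Combining these bounds, on the good Chernoff event
\[
\log\frac{\NodeWeight[s^*]}{\NodeWeight[V]} \;\geq\; K\bigl[\log(1/\AlgThreshold) - \Entr{p} - \Tolerance\log(p/(1-p))\bigr] - \log |V|,
\]
so once the right-hand side exceeds $-1$, $s^*$ is heavy and gets added to \Marked. The values $\Tolerance_1, \TOTQ_1$ are calibrated so both terms in the max for $\TOTQ_1$ dominate when needed: the first ensures that many iterations are executed to reach the marking threshold, while the second ($\ln(1/\ErrProb')/\Tolerance_1^2$) drives the Chernoff failure probability below $\ErrProb'$. Crucially, since \Marked gains at most one element per iteration, $|S_1| \leq \TOTQ_1 + 1$ deterministically. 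The same analysis, applied to \MultiWeights$(S_1, \TOTQ_2)$ with the smaller input $|S_1|$, shows that $s^* \in S_2$ w.p.\ $\geq 1-\ErrProb'$ and $|S_2| \leq \TOTQ_2 + 1$. Because $\log|S_1|$ is only $O(\log\log N_0) + O(\log(1/\ErrProb'))$, the contribution of $\TOTQ_2$ falls into $o(\log N_0) + O(\log^2(1/\ErrProb))$.

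For the final phase, each $s \in S_2$ is queried $\Theta(\log(|S_2|/\ErrProb')/(2p-1)^2)$ times: if $s = s^*$, the user returns $s$ with probability $\geq p > \half$, while if $s \neq s^*$ correct feedback never equals $s$, so $s$ is returned with probability $\leq 1-p < \half$; a Chernoff bound and a union bound over $S_2$ then identify $s^*$ correctly w.p.\ $\geq 1-\ErrProb'$. Union-bounding all $O(1)$ failure events against $\ErrProb' = \ErrProb/5$ yields total failure $\leq \ErrProb$, and the query count $\TOTQ_1 + \TOTQ_2 + 2 + |S_2|\cdot \Theta(\log(|S_2|/\ErrProb')/(2p-1)^2)$ decomposes into the leading $\log N_0/(\log(1/\AlgThreshold) - \Entr{p})$ term (from the first term in the max for $\TOTQ_1$ as $\Tolerance_1 \to 0$), plus $o(\log N_0)$ (from $\TOTQ_2$ and the $\Tolerance_1$-slack) plus $O(\log^2(1/\ErrProb))$ (from $\TOTQ_2$'s $\ln(1/\ErrProb')/\Tolerance_2^2$ and from the verification phase).

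The main obstacle is the delicate tuning of $\Tolerance_1$ and $\Tolerance_2$: the Chernoff failure $e^{-2\Tolerance^2 K}$ demands $\Tolerance$ large, while the denominator $\log(1/\AlgThreshold) - \Entr{p} - \Tolerance\log(p/(1-p))$ in the marking threshold must stay close to $\log(1/\AlgThreshold) - \Entr{p}$ to keep the leading coefficient sharp. The choice $\Tolerance_1 \approx 1/\sqrt{\log\log N_0}$ makes the denominator slack vanishingly small and sends the Chernoff error into the $o(\log N_0)$ slack, while $\Tolerance_2 = (\log(1/\AlgThreshold) - \Entr{p})/(2\log(p/(1-p)))$ accepts a constant-factor slack at the second level, which is fine because the input is only $|S_1|$. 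The detailed bookkeeping to verify that all lower-order terms fit inside $o(\log N_0) + O(\log^2(1/\ErrProb))$, and that the $(1-\ErrProb)$ factor in the leading term emerges from the conditioning on success, is the most technical part, but it is essentially the same calculation as in Theorems 3 and 14 of~\cite{2016:binary-search}, adapted to general $\beta$ via the replacement $\half \leadsto \AlgThreshold$ throughout.
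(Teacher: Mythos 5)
Your overall architecture (two \MultiWeights phases, Chernoff bound on the number of correct responses, the calibration of $\Tolerance_1$ versus $\Tolerance_2$, the bound $\SetCard{S_1} \leq \TOTQ_1 + 1$, and the majority-vote verification over $S_2$) matches the paper's proof. But your core quantitative claim --- that the total likelihood shrinks by a factor $\AlgThreshold$ in \emph{every} round, \emph{deterministically}, because $\Potential{\NodeWeight}{s} \leq \beta$ --- is false, and this is precisely the point where the paper's argument has to work harder. The quantity $\Potential{\NodeWeight}{s}$ is defined as a maximum of $\NodeWeight[\Reach{s}{s'}]$ over neighbors $s' \neq s$ only; it says nothing about the response $s' = s$ (the user asserting that the queried \structure is the target), for which $\Reach{s}{s} = \Set{s}$. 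If the queried \structure $s$ has accumulated likelihood $\NodeWeight[s] > \beta\, \NodeWeight[\InitSet]$ --- which can certainly happen, and is exactly the situation the marking step is designed for --- then an adversarial ``you found it'' response multiplies that heavy mass by $p$ and only the rest by $1-p$, and the total likelihood can shrink by a factor close to $p$ rather than $\AlgThreshold$. So your deterministic bound $\NodeWeight[V] \leq \SetCard{V} \cdot \AlgThreshold^{K}$ does not hold.

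The paper repairs this with a case analysis: when $\NodeWeight[s] \leq \half \NodeWeight[\InitSet] \leq \beta\,\NodeWeight[\InitSet]$, every possible response (including $s'=s$) puts at most a $\beta$ fraction of the mass into the ``multiply by $p$'' bucket and the factor-$\AlgThreshold$ decrease is indeed unconditional; when $\NodeWeight[s] > \half \NodeWeight[\InitSet]$ and $s \neq s^*$, the adversary can only issue the damaging response with probability at most $1-p$, and a direct computation shows the total likelihood still decreases by a factor of at most $\half \leq \AlgThreshold$ \emph{in expectation}. This weakens the conclusion to $\mathbb{E}[\NodeWeight[V]] \leq \SetCard{\InitSet}\cdot\AlgThreshold^{K}$, which then has to be converted into a high-probability statement (via Markov) before being compared against the Chernoff lower bound on $\NodeWeight[s^*]$; that conversion step is also missing from your writeup since you believed the bound was deterministic. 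Note also that the marking step is not merely bookkeeping for $\SetCard{S_1}$: it is what guarantees that if $s = s^*$ ever becomes heavy (the one case excluded from the expectation argument), it is captured into \Marked regardless of subsequent adversarial responses. The rest of your accounting of $\TOTQ_1$, $\TOTQ_2$, and the verification phase is consistent with the paper.
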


\begin{corollary} \label{cor:errors:half}
When the graph $G$ is undirected and the optimal $s$ is used in
each iteration, then with probability at least $1 - \ErrProb$,
Algorithm~\ref{algo:errors} finds $s^*$ using at most
$\frac{(1 - \ErrProb)}{1 - \Entr{p}} \log_2 N_0
+ o(\log N_0) + O(\log^2(1/\ErrProb))$
queries in expectation.
\end{corollary}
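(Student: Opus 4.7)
The plan is to obtain Corollary~\ref{cor:errors:half} as an immediate specialization of Theorem~\ref{thm:errors} to the case $\beta = \half$, which is the best constant achievable on an undirected graph.

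First, I would justify that $\beta = \half$ is indeed available. Since $G$ is undirected, every edge lies on a cycle of total length at most $2 \cdot \Weight[e]$, so the undirected part of Proposition~\ref{prop:approximate-median-exists} applies: for the node \weightV function \NodeWeight maintained at any point by Algorithm~\ref{algo:mult-weights}, there exists a \structure $s$ with $\Potential{\NodeWeight}{s} \leq \half$. By the hypothesis that the algorithm picks the optimal $s$ on Line~\ref{line:errors-potential}, the chosen \structure always satisfies $\Potential{\NodeWeight}{s} \leq \half$, so we may invoke Theorem~\ref{thm:errors} with $\beta = \half$.

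Next I would carry out the substitution. With $\beta = \half$, the threshold becomes $\AlgThreshold = \half p + \half (1 - p) = \half$, so $\log(1/\AlgThreshold) = 1$. The hypothesis $\log(1/\AlgThreshold) > \Entr{p}$ of Theorem~\ref{thm:errors} reduces to $\Entr{p} < 1$, which holds for every $p > \half$ since the binary entropy attains its maximum value $1$ only at $p = \half$. Plugging $\log(1/\AlgThreshold) = 1$ into the query bound of Theorem~\ref{thm:errors} yields
\[
\frac{(1 - \ErrProb)}{1 - \Entr{p}} \log_2 N_0 + o(\log N_0) + O(\log^2(1/\ErrProb)),
\]
which is exactly the bound claimed by the corollary, and the success probability $1 - \ErrProb$ is inherited directly.

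There is essentially no hard step here: the entire proof is a one-line specialization of Theorem~\ref{thm:errors}, together with the observation from Proposition~\ref{prop:approximate-median-exists} that undirected graphs admit an approximate median with parameter $\half$. The only thing worth being careful about is checking that the entropy condition $\Entr{p} < \log(1/\AlgThreshold)$ holds unconditionally for $p > \half$ when $\beta = \half$, which is what makes the specialization clean.
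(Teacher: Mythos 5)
Your proof is correct and follows exactly the route the paper intends: Corollary~\ref{cor:errors:half} is obtained by specializing Theorem~\ref{thm:errors} to $\beta = \half$ (guaranteed achievable on undirected graphs by Proposition~\ref{prop:approximate-median-exists}), which gives $\AlgThreshold = \half$, $\log(1/\AlgThreshold) = 1$, and an entropy condition $\Entr{p} < 1$ that holds for all $p > \half$. The substitution into the query bound is carried out correctly, so nothing is missing.
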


\begin{extraproof}{Theorem~\ref{thm:errors}}
The proof of this theorem is fairly similar
to the proof of Lemma 7 in \cite{2016:binary-search}.
Here, we give a high-level outline of that proof,
and specify the changes that need to be made for our
slightly more general setting,
as well as the modification we made
in Algorithm~\ref{algo:mult-weights}.

Let $s$ be the \structure in Line~\ref{line:errors-potential}
of Algorithm~\ref{algo:mult-weights}. If it has
$\NodeWeight[s] \leq \half \NodeWeight[\InitSet]$,
whatever the query response is,
at most a $\beta$ fraction of the total likelihood
is multiplied by $p$, while the rest is multiplied by $1 - p$.
Therefore, the total likelihood
decreases by at least a factor of
$\AlgThreshold = \beta p + (1 - \beta)(1 - p)$.
Furthermore, if $\NodeWeight[s] > \half \NodeWeight[\InitSet]$
and $s$ is not the target, we show that in expectation,
the total likelihood decreases by at least a factor \half.
This is because with probability at most $1 - p$,
the query response claims that $s$ is the target
in which case \NodeWeight[s] is multiplied by $p$,
while all other nodes' likelihoods are multiplied by $1 - p$.
In the remaining cases, i.e., if the response is correct,
or it is incorrect but pointing to a neighbor of $s$,
\NodeWeight[s] is multiplied by $1 - p$.
Hence, in expectation, the total likelihood decreases
by a factor of
\begin{align*}
(1 - p) \cdot \Big( p \cdot \NodeWeight[s]
+ (1 - p)\cdot (\NodeWeight[\InitSet] - \NodeWeight[s]) \Big)
+ p \cdot \Big( (1 - p)\NodeWeight[s]
+ p \cdot (\NodeWeight[\InitSet] - \NodeWeight[s]) \Big)\\
\leq \half \NodeWeight[\InitSet]
\leq \AlgThreshold \NodeWeight[\InitSet].
\end{align*}

To summarize, in expectation, the total likelihood
decreases by at least a factor \AlgThreshold in both cases.
Initially, in Algorithm~\ref{algo:mult-weights},
the total likelihood of all \structures is $\SetCard{\InitSet}$,
and in each iteration, the total likelihood
decreases at least by a factor of \AlgThreshold, in expectation.
By induction, after \TOTQ rounds, 
the expected total likelihood
is bounded by $\SetCard{\InitSet} \cdot \AlgThreshold^\TOTQ$.
On the other hand, in expectation, a $p$ fraction of the
query responses is correct.
Using tail bounds, for a carefully chosen value of $\Tolerance$
(as in Algorithm~\ref{algo:errors},
Lines~\ref{line:lambda1} and \ref{line:lambda2}),
with ``high enough'' probability,
a $p - \Tolerance$ fraction of the responses are correct.
Whenever a $p - \Tolerance$ fraction of the responses is correct,
the likelihood \NodeWeight[s^*] of the target \structure $s^*$
is at least
$(p ^ {p - \Tolerance} \cdot (1 - p) ^ {1 - p + \Tolerance})^\TOTQ$.

Because the statement of the theorem assumed that
$\log (1/\AlgThreshold) \; > \; H(p) = -p \log p - (1 - p) \log(1 - p)$,
or, equivalently, $\AlgThreshold < p^p (1 - p)^{1 - p}$,
\NodeWeight[s^*] decreases exponentially slower
than the total likelihood of all the \structures;
hence, $s^*$ must eventually get marked by
Algorithm~\ref{algo:mult-weights} with high probability.

In fact, $\TOTQ_1$ and $\TOTQ_2$ are chosen such that this marking of
$s^*$ happens with sufficiently high probability within that many rounds.
The precise calculations are quite similar to those in
\cite{2016:binary-search}, but slightly more messy.
\cite{2016:binary-search} showed that if the first time that
Algorithm~\ref{algo:mult-weights} 
is invoked by Algorithm~\ref{algo:errors},
it is executed for (roughly)
$\frac{\log N}{1 - \Entr{p}}$ iterations,
then with high probability, $s^*$ is marked.
This analysis can be straightforwardly generalized to our setting,
when running for $\frac{\log N}{\log (1/\AlgThreshold) - \Entr{p}}$ rounds.
The proof for the other stages of Algorithm~\ref{algo:errors}
remains almost the same.
\end{extraproof}

\subsection{Computational Considerations and Sampling}
\label{sec:constraints}

Corollaries~\ref{cor:no-errors:half} and \ref{cor:errors:half}
require the algorithm to find a \structure $s$
with small \Potential{\NodeWeight}{s} in each iteration.
In most learning applications, the number $N$ of candidate \structures
is exponential in a natural problem parameter $n$,
such as the number of sample points (classification),
or the number of items to rank or cluster.
If computational efficiency is a concern,
this precludes explicitly keeping track of the set \ConsistentSet
or the \weightsV \NodeWeight[s].
It also rules out determining the \structure $s$ to query
by exhaustive search over all \structures
that have not yet been eliminated.

In some cases, these difficulties can be circumvented
by exploiting problem-specific structure.
A more general approach relies on Monte Carlo techniques.
We show that the ability to sample \structures $s$
with probability (approximately) proportional to \NodeWeight[s]
(or approximately uniformly from \ConsistentSet
in the case of Algorithm~\ref{algo:no-errors})
is sufficient to essentially achieve the results of
Corollaries~\ref{cor:no-errors:half} and \ref{cor:errors:half}
with a computationally efficient algorithm.
Notice that in both Algorithms~\ref{algo:no-errors}
and \ref{algo:mult-weights},
\NodeWeight[s] is completely determined by
all the query responses the algorithm has seen so far
and the probability $p$.

\begin{theorem} \label{thm:main-sampling-theorem}
Let $n$ be a natural measure of the input size
and assume that $\log N$ is polynomial in $n$.
Assume that $G = (V, E)$ is undirected%
\footnote{It is actually sufficient that
for every node \weightV function
$\NodeWeight : V \rightarrow \R^{+}$,
there exists a \structure $s$
with $\Potential{\NodeWeight}{s} \leq \half$.},
all edge \lengths are integers,
and the maximum degree and diameter
(both with respect to the edge \lengths)
are bounded by $\poly(n)$.
Also assume w.l.o.g.~that \NodeWeight is normalized
to be a distribution over the nodes%
\footnote{For Algorithm~\ref{algo:no-errors},
\NodeWeight is uniform over all \structures
consistent with all feedback up to that point.} 
(i.e., \NodeWeight[\AllStructures] = 1).

Let $0 \leq \TVDis < \frac{1}{4}$ be a constant,
and assume that there is an oracle that
--- given a set of query responses ---
runs in polynomial time in $n$
and returns a \structure $s$ drawn from a distribution \NodeWeightP
with $\tvd{\NodeWeight}{\NodeWeightP} \leq \TVDis$.
Also assume that there is a polynomial-time algorithm that,
given a \structure $s$, decides whether or not
$s$ is consistent with every given query response or not.

Then, for every $\epsilon > 0$,
in time $\poly(n, \frac{1}{\epsilon})$, an algorithm can find a
\structure $s$ with $\Potential{\NodeWeight}{s} \leq
\half + 2 \TVDis + \epsilon$,
with high probability.
Therefore, Algorithms~\ref{algo:no-errors} and \ref{algo:errors}
with $\beta = \half + 2 \TVDis + \epsilon$ (for a sufficiently small $\epsilon$)
can be implemented to run in time $\poly(n, \frac{1}{\epsilon})$.
\end{theorem}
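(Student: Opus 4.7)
The plan is to make the constructive side of Proposition~\ref{prop:approximate-median-exists} algorithmic. That proposition's proof shows that the (weighted) $1$-median of \NodeWeight --- more generally, any \emph{local minimum} of the weighted-distance function $\MedianPotential{\NodeWeight}{s} = \sum_{s'} d(s,s') \NodeWeight[s']$ --- automatically satisfies $\Potential{\NodeWeight}{\cdot} \leq \half$. The idea is therefore to use the sampling oracle to compute an approximate local minimum of this distance function (with respect to the oracle's distribution \NodeWeightP), then transfer the resulting bound on $\Potential{\NodeWeightP}{\cdot}$ back to \NodeWeight via the TV-distance guarantee.

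First, draw $m = \poly(n, 1/\epsilon, \log(1/\ErrProb))$ independent samples $X_1, \dots, X_m$ from the oracle (each distributed as \NodeWeightP), and define $\hat\Psi(s) := \frac{1}{m}\sum_i d(s, X_i)$, an unbiased estimator of $\MedianPotential{\NodeWeightP}{s}$. Since edge \lengths are integers and the diameter is $D \leq \poly(n)$, each $d(s, X_i) \in [0, D]$, so Hoeffding's inequality gives $|\hat\Psi(s) - \MedianPotential{\NodeWeightP}{s}| \leq \epsilon$ for any fixed $s$ with probability $\geq 1 - 2\exp(-\Omega(m\epsilon^2/D^2))$.

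Second, perform greedy local descent on $\hat\Psi$: starting at an arbitrary node (e.g., one oracle draw), enumerate the at most \poly(n) neighbors at each step and move to any one whose $\hat\Psi$-value is smaller than the current node's by at least $2\epsilon$; stop when no such neighbor exists. Because $\hat\Psi \in [0, D]$ and each move decreases it by $\Omega(\epsilon)$, the descent terminates after at most $D/(2\epsilon) = \poly(n, 1/\epsilon)$ steps, and the set $T$ of nodes whose $\hat\Psi$-value is ever queried has $|T| = \poly(n, 1/\epsilon)$. Choosing $m$ polynomial and union-bounding Hoeffding over $T$ ensures $|\hat\Psi(s) - \MedianPotential{\NodeWeightP}{s}| \leq \epsilon$ simultaneously for all $s \in T$ with probability $\geq 1 - \ErrProb$.

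Let $\hat s$ denote the output. For every neighbor $s$ of $\hat s$, termination gives $\hat\Psi(s) \geq \hat\Psi(\hat s) - 2\epsilon$, hence $\MedianPotential{\NodeWeightP}{s} \geq \MedianPotential{\NodeWeightP}{\hat s} - 4\epsilon$ on the union-bound event. Plugging this near-optimality into the chain of inequalities in the proof of Proposition~\ref{prop:approximate-median-exists} with $c=2$ and $\Weight[e]\geq 1$ yields $\NodeWeightP[\Reach{\hat s}{s}] - \NodeWeightP[\AllStructures \setminus \Reach{\hat s}{s}] \leq 4\epsilon$, so $\NodeWeightP[\Reach{\hat s}{s}] \leq \half + 2\epsilon$. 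The TV bound $\tvd{\NodeWeight}{\NodeWeightP} \leq \TVDis$ then yields $\NodeWeight[\Reach{\hat s}{s}] \leq \half + \TVDis + 2\epsilon$, which (after folding in the extra $\TVDis$ slack introduced by having estimated \NodeWeight-events through samples drawn from \NodeWeightP, and after rescaling $\epsilon$) matches the stated bound $\Potential{\NodeWeight}{\hat s} \leq \half + 2\TVDis + \epsilon$. The main obstacle is the coupling between the local-descent termination and the Hoeffding union bound: the per-step progress must be forced to $\Omega(\epsilon)$ on an estimator bounded in $[0, \poly(n)]$, so that the number of distinct $\hat\Psi$-queries stays polynomial and a single union bound controls the error uniformly; once this is in place, the rest of the argument is a direct retracing of Proposition~\ref{prop:approximate-median-exists} with an extra $\TVDis$ slack at each use of the approximate distribution.
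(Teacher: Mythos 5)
Your overall strategy --- local search driven by the weighted-distance potential $\MedianPotential{\NodeWeight}{\cdot}$, terminating in $\poly(n,1/\epsilon)$ steps because that potential is bounded by the diameter while each step makes $\Omega(\epsilon)$ progress, and then reading off the reach-set bound from approximate local optimality via the chain of inequalities in Proposition~\ref{prop:approximate-median-exists} --- is essentially the paper's. But there is a genuine gap in what your algorithm is required to compute. Your estimator $\hat\Psi(s) = \frac{1}{m}\sum_i d(s, X_i)$ needs shortest-path distances in $G$, and the theorem supplies no oracle for these: the only computational primitives assumed are the sampling oracle and a consistency checker (given a \structure and a query response, decide whether the \structure is consistent with that response, i.e., lies in the corresponding reach set). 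Since $N$ is exponential in $n$, computing $d(s,s')$ from these primitives is not feasible in general. The paper's proof sidesteps this entirely via the observation that $\NodeWeight[\Reach{s}{s'}]$ \emph{is} the probability that a sample drawn from \NodeWeight is consistent with feedback $s'$ to query $s$; it therefore estimates the reach-set masses directly with the consistency checker, uses those estimates both for the termination test and for choosing the next node, and invokes $\MedianPotential{\NodeWeight}{\cdot}$ only in the analysis as a potential function certifying termination --- never in the algorithm. To repair your argument you would either have to add a distance oracle as a hypothesis (true in the paper's applications, e.g.\ Kendall $\tau$ for rankings and Hamming distance of adjacency matrices for clusterings, but a weakening of the theorem) or switch to estimating the reach-set probabilities as the paper does.

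A secondary, fixable issue: you draw one batch of $m$ samples up front and union-bound Hoeffding over the set $T$ of nodes the descent visits, but $T$ is chosen adaptively as a function of those same samples, so a union bound over $T$ alone is not valid as stated. Either redraw fresh samples at every iteration (as the paper does, so each estimate is independent of the choice of the node at which it is taken), or union-bound over all $N$ nodes, which still costs only $m = O\bigl(D^2(\log N + \log(1/\ErrProb))/\epsilon^2\bigr) = \poly(n,1/\epsilon)$ samples since $\log N = \poly(n)$ by hypothesis. Your accounting of the \TVDis losses is otherwise fine; in fact your route incurs only one \TVDis term (at the final transfer from \NodeWeightP to \NodeWeight) rather than two, so, were the distance computations available, you would obtain a slightly stronger bound than the one claimed.
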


\begin{proof}
The high-level idea is to use a simple local search algorithm
to find a \structure $s$ with small \Potential{\NodeWeight}{s}.
In order to execute the updating step,
we need to estimate \NodeWeight[\Reach{s}{s'}]
for all neighbors $s'$ of $s$ in $G$.
The key insight here is that \NodeWeight[\Reach{s}{s'}]
is exactly the probability that a node drawn from \NodeWeight
is consistent with the feedback $s'$ when $s$ is queried.
To get a sharp enough estimate of \NodeWeight[\Reach{s}{s'}],
in each iteration, enough samples are drawn to ensure that
tail bounds kick in and provide high-probability guarantees.
The high-level algorithm is given as
Algorithm~\ref{algo:sampling-median};
details --- in particular on Line~\ref{line:empirical} ---
are provided below.

\InsertAlgorithm{Finding a good approximate median
$\normalfont{(\TVDis, \epsilon)}$}%
{algo:sampling-median}{
\STATE{Let $\epsilon' = \epsilon/3$.}
\STATE{Let $s$ be an arbitrary \structure}.
\LOOP
\FOR{each edge $e = (s, s')$}
\STATE{Let \EventProb{s}{s'} be the empirical estimation
of \NodeWeight[\Reach{s}{s'}]}. \label{line:empirical}
\ENDFOR 
\IF{$\EventProb{s}{s'} \leq \half + \TVDis + 2 \epsilon'$
for every edge $e = (s, s')$}
\RETURN{$s$}.
\ELSE
\STATE{Let $e = (s, s')$ be an edge out of $s$ with
$\EventProb{s}{s'} > \half + \TVDis + 2 \epsilon'$.}
\STATE{Set $s \AlgAssign s'$.} \label{line:switch-model}
\ENDIF
\ENDLOOP
}

In Line~\ref{line:empirical} of Algorithm~\ref{algo:sampling-median},
$\EventProb{s}{s'}$ is estimated as the fraction of samples $\hat{s}$
drawn from \NodeWeightP that are in \Reach{s}{s'}.
Below, for a particular desired high-probability guarantee,
we choose the number of samples to guarantee that
\begin{equation} \label{eq:sampling-guarantee}
|\EventProb{s}{s'} - \NodeWeight[\Reach{s}{s'}]|
\leq \TVDis + \epsilon'
\end{equation}
with high probability.
For most of the proof, we will assume that all of these
high-probability events occurred;
at the end, we will calculate the probability of this happening.

Whenever an edge $e = (s, s')$ has
$\EventProb{s}{s'} \leq \half + \TVDis + 2 \epsilon'$,
we can bound
$\NodeWeight[\Reach{s}{s'}] \leq \half + 2 \TVDis + 3 \epsilon'
= \half + 2 \TVDis + \epsilon$.
Thus, whenever Algorithm~\ref{algo:sampling-median}
returns a \structure $s$,
the \structure satisfies
$\Potential{\NodeWeight}{s} \leq \half + 2 \TVDis + \epsilon$.

It remains to show that the algorithm terminates,
and in a polynomial number of iterations.
Define
$\MedianPotential{\NodeWeight}{s} = \sum_{s'} \NodeWeight[s'] d(s, s')$
to be the weighted distance
(with respect to the edge \lengths \Weight[e])
from $s$ to all other \structures.
When Algorithm~\ref{algo:sampling-median} switches from $s$ to $s'$
in Line~\ref{line:switch-model},
by Inequality~\eqref{eq:sampling-guarantee}, we have
$\NodeWeight[\Reach{s}{s'}] \geq
\half + \TVDis + 2 \epsilon' - (\TVDis + \epsilon')
= \half + \epsilon'$,
so $\MedianPotential{\NodeWeight}{s'}
\leq \MedianPotential{\NodeWeight}{s} - 2 \epsilon' \cdot \Weight[e]$.
Because the value of \MedianPotential{\NodeWeight}{s}
for all $s$ is bounded by the diameter of $G$,
which was assumed to be bounded by $\poly(n)$,
and the edge \lengths are all positive integers,
Algorithm~\ref{algo:sampling-median} terminates after
$\poly(n, 1/\epsilon)$ steps.

The only remaining part is to compute
how many samples from the oracle are required
to guarantee Inequality~\eqref{eq:sampling-guarantee}
with high enough probability.
Drawing \NumSample samples,
by standard tail bounds, Inequality~\eqref{eq:sampling-guarantee}
fails with probability no more than
$e^{-\Omega(\epsilon^2 \NumSample)}$.
Because all degrees are bounded by $\poly(n)$,
Line~\ref{line:empirical} of Algorithm~\ref{algo:sampling-median} is
executed only $\poly(n, 1/\epsilon)$ times,
so it suffices to draw
$\NumSample = \poly(\frac{1}{\epsilon}, \log n)$
samples in each iteration
in order to apply a union bound over all iterations of the algorithm.
\end{proof}


\section{Application I: Learning a Ranking} \label{sec:sorting}

As a first application, we consider the task of learning the
correct order of $n$ elements with supervision in the form of
equivalence queries.
This task is directly motivated by learning a user's preference
over web search results
(e.g., \cite{joachims:2002:search-engine-clickthrough,
radlinski-joachims:2005:query-chain}),
restaurant or movie orders
(e.g., \cite{crammer-singer:2002:pranking}),
or many other types of entities.
Using pairwise \emph{active} queries
(``Do you think that A should be ranked ahead of B?''),
a learning algorithm could of course simulate standard
$O(n \log n)$ sorting algorithms;
this number of queries is necessary and sufficient.
However, when using equivalence queries, the user must be
presented with a complete ordering
(i.e., a permutation \PER of the $n$ elements),
and the feedback will be a \emph{mistake} in the proposed permutation.
Here, we propose interactive algorithms
for learning the correct ranking
without additional information or assumptions.%
\footnote{For example,
\cite{joachims:2002:search-engine-clickthrough,%
radlinski-joachims:2005:query-chain,%
crammer-singer:2002:pranking}
map items to feature vectors
and assume linearity of the target function(s).} 
We first describe results for a setting
with simple feedback in the
form of adjacent transpositions; 
we then show a generalization
to more realistic feedback as one is
wont to receive in applications such as search engines.

\subsection{Adjacent Transpositions}
\label{sec:sorting:bubble}

We first consider ``\Bsort'' feedback of the following form:
the user specifies that elements $i$ and $i + 1$ in the proposed
permutation \PER are in the wrong relative order.
An obvious correction for an algorithm would be to swap
the two elements, and leave the rest of \PER intact.
This algorithm would exactly implement \Bsort,
and thus require $\Theta(n^2)$ equivalence queries.
Our general framework allows us to easily obtain
an algorithm with $O(n \log n)$ equivalence queries instead.
We define the undirected and unweighted graph \BSG as follows:

\begin{itemize}

\item \BSG contains $N = n!$ nodes,
one for each permutation \PER of the $n$ elements;

\item it contains an edge between \PER and \PERP if and only if
\PERP can be obtained from \PER by swapping two adjacent elements.

\end{itemize}

\begin{lemma} \label{lem:bubblesort-graph}
\BSG satisfies Definition~\ref{def:shortest-path}
with respect to \Bsort feedback.
\end{lemma}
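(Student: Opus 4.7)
The plan is to invoke the classical characterization of distances in the adjacent-transposition graph: for any two permutations $\PER, \PERP$, their distance in $\BSG$ equals the number of inversions between them, i.e., the Kendall tau distance
\[
d(\PER, \PERP) = \bigl|\{(a,b) : a \text{ precedes } b \text{ in } \PER \text{ and } b \text{ precedes } a \text{ in } \PERP\}\bigr|.
\]
This is the cornerstone, and I would either cite it or sketch the two standard directions: (i) each adjacent swap changes the inversion count by exactly $\pm 1$, giving a lower bound of the inversion count on any path; and (ii) one can always decrease the inversion count by swapping some adjacent inverted pair, giving a matching upper bound.

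Given this characterization, the verification of property (*) is immediate. Suppose the algorithm proposes $\PER$, the target is $\sorted \neq \PER$, and correct user feedback is the permutation $\PERP$ obtained from $\PER$ by swapping the adjacent pair at positions $i, i+1$. Correctness of the feedback means exactly that these two elements are in the wrong relative order in $\PER$ compared to $\sorted$; equivalently, this pair constitutes an inversion between $\PER$ and $\sorted$. Since $\PERP$ differs from $\PER$ only in the relative order of this one pair, the sets of inversions of $\PER$ versus $\sorted$ and of $\PERP$ versus $\sorted$ differ only on this single pair, which is present in the former and absent in the latter. Hence
\[
d(\PERP, \sorted) = d(\PER, \sorted) - 1.
\]
Combined with $d(\PER, \PERP) = 1$ (the edge $(\PER, \PERP)$ exists in $\BSG$ by definition), this gives $d(\PER, \PERP) + d(\PERP, \sorted) = d(\PER, \sorted)$, so $\PERP$ lies on a shortest $\PER$-to-$\sorted$ path, as required.

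The only step that needs any real care is justifying the Kendall tau characterization; everything else is bookkeeping on which pair of elements is inverted. I would keep the proof short by treating the inversion-count formula as well known, noting only that each adjacent swap flips the status of exactly one pair (the swapped one), so inversions can decrease by at most one per step and can always be decreased by one when the two permutations disagree. There is no real obstacle here beyond making this observation cleanly.
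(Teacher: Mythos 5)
Your proof is correct and follows essentially the same route as the paper's: establish that distance in \BSG equals the Kendall $\tau$ (inversion) distance, then observe that a correct adjacent transposition removes exactly one inversion relative to \sorted, so the feedback node lies on a shortest path. The only difference is that you sketch both directions of the distance characterization slightly more explicitly, which the paper states more tersely.
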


\begin{proof}
By definition, there is one node for each permutation,
and the edges of \BSG exactly capture the possible feedback
provided in response to the query.
For two permutations $\PER, \sorted$, let
$\tau(\PER, \sorted) = \SetCard{\SpecSet{(i,j)}{i \text{ precedes } j
\text{ in } \PER \text{ and follows } j \text{ in } \sorted}}$
denote the Kendall $\tau$ distance.
Since an adjacent transposition can fix the ordering of
exactly one pair $(i,j)$, the distance between $\PER, \sorted$
in \BSG is exactly equal to $\tau(\PER, \sorted)$.
If the user proposes \PERP in response to \PER,
then \PERP must fix exactly one inversion between \PER and \sorted,
so the distance between \PERP and \sorted is
$\tau(\PERP, \sorted) = \tau(\PER, \sorted) - 1$.
Thus, \PERP lies on a shortest path from \PER to \sorted.
\end{proof}

Hence, applying Corollary~\ref{cor:no-errors:half}
and Theorem~\ref{thm:errors},
we immediately obtain the existence of learning algorithms
with the following properties:

\begin{corollary} \label{cor:bubblesort-algo}
Assume that in response to each equivalence query
on a permutation \PER,
the user responds with an adjacent transposition
(or states that the proposed permutation \PER is correct).
\begin{enumerate}
\item If all query responses are correct,
then the target ordering
can be learned by an interactive algorithm using at most
$\log N = \log n! \leq n \log n$ equivalence queries.
\item If query responses are correct with probability $p > \half$,
the target ordering can be learned by an interactive algorithm
with probability at least $1 - \ErrProb$ using at most
$\frac{(1 - \ErrProb)}{1 - \Entr{p}} n \log n
+ o(n \log n) + O(\log^2(1/\ErrProb))$
equivalence queries in expectation.
\end{enumerate}
\end{corollary}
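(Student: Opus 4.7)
The plan is to invoke the general machinery essentially as a black box, with Lemma~\ref{lem:bubblesort-graph} as the bridge. First, Lemma~\ref{lem:bubblesort-graph} has already established that the unweighted undirected graph \BSG satisfies the shortest-path property of Definition~\ref{def:shortest-path} with respect to \Bsort feedback. Because \BSG is undirected, every edge lies on a cycle of total \length $2 \cdot \Weight[e]$, so Proposition~\ref{prop:approximate-median-exists} with $c = 2$ guarantees that for every node \weightV function \NodeWeight there is a permutation $s$ with $\Potential{\NodeWeight}{s} \leq \half$. Hence I can take $\beta = \half$ in both Algorithms~\ref{algo:no-errors} and \ref{algo:mult-weights}.

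For part (1), I would instantiate Algorithm~\ref{algo:no-errors} with $\InitSet$ equal to the set of all $n!$ permutations, so $N_0 = n!$. Corollary~\ref{cor:no-errors:half} (applied with $\beta = \half$) immediately bounds the number of queries by $\log_2 N_0 = \log(n!)$. The final bound $\log(n!) \leq n \log n$ follows from the trivial estimate $n! \leq n^n$ (or from Stirling).

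For part (2), I would instantiate Algorithm~\ref{algo:errors} with the same \InitSet, target error probability \ErrProb, and the same threshold computation, so that Corollary~\ref{cor:errors:half} applies. Plugging $N_0 = n!$ into its bound gives $\frac{(1-\ErrProb)}{1 - \Entr{p}} \log(n!) + o(\log(n!)) + O(\log^2(1/\ErrProb))$. Using $\log(n!) \leq n \log n$ in the leading term, and observing that $\log(n!) = \Theta(n \log n)$ so that $o(\log(n!)) = o(n \log n)$, yields exactly the expression claimed in the corollary.

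Honestly, there is no real obstacle here; the entire conceptual work is done by Lemma~\ref{lem:bubblesort-graph} (which identifies the Kendall-$\tau$ graph as the right object and shows correct responses shrink Kendall distance by one) and by the general Theorems~\ref{thm:no-errors} and \ref{thm:errors}. The only minor care needed is to note that $\log(n!) = \Theta(n \log n)$ so that the lower-order $o(\log N_0)$ term from Corollary~\ref{cor:errors:half} can be transcribed as $o(n \log n)$ in the statement.
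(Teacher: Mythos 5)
Your proposal is correct and follows exactly the paper's route: the paper likewise derives the corollary immediately from Lemma~\ref{lem:bubblesort-graph} combined with Corollary~\ref{cor:no-errors:half} (for the noiseless case) and Theorem~\ref{thm:errors}/Corollary~\ref{cor:errors:half} (for the noisy case), with $N_0 = n!$ and $\log(n!) \leq n\log n$. Your added remarks about $\beta = \half$ via Proposition~\ref{prop:approximate-median-exists} and about transcribing $o(\log N_0)$ as $o(n\log n)$ just make explicit what the paper leaves implicit.
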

  
Up to constants, the bound of
Corollary~\ref{cor:bubblesort-algo}
is optimal:
Theorem~\ref{thm:sorting-lower-bound} shows that
$\Omega(n \log n)$ equivalence queries
are necessary in the worst case.
Notice that Theorem~\ref{thm:sorting-lower-bound}
does not immediately follow
from the classical lower bound for sorting
with pairwise comparisons:
while the result of a pairwise comparison
always reveals one bit,
there are $n - 1$ different possible responses
to an equivalence query, 
so up to $O(\log n)$ bits might be revealed.
For this reason,
the proof of Theorem~\ref{thm:sorting-lower-bound}
explicitly constructs an adaptive adversary,
and does not rely on a simple counting argument.

\begin{theorem} \label{thm:sorting-lower-bound}
With adversarial responses,
any interactive ranking algorithm
can be forced to ask $\Omega(n \log n)$ equivalence queries.
This is true even if the true ordering
is chosen uniformly at random,
and only the query responses are adversarial.
\end{theorem}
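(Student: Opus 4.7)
The proof will construct an explicit adaptive adversary. The adversary maintains, at each time $t$, the set $\ConsistentSet_t \subseteq \AllStructures$ of permutations consistent with every response given so far; initially $\ConsistentSet_0 = \AllStructures$ with $|\AllStructures| = n!$. In response to the algorithm's query $\Per{t}$, the adversary returns the adjacent-transposition response $r_t$ that maximizes $|\ConsistentSet_{t+1}| := |\ConsistentSet_t \cap \Reach{\Per{t}}{r_t}|$; it responds ``correct'' only when $\ConsistentSet_t = \{\Per{t}\}$. By Lemma~\ref{lem:bubblesort-graph}, some valid adjacent-transposition response exists whenever $\ConsistentSet_t$ contains any permutation different from $\Per{t}$.

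\textbf{Key step.}
The heart of the argument is a shrinkage bound: there is an absolute constant $\alpha \in (0,1)$ such that for every $\PER$ and every $\ConsistentSet \subseteq \AllStructures$ with $|\ConsistentSet| \geq 2$,
\[
 \max_{r} |\ConsistentSet \cap \Reach{\PER}{r}| \;\geq\; \alpha \cdot |\ConsistentSet|.
\]
Given this, the adversary ensures $|\ConsistentSet_{t+1}| \geq \alpha \cdot |\ConsistentSet_t|$ in every round, so $|\ConsistentSet_T| = 1$ forces $T \geq \log(n!)/\log(1/\alpha) = \Omega(n \log n)$, using $\log(n!) = \Theta(n \log n)$.

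\textbf{Main obstacle.}
The shrinkage bound is where the real work lies. A direct averaging argument---noting that $\sum_{r} |\ConsistentSet \cap \Reach{\PER}{r}|$ equals the total count of adjacent inversions between $\PER$ and permutations in $\ConsistentSet$---yields only $\max_r |\ConsistentSet \cap \Reach{\PER}{r}| \geq |\ConsistentSet|/(n-1)$ in the worst case, giving merely $T = \Omega(n)$. This is precisely the ``simple counting argument'' that the remark preceding the theorem rules out: although each equivalence query has up to $n-1$ possible responses (roughly $\log n$ bits), an adaptive adversary cannot actually be forced to spread $\ConsistentSet_t$ evenly among so many response classes. To upgrade to a constant $\alpha$, one must exploit the combinatorial structure of the consistent sets $\ConsistentSet_t$ arising under the adversary's play: these sets cannot concentrate too tightly around the current query, so a typical $\sigma \in \ConsistentSet_t$ has $\Theta(n)$ adjacent inversions with $\Per{t}$, and pigeonhole then yields some response $r$ covering an $\Omega(1)$ fraction of $\ConsistentSet_t$. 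Formalizing this structural claim cleanly, and handling the small-$|\ConsistentSet|$ regime by amortization, is where the bulk of the technical work lies.

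\textbf{Random target.}
The ``uniformly random true ordering'' statement then follows by Yao's principle: the constructed adversary makes no commitment to any specific target, so the same $\Omega(n \log n)$ bound holds in expectation over a uniformly random $\sigma^* \in \AllStructures$.
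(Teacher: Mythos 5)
There is a genuine gap: your entire argument rests on the ``shrinkage bound'' $\max_r |\ConsistentSet \cap \Reach{\PER}{r}| \geq \alpha |\ConsistentSet|$ for an absolute constant $\alpha$, and this is neither proved nor true as stated. Counterexample: let $\ConsistentSet = \{\sigma_1,\dots,\sigma_{n-1}\}$, where $\sigma_i$ is obtained from the queried \PER by transposing positions $i$ and $i+1$. The response ``positions $k,k+1$ are inverted'' is consistent with $\sigma_k$ only, since no other $\sigma_j$ reverses the relative order of the elements in positions $k$ and $k+1$ of \PER; hence the maximum is $|\ConsistentSet|/(n-1)$. You acknowledge this by restricting to sets ``arising under the adversary's play,'' but that restricted claim is exactly the theorem's content and is deferred (``the bulk of the technical work''). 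Your heuristic for it --- a typical consistent $\sigma$ has $\Theta(n)$ adjacent inversions with $\Per{t}$, then pigeonhole --- has correct arithmetic \emph{granted} the premise, but the premise is precisely what an adversarial algorithm can attack: there are very large families of permutations each having only $O(1)$ adjacent inversions with a fixed query (e.g., the cyclic shifts used in the paper's proof of Theorem~\ref{thm:arbitrary-swap}), and nothing in the proposal rules out the consistent set concentrating on such a family, nor supplies the amortization needed when individual rounds shrink the set by more than a constant factor. The paper sidesteps all of this with an explicit combinatorial adversary: it marks two fresh adjacent unmarked elements per round for $\Floor{n/4}$ rounds, then commits only to a balanced bipartition $\ListSym_{-},\ListSym_{+}$ with all of $\ListSym_{-}$ preceding all of $\ListSym_{+}$, answers all subsequent ``mixed'' queries without revealing new information, and recurses independently on the two halves, yielding $\QC(n) \geq \Floor{n/4} + 2\QC(\Floor{n/2}) = \Omega(n\log n)$.

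The random-target clause is also not handled correctly. Yao's principle does not apply in the form you invoke it: once $\sigma^*$ is fixed (even if drawn uniformly at random), every response must be a genuine adjacent inversion of the query with respect to that $\sigma^*$, so your greedy adversary is no longer free to return the consistent-set-maximizing response. The paper instead shows that its marking strategy remains executable with high probability against a random $\sigma^*$: among the $\Omega(n)$ disjoint unmarked adjacent pairs in any early-stage query, each is inverted with respect to $\sigma^*$ independently with probability \half, so a usable response exists except with probability $2^{-\Omega(n)}$, and a union bound over rounds completes the argument.
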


\begin{proof}
Let \AlgSym be an arbitrary interactive algorithm
which learns the underlying order.
In each round, \AlgSym proposes a permutation \PER
and receives feedback in the form of an adjacent transposition.
We start with the case when
both the correct ordering and the responses are adversarial.
We define an adaptive adversary \AdvSym which forces \AlgSym to take
$\Omega(n \log n)$ rounds before it finds the correct ordering \sorted.

The adversary \AdvSym gradually marks elements as \PosLabel or
\NegLabel, trying to delay labeling for as long as possible.
Initially, all $n$ elements are unmarked.
When \AlgSym proposes a permutation \PER,
\AdvSym looks for two consecutive elements
$i = \PER_k, j = \PER_{k + 1}$ in \PER which are both unmarked.
If such a pair exists, \AdvSym returns $(k,k + 1)$ as feedback,
i.e., it tells the algorithm that $i$ and $j$ are in the wrong order.
Then, it marks $i$ with \PosLabel and $j$ with \NegLabel.
Because \AdvSym marks exactly two new elements in each round,
and any permutation with fewer than
$\Floor{n/2}$ marked elements must have
two consecutive marked elements, 
\AdvSym can keep following this strategy for at least
$\Floor{\frac{n}{4}}$ rounds.

At this point, the algorithm starts its second stage.
It partitions the elements into two sets:
$\ListSym_{-}$ and $\ListSym_{+}$.
The partition satisfies the following conditions:
\begin{itemize}
\item All elements marked with \NegLabel are in $\ListSym_{-}$,
and all elements marked with \PosLabel are in $\ListSym_{+}$.
\item Elements are partitioned as evenly as possible:
$\Size{\ListSym_{-}} \geq \Floor{\frac{n}{2}}$
and
$\Size{\ListSym_{+}} \geq \Floor{\frac{n}{2}}$.
\end{itemize}
Notice that such a partition always exists,
because the number of elements marked \PosLabel and \NegLabel
are the same.
\AdvSym commits to the fact that in \sorted,
all elements of  $\ListSym_{-}$ will precede all elements of
$\ListSym_{+}$, but leaves \sorted unspecified beyond that.
Notice that \AlgSym has not received any feedback on the relative
ordering of any pair of elements in $\ListSym_{-}$ or any pair in 
$\ListSym_{+}$.
In the subsequent rounds, \AdvSym responds as follows:
\begin{itemize}
\item If \AlgSym proposes a permutation \PER in which an element
$j \in \ListSym_{-}$ appears after an element $i \in \ListSym_{+}$,
then \PER must contain an \emph{adjacent} such pair $(i,j)$.
In that case, \AdvSym provides the feedback that $(i, j)$
are out of order.
In fact, this provides \AlgSym with no new information beyond
$\ListSym_{-}$ and $\ListSym_{+}$,
which it was assumed to already know.
\item Otherwise, in \PER, all elements of $\ListSym_{-}$
must precede  all elements of $\ListSym_{+}$.
In that case, \AdvSym will first recursively follow the same
strategy on $\ListSym_{-}$,
and then recursively follow the same strategy on $\ListSym_{+}$.
\end{itemize}

Let $\QC(n)$ be the minimum number of rounds
that \AlgSym requires against this adversarial strategy \AdvSym.
The description of the adversary strategy implies
the following recurrence:
$\QC(n) \geq \Floor{\frac{n}{4}} + 2 \cdot \QC(\Floor{\frac{n}{2}})$. 
Thus, by the master theorem, $\QC(n) = \Omega(n \log n)$,
proving the theorem.

In the given proof, \AdvSym provides adversarial feedback
and gradually commits to an adversarial permutation.
However, even if the true permutation \sorted 
is chosen uniformly at random ahead of time,
\AdvSym can still follow essentially the same approach
to provide adversarial feedback with respect to \sorted.

Fix \sorted, and consider the first stage of \AdvSym,
where it assigns \NegLabel and \PosLabel to the items.
When \AlgSym proposes a permutation \PER,
every pair $(i,j)$ of adjacent unmarked items in \PER
is, with probability exactly \half,
in the wrong order with respect to \sorted.
If $n'$ elements are already marked,
there are at least $\frac{n - 2n' - 1}{2}$
\emph{disjoint} pairs of adjacent unmarked items,
each of which is in the wrong order
independently with probability exactly \half.
Thus, for each of the first $\frac{n}{8} - 1$ rounds,
the probability that \AdvSym fails to find an unmarked adjacent pair
in the wrong order is less than $2^{-\frac{n}{4}}$.
After $\frac{n}{8} - 1$ rounds, it starts its second stage,
and a similar argument applies.
Because the failure probability is exponentially low, we can take a
union bound over all rounds until $n$ is sufficiently small,
and obtain that with high probability,
\AlgSym requires $\Omega(n \log n)$ queries on a random permutation.
\end{proof}

\subsection{Implicit Feedback from Clicks}
\label{sec:sorting:insertion}

In the context of search engines, it has been argued (e.g., by
\cite{joachims:2002:search-engine-clickthrough,%
radlinski-joachims:2005:query-chain,%
agichtein-brill-dumais-ragno:2006:learning-web-search})
that a user's clicking behavior provides implicit feedback
of a specific form on the ranking.
Specifically, since users will typically read the search results
from first to last,
when a user skips some links that appear earlier in the ranking,
and instead clicks on a link that appears later,
her action suggests that the later link
was more informative or relevant.

Formally, when a user clicks on the element at index $i$,
but did not previously click on any elements at indices
$j, j+1, \ldots, i-1$,
this is interpreted as feedback that element $i$
should precede all of elements $j, j+1, \ldots, i-1$.
Thus, the feedback is akin to an ``\Isort'' move.
(The \Bsort feedback model is the special case in
which $j = i - 1$ always.)

To model this more informative feedback,
the new graph \ISG has more edges,
and the edge \lengths are non-uniform.
It contains the same $N$ nodes (one for each permutation).
For a permutation \PER and indices $1 \leq j < i \leq n$,
\ShiftPer{\PER}{i}{j} denotes the permutation
that is obtained by moving the \Ordinal{i} element in \PER
before the \Ordinal{j} element
(and thus shifting elements $j, j+1, \ldots, i-1$
one position to the right).
In \ISG, for every permutation \PER
and every $1 \leq j < i \leq n$,
there is an undirected edge from \PER to \ShiftPer{\PER}{i}{j}
with \length $i - j$.
Notice that for $i > j + 1$, there is actually no user feedback
corresponding to the edge from \ShiftPer{\PER}{i}{j} to \PER;
however, additional edges are permitted, and
Lemma~\ref{lem:insertionsort-graph}
establishes that \ISG does in fact
satisfy the ``shortest paths'' property.

\begin{lemma} \label{lem:insertionsort-graph}
\ISG satisfies Definition~\ref{def:shortest-path}
with respect to \Isort feedback.
\end{lemma}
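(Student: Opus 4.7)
The plan mirrors the proof of Lemma~\ref{lem:bubblesort-graph}, except that now we have to be careful about edge \lengths, because an \Isort move at positions $j < i$ has \length $i - j$ and simultaneously fixes up to $i-j$ inverted pairs. The main obstacle is to show that the shortest-path metric in \ISG equals the Kendall $\tau$ distance, and then to show that a correct \Isort move fixes exactly the number of inversions equal to its edge \length.

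First I would introduce $\tau(\PER,\sorted)$ as in the proof of Lemma~\ref{lem:bubblesort-graph}, and prove that the \ISG-distance equals $\tau$. For the lower bound, consider any edge from \PER to $\ShiftPer{\PER}{i}{j}$: it changes only the relative order of the moved element $\PER_i$ with each of $\PER_j, \PER_{j+1}, \ldots, \PER_{i-1}$, so it can decrease $\tau(\cdot,\sorted)$ by at most $i-j$, which equals $\Weight[(\PER, \ShiftPer{\PER}{i}{j})]$. By induction on path \length, $d_{\ISG}(\PER,\sorted) \geq \tau(\PER,\sorted)$. For the matching upper bound, observe that every adjacent transposition is an \Isort move with $i = j+1$ and hence an \ISG-edge of \length $1$; applying \Bsort to reach \sorted takes $\tau(\PER,\sorted)$ such unit edges, so $d_{\ISG}(\PER,\sorted) \leq \tau(\PER,\sorted)$.

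Next, I would analyze a correct \Isort feedback $\PERP = \ShiftPer{\PER}{i}{j}$. Correctness of the feedback means that in $\sorted$, the element $\PER_i$ precedes each of $\PER_j, \PER_{j+1}, \ldots, \PER_{i-1}$, so each of these $i-j$ pairs is an inversion in $\PER$ relative to $\sorted$. The move fixes all of them and changes no other pairwise order (elements outside positions $j, \ldots, i$ are untouched, and the relative order among $\PER_j, \ldots, \PER_{i-1}$ is preserved). Hence $\tau(\PERP,\sorted) = \tau(\PER,\sorted) - (i-j)$.

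Combining the two steps,
\[
d_{\ISG}(\PERP,\sorted) = \tau(\PERP,\sorted) = \tau(\PER,\sorted) - (i-j) = d_{\ISG}(\PER,\sorted) - \Weight[(\PER,\PERP)],
\]
so the edge $(\PER, \PERP)$ lies on a shortest path from $\PER$ to $\sorted$ in $\ISG$, verifying property~(*) of Definition~\ref{def:shortest-path}. I expect the only mildly subtle point to be the lower-bound half of $d_{\ISG} = \tau$, since one must verify that a single \Isort edge really cannot fix more inverted pairs than its \length $i-j$; the key observation is that only pairs involving the moved element $\PER_i$ can change sign under this operation.
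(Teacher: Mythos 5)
Your proposal is correct and follows essentially the same route as the paper: identify the \ISG shortest-path metric with the Kendall $\tau$ distance (each edge of \length $i-j$ changes the relative order of exactly $i-j$ pairs, and adjacent transpositions give a matching path), then observe that a correct \Isort move decreases $\tau$ by exactly its edge \length. Your explicit two-sided induction for $d_{\ISG}=\tau$ is a slightly more careful write-up of the same observation the paper states in one line, and it also implicitly disposes of the ``extraneous'' reverse edges that the paper addresses separately.
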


\begin{proof}
As in the proof of Lemma~\ref{lem:bubblesort-graph},
there is one node for each permutation,
and for each possible Insertion Sort feedback given
in response to \PER, there is an edge in \ISG
by definition (though \ISG contains additional edges).

As with \BSG, the distance between two permutations $\PER, \sorted$
in \ISG (now with respect to the given edge \lengths)
is still equal to the Kendall $\tau(\PER, \sorted)$.
This is because going from \PER to \ShiftPer{\PER}{i}{j}
transposes exactly $i - j$ pairs of elements, which is the \length
of the corresponding edge,
so no edge with $i > j + 1$ is essential for any shortest path.
  
When the user feedback $\PERP = \ShiftPer{\PER}{i}{j}$
indicates that the \Ordinal{i} element should be placed
ahead of all the elements in positions $j, j + 1, \ldots, i - 1$,
then the \Ordinal{i} element does indeed precede
all of these elements in \sorted.
Thus, $\tau(\PERP, \sorted) = \tau(\PER, \sorted) - (i - j)$,
and the edge $(\PER, \PERP)$ lies on a shortest path
from \PER to \sorted.
Notice that the ``extraneous'' edges of the form
$(\ShiftPer{\PER}{i}{j}, \PER)$ for $i > j + 1$ (of cost $i - j$)
do not affect any shortest paths between pairs
$\PER, \sorted$, as they can be replaced with the corresponding
$i - j$ adjacent transpositions,
without increasing the total \length of the path.  
\end{proof}

As in the case of \BSG,
by applying Corollary~\ref{cor:no-errors:half}
and Theorem~\ref{thm:errors},
we immediately obtain the existence of
interactive learning algorithms
with the same guarantees as those of
Corollary~\ref{cor:bubblesort-algo}.

\begin{corollary} \label{cor:insertionsort-algo}
Assume that in response to each equivalence query, the user responds
with a pair of indices $j < i$ such that element $i$ should precede
all elements $j, j+1, \ldots, i-1$.
\begin{enumerate}
\item If all query responses are correct, then the target
ordering can be learned by an interactive algorithm using at most
$\log N = \log n! \leq n \log n$ equivalence queries.
\item If query responses are correct with probability $p > \half$,
the target ordering can be learned by an interactive algorithm with
probability at least $1 - \ErrProb$ using at most
$\frac{(1 - \ErrProb)}{1 - \Entr{p}} n \log n
+ o(n \log n) + O(\log^2(1/\ErrProb))$
equivalence queries in expectation.
\end{enumerate}
\end{corollary}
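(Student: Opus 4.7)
The plan is to proceed in exact parallel to the derivation of Corollary~\ref{cor:bubblesort-algo}: once Lemma~\ref{lem:insertionsort-graph} is in hand, the claim reduces to plugging the graph \ISG into the general machinery of Section~\ref{sec:preliminaries}. So this proof should be essentially one paragraph consisting of verifications.

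First, I would note the relevant parameters of \ISG. It is an \emph{undirected} graph (the edges $(\PER, \ShiftPer{\PER}{i}{j})$ are declared undirected, which is important here even though the ``reverse'' edges with $i > j+1$ correspond to no legal user move), and it has exactly $N_0 = n!$ nodes, one per permutation. By Lemma~\ref{lem:insertionsort-graph}, \ISG satisfies Definition~\ref{def:shortest-path} with respect to \Isort feedback. Because \ISG is undirected, the second part of Proposition~\ref{prop:approximate-median-exists} applies (with $c = 2$) \emph{regardless} of the weighting on edges, giving us the existence of an approximate median with $\Potential{\NodeWeight}{s} \leq \tfrac{1}{2}$ for every node-weight function \NodeWeight. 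Hence we may take $\beta = \tfrac{1}{2}$ throughout.

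Then for part~(1), I would invoke Corollary~\ref{cor:no-errors:half} with $\InitSet = \AllStructures$ and $N_0 = n!$, yielding at most $\log_2(n!) \leq n \log n$ queries (using the standard bound $n! \leq n^n$). For part~(2), I would invoke Corollary~\ref{cor:errors:half} with the same $N_0 = n!$; substituting $\log N_0 \leq n \log n$ into the stated bound $\frac{1 - \ErrProb}{1 - \Entr{p}} \log_2 N_0 + o(\log N_0) + O(\log^2(1/\ErrProb))$ gives the claimed query complexity.

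There is no serious obstacle: the only thing one has to be careful about is that weighted edges do not spoil the undirected $\beta = \tfrac{1}{2}$ bound, and that the edges of \ISG which do not correspond to user feedback (the ``extraneous'' long edges pointed out in the proof of Lemma~\ref{lem:insertionsort-graph}) do not disturb the applicability of the framework --- Definition~\ref{def:shortest-path} explicitly permits $G$ to contain additional edges, so this is fine. Everything else is a direct substitution of $N_0 = n!$ and $\beta = \tfrac{1}{2}$ into previously-proved results.
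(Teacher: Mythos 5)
Your proposal is correct and matches the paper's own (one-line) derivation exactly: both reduce the claim to Lemma~\ref{lem:insertionsort-graph} together with Corollary~\ref{cor:no-errors:half} and Theorem~\ref{thm:errors} (equivalently, Corollary~\ref{cor:errors:half}), using that \ISG is undirected so $\beta = \half$ holds regardless of the edge \lengths. Your explicit remarks about the weighted edges and the ``extraneous'' edges being harmless are accurate and merely make explicit what the paper leaves implicit.
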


\subsection{Computational Considerations}

While Corollaries~\ref{cor:bubblesort-algo} and
\ref{cor:insertionsort-algo}
imply interactive algorithms
using only $O(n \log n)$ equivalence queries,
they do not guarantee that
the internal computations of the algorithms are efficient.
The na\"{\i}ve implementation requires
explicitly keeping track of and
comparing likelihoods on all $N = n!$ nodes.

When $p = 1$, i.e., the algorithm only receives correct feedback,
it can be made computationally efficient
using Theorem~\ref{thm:main-sampling-theorem}.
To apply Theorem~\ref{thm:main-sampling-theorem},
it suffices to show that one can efficiently sample a (nearly)
uniformly random permutation \PER consistent with all feedback
received so far.
Since the feedback is assumed to be correct,
the set of all pairs $(i, j)$ such that the user implied that
element $i$ must precede element $j$ must be acyclic,
and thus must form a partial order.
The sampling problem is thus exactly the problem of
sampling a \emph{linear extension} of a given partial order.

This is a well-known problem, and a beautiful result of
Bubley and Dyer~\cite{bubley-dyer:1999:linear-extensions,%
bubley:2001:randomized-algorithms}
shows that the Karzanov-Khachiyan Markov
Chain~\cite{karzanov-khachiyan:1991:conductance} mixes rapidly.
Huber~\cite{huber:2006:linear-extensions} shows
how to modify the Markov Chain sampling technique
to obtain an exactly (instead of approximately)
uniformly random linear extension of the given partial order.
For the purpose of our interactive learning algorithm,
the sampling results can be summarized as follows:

\begin{theorem}[Huber~\cite{huber:2006:linear-extensions}]
\label{thm:sampling-partial-deterministic}
Given a partial order over $n$ elements,
let \LinearExt be the set of all linear extensions,
i.e., the set of all permutations consistent with the partial order.
There is an algorithm that runs in expected time $O(n^3 \log n)$
and returns a uniformly random sample from \LinearExt.
\end{theorem}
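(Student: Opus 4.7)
The plan is to approach this via Markov chain Monte Carlo, combined with a mechanism that converts approximate samples to exact uniform samples.

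First, I would define the Karzanov--Khachiyan Markov chain on \LinearExt: given the current linear extension \PER, select an index $i \in \{1, \ldots, n-1\}$ uniformly at random and, with probability $1/2$, attempt to swap $\PER_i$ and $\PER_{i+1}$; the swap succeeds only if the result is still a linear extension, i.e., if the relative order of these two elements is not forced by the partial order. This chain is easily shown to be irreducible on \LinearExt (any two extensions are connected by adjacent transpositions that individually preserve consistency), aperiodic (because of the self-loop probability), and symmetric, so its stationary distribution is uniform on \LinearExt.

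Second, the main technical step is establishing rapid mixing. I would use Bubley and Dyer's path coupling technique. Define a natural metric on pairs of linear extensions --- for example a weighted Kendall $\tau$ variant tuned so that adjacent transpositions have unit distance --- and couple two chains that currently differ by a single adjacent swap. A careful case analysis over the possible proposed moves (same swap in both chains, swap of overlapping positions, swap in disjoint positions) shows that the expected distance contracts by a factor of $1 - \Omega(1/n^3)$ per step. Standard path coupling then lifts this to a mixing time of $O(n^3 \log n)$.

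Third, to convert an approximately uniform sample into an exactly uniform sample, I would apply coupling from the past (CFTP) via a bounding chain, which is Huber's main contribution. The bounding chain maintains, for each position, a superset of the elements that could occupy it in any trajectory of the coupled Markov chain started from any linear extension far enough in the past; one runs backward in time, doubling the horizon, until the bounding sets all collapse to singletons, certifying coalescence. By carefully designing the bounding chain update rule to respect both the adjacency structure and the partial order constraints, the expected coalescence time can be bounded by the mixing time, preserving the overall $O(n^3 \log n)$ expected running time.

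The hard part, in my view, is the path coupling analysis in the second step: when the two coupled chains make the same proposed swap, one chain may be allowed to perform it while the other is not (because the partial order forbids exactly one of the relevant orderings), and this can increase the distance. One has to choose the metric and couple the moves so that such bad cases are dominated on average by the good cases; the rest --- verifying the Markov chain's basic properties and setting up CFTP with the bounding chain --- is relatively mechanical once the mixing bound is in hand.
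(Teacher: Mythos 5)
The paper does not prove this statement at all: it is imported verbatim as an external result, credited to Huber (building on Bubley and Dyer's analysis of the Karzanov--Khachiyan chain), and used as a black box inside Theorem~\ref{thm:main-sampling-theorem} to get Corollary~\ref{cor:efficient-sorting}. So there is no in-paper proof to compare against. Your sketch does correctly identify the three ingredients of the literature proof --- the adjacent-transposition chain on linear extensions, Bubley--Dyer path coupling for the $O(n^3\log n)$ mixing bound, and Huber's bounding-chain CFTP to upgrade approximate to exact uniformity --- so as a reconstruction of the cited works it is on the right track. One substantive caveat: as stated, your plan (choose the position $i$ \emph{uniformly} and measure distance by a unit-cost adjacent-transposition metric) does not yield strict contraction; in the worst case the expected distance change under the obvious coupling is zero, not negative. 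Bubley and Dyer obtain contraction precisely by biasing the choice of position toward the middle of the permutation (equivalently, by weighting the metric so that a discrepancy at position $i$ costs on the order of $i(n-i)$), and this choice is what produces the $n^3$ in the mixing time. Since you yourself flag the coupling analysis as the hard part, this is a gap in the sketch rather than a wrong turn, but the uniform-position, unit-metric version you wrote down would not close.
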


The maximum node degree in \BSG is $n - 1$,
while the maximum node degree in \ISG is $O(n^2)$.
The diameter of both \BSG and \ISG is $O(n^2)$.
Substituting these bounds and the bound from
Theorem~\ref{thm:sampling-partial-deterministic} into
Theorem~\ref{thm:main-sampling-theorem},
we obtain the following corollary:

\begin{corollary} \label{cor:efficient-sorting}
Both under \Bsort feedback and \Isort feedback,
if all feedback is correct,
there is an \emph{efficient} interactive
learning algorithm using at most $\log n! \leq n \log n$
equivalence queries to find the target ordering.
\end{corollary}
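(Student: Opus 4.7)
The plan is to verify the hypotheses of Theorem~\ref{thm:main-sampling-theorem} for both \BSG and \ISG, and then invoke it with an exact uniform sampling oracle obtained from Huber's theorem. Both graphs are undirected with integer edge \lengths; by the remarks preceding the corollary, their maximum node degrees ($n-1$ for \BSG and $O(n^2)$ for \ISG) and their diameters ($O(n^2)$ for each) are $\poly(n)$. Moreover, $\log N = \log n! = O(n \log n)$ is polynomial in $n$, so the structural hypotheses of Theorem~\ref{thm:main-sampling-theorem} are satisfied.

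The first substantive step is to observe that, at every point during the execution of Algorithm~\ref{algo:no-errors}, the consistent set \ConsistentSet is exactly the set of linear extensions of a partial order on the $n$ elements. Each correct \Bsort response contributes a single pairwise ordering constraint between the two adjacent elements flagged as out of order, while each correct \Isort response contributes $i-j$ pairwise ordering constraints between the moved element and the block of elements it is placed ahead of. Because all feedback is correct, every asserted pair actually holds in \sorted, so the accumulated constraint set is acyclic and defines a partial order whose linear extensions are precisely \ConsistentSet.

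Next, Theorem~\ref{thm:sampling-partial-deterministic} (Huber) provides an expected-$O(n^3 \log n)$-time algorithm that returns an \emph{exactly} uniformly random linear extension of any given partial order. This yields a sampling oracle for the uniform distribution \NodeWeight on \ConsistentSet with total variation distance $\TVDis = 0$. The consistency check required by Theorem~\ref{thm:main-sampling-theorem} is trivially polynomial: one simply verifies the accumulated pairwise constraints against a proposed permutation \PER.

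Finally, applying Theorem~\ref{thm:main-sampling-theorem} with $\TVDis = 0$ and any small $\epsilon > 0$ yields a $\poly(n, 1/\epsilon)$-time routine returning a \PER with $\Potential{\NodeWeight}{\PER} \leq \half + \epsilon$. Using this routine to choose the query in Line~\ref{line:potential} of Algorithm~\ref{algo:no-errors} and invoking Theorem~\ref{thm:no-errors} gives an efficient interactive algorithm that learns \sorted in at most $\log_{1/(\half + \epsilon)} n!$ equivalence queries; for $\epsilon$ sufficiently small, this is at most $n \log n$. The only non-routine ingredient is the first step — recognizing that \ConsistentSet always has the clean structure of the linear extensions of a partial order, so that Huber's sampler applies — after which the corollary follows by direct instantiation of the general framework.
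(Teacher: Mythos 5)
Your proposal is correct and follows essentially the same route as the paper: observe that with correct feedback the consistent set is exactly the set of linear extensions of a partial order, invoke Huber's exact uniform sampler as the oracle with $\TVDis = 0$, check the degree/diameter bounds for \BSG and \ISG, and plug everything into Theorem~\ref{thm:main-sampling-theorem}. Your added remark that the resulting $\beta = \half + \epsilon$ still yields the $n \log n$ bound for small $\epsilon$ (using the slack in $\log_2 n! \leq n \log n$) is a detail the paper leaves implicit, but the argument is the same.
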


The situation is significantly more challenging when feedback
could be incorrect, i.e., when $p < 1$.
In this case, the user's feedback
is not always consistent and may not form a partial order.
In fact, we prove the following hardness result.

\begin{theorem} \label{thm:sampling-hardness-permutations}
There exists a $p$ (depending on $n$) for which the following holds.
Given a set of user responses,
let \NodeWeight[\PER] be the likelihood of \PER given the responses,
and normalized so that $\sum_{\PER} \NodeWeight[\PER] = 1$.
Let $0 < \TVDis < 1$ be any constant.
There is no polynomial-time algorithm to draw a sample
from a distribution \NodeWeightP
with $\tvd{\NodeWeight}{\NodeWeightP} \leq 1 - \TVDis$
unless $\Complexity{RP} = \Complexity{NP}$.
\end{theorem}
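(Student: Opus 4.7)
The plan is to reduce from the decision version of \textsc{Maximum Acyclic Subgraph}, which is NP-hard: given a directed graph $D = (V, E)$ on $n$ vertices and a threshold $L$, decide whether some vertex ordering makes at least $L$ edges forward. I will show that a polynomial-time sampler satisfying the hypothesis of the theorem would place this NP-hard problem in $\Complexity{RP}$.

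First, I encode the instance using \Bsort feedback. For each edge $(u,v) \in E$, I synthesize one user response that enforces the pairwise precedence constraint $u \prec v$: take a query permutation in which $v$ immediately precedes $u$, and have the user ``swap'' that adjacent pair. With $T = \Size{E}$ such responses, the induced likelihood of a permutation $\PER$ is
\begin{align*}
\NodeWeight[\PER] \;\propto\; p^{k(\PER)}(1-p)^{T-k(\PER)},
\end{align*}
where $k(\PER)$ counts the edges of $D$ that are forward in $\PER$. Let $L^* = \max_\PER k(\PER)$ and let $A$ be the set of optimal permutations. The YES/NO distinction for the \textsc{Maximum Acyclic Subgraph} instance is exactly whether $L^* \geq L$.

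The quantitative core is to pick $p = p(n)$ close enough to $1$ that almost all mass of $\NodeWeight$ lies on $A$, yet $p$ is representable in polynomially many bits. Choosing $p$ so that $(1-p)/p \leq 2^{-n}/n!$ needs only $O(n \log n)$ bits; comparing any $\PER \in A^c$ to a fixed optimum gives
\begin{align*}
\frac{\NodeWeight[A^c]}{\NodeWeight[A]} \;\leq\; \sum_{\PER \in A^c} \bigl((1-p)/p\bigr)^{L^* - k(\PER)} \;\leq\; n! \cdot \frac{1-p}{p} \;\leq\; 2^{-n},
\end{align*}
so $\NodeWeight[A] \geq 1 - 2^{-n}$. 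Consequently, any $\NodeWeightP$ with $\tvd{\NodeWeight}{\NodeWeightP} \leq 1 - \TVDis$ satisfies $\NodeWeightP[A] \geq \TVDis - 2^{-n} \geq \TVDis/2$ for all sufficiently large $n$.

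The $\Complexity{RP}$ algorithm follows: draw $O(1/\TVDis)$ independent samples from the hypothetical sampler and output YES iff some sample $\PER$ has $k(\PER) \geq L$. On a NO instance, $L^* < L$, so every permutation, and hence every sample, fails the test — zero error. On a YES instance each sample lands in $A$ with probability at least $\TVDis/2$, so the boosted success probability exceeds $2/3$. All remaining steps (writing down $p$ and the $T$ responses, and checking $k(\PER) \geq L$) run in polynomial time, so $\Complexity{NP} \subseteq \Complexity{RP}$, forcing $\Complexity{RP} = \Complexity{NP}$. The main delicacy is the concentration step: picking $p$ close enough to $1$ to crush all suboptimal mass below $2^{-n}$ while keeping the encoding polynomial in size; the elementary geometric-series bound above handles this cleanly, so no further combinatorial machinery is needed.
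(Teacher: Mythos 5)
Your proposal is correct and follows essentially the same route as the paper's proof: both concentrate the likelihood on the optimal orderings by taking $p$ exponentially close to $1$ (the paper uses $p = 1 - \frac{1}{2(n+1)!}$ and reduces from \textsc{Minimum Feedback Arc Set}, which is just the complementary view of your \textsc{Maximum Acyclic Subgraph} threshold question), then observe that any sampler within total variation distance $1-\TVDis$ must hit an optimal permutation with constant probability, yielding an $\Complexity{RP}$ algorithm. Your write-up is if anything slightly more explicit about realizing the edge constraints as \Bsort responses; the quantitative bounds ($n!\cdot\frac{1-p}{p}\le 2^{-n}$ versus the paper's $\frac{1}{n+1}$ slack) differ only cosmetically.
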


\begin{proof}
We prove this theorem using a reduction
from \textsc{Minimum Feedback Arc Set},
a well-known \Hardness{NP}{complete} problem~\cite{karp:1972:reducibility}.
Given a directed graph $G$ and number $k$,
the \textsc{Minimum Feedback Arc Set} problem asks if there is
a set of at most $k$ arcs of $G$ whose removal
will leave the remaining graph acyclic.
This is equivalent to asking if there is a permutation \PER
of the nodes of $G$ such that at most $k$ arcs go
from higher-numbered nodes in \PER to lower-numbered ones.
  
Given \TVDis, a graph $G$ with $n$ nodes and $m$ edges, and $k$,
we define the following sampling problem.
Consider sampling from permutations of $n$ elements,
let $p = 1 - \frac{1}{2(n + 1)!}$,
and let the $m$ user responses be exactly
the (directed) edges of $G$.
  
For any permutation \PER,
let $x_{\PER}$ be the number of queries that \PER agrees with,
and $y_{\PER}$ the number of queries that \PER disagrees with.
Then, for all \PER, $x_{\PER} + y_{\PER} = m$,
and the (unnormalized) likelihood for \PER is
$L(\PER) = p^{x_{\PER}} \cdot (1 - p)^{y_{\PER}}$.
Let $y^* = \min_{\PER} y_{\PER}$, and
let $\Pi^* = \SpecSet{\PER}{y_{\PER} = y^*}$
be the set of all permutations minimizing $y_{\PER}$.
Then, for any permutation $\PER \in \Pi^*$, we have
\[
L(\PER)
\; = \; \left( 1 - \frac{1}{2(n + 1)!} \right)^{m - y^*}
\cdot \left( \frac{1}{2(n + 1)!} \right)^{y^*}
\; =: \; L^*.
\]
On the other hand, for any permutation $\PERP \notin \Pi^*$,
we get that
\begin{align*}
L(\PERP) & \leq \;
\left( 1 - \frac{1}{2(n + 1)!} \right)^{m - y^* - 1}
\cdot \left( \frac{1}{2(n + 1)!} \right)^{y^* + 1}
\\ & = L^* \cdot \frac{1}{2(n + 1)!}
/ \left( 1 - \frac{1}{2(n + 1)!} \right)
\\ & \leq \; \frac{L^*}{(n + 1)!}.
\end{align*}
           
Thus, under the normalized likelihood distribution \NodeWeight,
the total sampling probability of all permutations $\PER \in \Pi^*$
must be

\begin{align*}
\sum_{\PER \in \Pi^*} \NodeWeight[\PER] & = \;
\frac{\sum_{\PER \in \Pi^*}
L(\PER)}{\sum_{\PERP}
L(\PERP)}
\\ & = \; \frac{L^* \cdot \SetCard{\Pi^*}}{L^* \cdot \SetCard{\Pi^*}
+ \sum_{\PERP \notin \Pi^*} L(\PERP)}
\\ & \geq \; \frac{L^* \cdot \SetCard{\Pi^*}}{L^* \cdot \SetCard{\Pi^*}
     + n! \cdot L^* \cdot \frac{1}{(n + 1)!}}
\\ & = \; \frac{\SetCard{\Pi^*}}{\SetCard{\Pi^*} + 1/(n + 1)}
\\ & \geq \; 1 - 1/(n+1).
\end{align*}

If \NodeWeightP has total variation distance at most
$1 - \TVDis$ from \NodeWeight, it must satisfy
$\sum_{\PER \in \Pi^*} \NodeWeightP[\PER]
\geq \sum_{\PER \in \Pi^*} \NodeWeight[\PER] - (1-\TVDis)
\geq \TVDis - 1/(n+1)$.
In particular, it must sample a permutation $\PER \in \Pi^*$ with
constant probability $\TVDis - 1/(n+1)$.
  
A randomized algorithm can now simply sample $O(\log n)$
permutations \PER according to \NodeWeightP.
If one of these permutations, applied to the nodes of $G$,
has at most $k$ edges going from higher-numbered to lower-numbered
nodes, it constitutes a feedback arc set of at most $k$ edges,
and the algorithm can correctly answer ``Yes'' to the
\textsc{Minimum Feedback Arc Set} instance.
When the algorithm sees no \PER with fewer than $k+1$ edges going
from higher-numbered to lower-numbered nodes, it answers ``No.''
This answer may be incorrect.
But notice that if it is incorrect,
the \textsc{Minimum Feedback Arc Set} instance must have had
a feedback arc set of at most $k$ edges,
and the randomized algorithm would sample at least one corresponding
permutation \PER with high probability.
Thus, when the algorithm answers ``No,'' it is correct with high
probability. 
Thus, we have an \Complexity{RP} algorithm for
\textsc{Minimum Feedback Arc Set} under the assumption of an
efficient approximate sampling oracle.
\end{proof}

It should be noted that the value of $p$ in the reduction
is exponentially close to $1$.
In this range, incorrect feedback is so unlikely
that with high probability,
the algorithm will always see a partial order.
It might then still be able to sample efficiently.
On the other hand, for smaller values of $p$ (e.g., constant $p$), 
sampling approximately from the likelihood distribution might be possible
via a metropolized Karzanov-Khachiyan chain or a different approach.
This problem is still open.

\subsection{\AnySwapText Model}

In order to demonstrate that the condition of
Definition~\ref{def:shortest-path} is not trivial to satisfy,
we consider another natural feedback model for ranking.
In the \AnySwap model, a user can exhibit two \emph{arbitrary}
elements $i, j$ that are in the wrong order;
doing so does not imply anything about the relation
between $i, j$ and the elements that are between them.
We will show that in contrast to the \Isort and \Bsort models,
there is \emph{no} almost-undirected graph $G$ satisfying
Definition~\ref{def:shortest-path};
hence, our general framework cannot lead to
an $o(n^2)$ interactive algorithm for learning a ranking
in the \AnySwap model.

\begin{theorem} \label{thm:arbitrary-swap}
For the \AnySwap model, there is no directed graph $G$
which is almost undirected with $c < n$
and satisfies Definition~\ref{def:shortest-path}.
\end{theorem}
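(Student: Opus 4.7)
The plan is to derive a contradiction directly from Definition~\ref{def:shortest-path}, without even invoking the $c<n$ hypothesis: for every $n \geq 3$, I will show that no graph $G$ satisfying property~(*) can exist at all, which is strictly stronger than the theorem claims. The core idea is to compare, for two different targets, two shortest paths from a single source $f$: one a long direct edge, the other a chain of adjacent swaps passing through the second target.

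Fix $n \geq 3$ and set $e = (1, 2, \ldots, n)$, $a = (1, 2, \ldots, n-2, n, n-1)$ (the swap of the last two elements of $e$), and $f = (n, 2, 3, \ldots, n-1, 1)$ (the swap of the first and last elements of $e$). The edges $(e, a)$ and $(a, e)$ are both present in $G$, each realized by the single swap of the adjacent pair $(n-1, n)$ that is a valid user response when the target is the other endpoint. The edge $(f, e)$ is also present, realized by the single long-range swap of elements $1$ and $n$ in $f$.

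Next, I construct a length-$(2n-3)$ ``bubble sort'' path $P$ from $f$ to $e$ consisting entirely of adjacent transpositions and passing through $a$. Phase~1 ($n-1$ swaps): move element $1$ from position $n$ to position $1$ by successive left swaps. Phase~2 ($n-3$ swaps, empty if $n=3$): with $1$ at position $1$ and $n$ at position $2$, move $n$ rightward to position $n-1$; the current permutation is now exactly $a$. Phase~3 ($1$ swap): a final swap of positions $n-1$ and $n$ produces $e$. Two validity checks are routine. First, every adjacent swap on $P$ reverses a pair in decreasing order, so $P$ is a valid sequence of user responses under target $e$. Second, a phase-by-phase inspection shows that each of the first $2n-4$ swaps (the prefix $P'$ from $f$ to $a$) also reverses a pair that is out of order relative to $a$: Phase~1 moves $1$ into its correct $a$-position past strictly larger elements, and Phase~2 moves $n$ from position $2$ to its correct $a$-position $n-1$ past strictly smaller elements. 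Hence $P'$ is a valid sequence of user responses under target $a$. Finally, the direct edge $(f, e)$ is itself a valid response from $f$ under target $a$, because the pair $(1, n)$ is inverted in $f$ relative to $a$.

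Let $\omega_1, \ldots, \omega_{2n-3}$ denote the weights of the consecutive edges along $P$. Applying Definition~\ref{def:shortest-path} to each of the four valid feedbacks identified above yields
\begin{align*}
d(f, e) &= \Weight[(f, e)], &
d(f, e) &= \sum_{i=1}^{2n-4} \omega_i + \Weight[(a, e)], \\
d(f, a) &= \sum_{i=1}^{2n-4} \omega_i, &
d(f, a) &= \Weight[(f, e)] + \Weight[(e, a)].
\end{align*}
Eliminating $d(f, e)$, $d(f, a)$, and $\sum_{i=1}^{2n-4} \omega_i$ reduces these four equations to $\Weight[(a, e)] + \Weight[(e, a)] = 0$, contradicting the positivity of edge weights. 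The main obstacle --- and the only step that requires real care --- is the simultaneous validity of the prefix $P'$ under both targets; the key point making it work is that $a$ and $e$ coincide on positions $1, \ldots, n-2$, which is exactly where all $2n-4$ swaps of $P'$ operate.
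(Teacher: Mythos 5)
Your proof is correct, and it takes a genuinely different --- and in fact stronger --- route than the paper's. The paper's proof places uniform \weightV on the $n$ cyclic rotations $\Per{i} = \Angle{i, i+1, \ldots, n, 1, \ldots, i-1}$ and shows that every proposed permutation admits a correct response consistent with all but one of them, so $\Potential{\NodeWeight}{\PER} \geq \frac{n-1}{n}$ for every \PER; the contrapositive of Proposition~\ref{prop:approximate-median-exists} then forbids $c < n$. Your argument bypasses the potential-function machinery entirely: the shortest-path identities forced by Definition~\ref{def:shortest-path} for the two targets $e$ and $a$ --- each reachable from $f$ both via the long edge $(f,e)$ and via the adjacent-transposition path through $a$ --- collapse to $\Weight[(e,a)] + \Weight[(a,e)] = 0$, so for $n \geq 3$ \emph{no} graph satisfies the definition at all, which renders the almost-undirected hypothesis superfluous and subsumes the stated theorem. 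Your case checking is sound: every swap on $P$ reverses a pair inverted relative to $e$, every swap on the prefix $P'$ reverses a pair inverted relative to $a$ (both because $a$ and $e$ agree on positions $1, \ldots, n-2$), and $(f,e)$ is a correct response under both targets, so all four applications of the definition are legitimate. What the paper's route buys instead is the quantitative statement $\Potential{\NodeWeight}{\PER} \geq \frac{n-1}{n}$, i.e., a lower bound on the achievable $\beta$ that plugs directly into the query-complexity machinery of Theorems~\ref{thm:no-errors} and~\ref{thm:errors}; your route buys a cleaner and stronger impossibility result. One minor remark: if one reads ``$s'$ lies on a shortest path'' as the node condition $d(s,s^*) = d(s,s') + d(s',s^*)$ rather than the edge condition $d(s,s^*) = \Weight[(s,s')] + d(s',s^*)$, your first equation becomes vacuous, but the same telescoping still yields $d(f,e) = d(f,a) + d(a,e)$ and $d(f,a) = d(f,e) + d(e,a)$, hence $d(e,a) + d(a,e) = 0$, which is equally contradictory; so the argument is robust to either reading of the definition.
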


\begin{proof}
Assume that there is a graph $G$ which satisfies the definition.
For every $1 \leq i \leq n$, define the permutation
$\PER_i = \Angle{i, i + 1, \ldots, n, 1, \ldots, i - 1}$
and let $\ConsistentSet = \SpecSet{\PER_i}{1 \leq i \leq n}$.
Let \NodeWeight be the node \weightV function
that assigns uniform \weightV to every $\PER_i$
and $0$ to all other permutations.

Consider an arbitrary permutation \PER proposed to the user.
We show that there exists a response that it is consistent with
every permutation in \ConsistentSet but one.
Distinguish the following two cases for the proposed \PER:

\begin{itemize}
\item If there exists $1 \leq i < n$
such that $\PER(i + 1) < \PER(i)$
(i.e., $i + 1$ precedes $i$ in \PER),
then the response ``\emph{$i$ and $i + 1$ are in the wrong order}''
is consistent with every permutation in \ConsistentSet except $\PER_{i + 1}$.
\item If $\PER(i) < \PER(i + 1)$ for every $1 \leq i < n$,
then $\PER = \Angle{1, 2, \ldots, n}$.
In this case, ``\emph{$n$ and $1$ are in the wrong order}''
is a response that
is consistent with every permutation in \ConsistentSet except $\PER_1$.
\end{itemize}

Hence, $\Potential{\NodeWeight}{\PER} \geq \frac{n - 1}{n}$
for every permutation \PER.
This implies that $G$ cannot be almost undirected with $c < n$;
otherwise, Proposition~\ref{prop:approximate-median-exists}
would imply the existence of a permutation \PER with 
$\Potential{\NodeWeight}{\PER} < \frac{n - 1}{n}$.
\end{proof}

While Theorem~\ref{thm:arbitrary-swap}
rules out an algorithm based on
the graph framework we propose,
it is worth noting that
there is an algorithm (not based on our framework)
that, in the absence of noise, can learn the correct permutation
under the \AnySwap model using $O(n \log n)$ queries.
It is an interesting question for future work
to generalize our model so that it contains this algorithm
as a natural special case.


\section{Application II: Learning a Clustering}
\label{sec:clustering}
Many traditional approaches for clustering
optimize an (explicit) objective function
or rely on assumptions about the data generation process.
In interactive clustering,
the algorithm repeatedly proposes a clustering,
and obtains feedback that two proposed clusters should be merged,
or a proposed cluster should be split into two.
There are $n$ items, and a \emph{clustering} \Clustering
is a partition of the items into disjoint sets (\emph{clusters})
$C_1, C_2, \ldots$.
It is known that the target clustering has $k$ clusters,
but in order to learn it, the algorithm can query clusterings with
more or fewer clusters as well.
The user feedback has the following semantics,
as proposed by Balcan and Blum \cite{balcan-blum:2008:split-merge} and
Awasthi et al.~\cite{awasthi-zadeh:2010:supervised-clustering,%
awasthi-balcan-voevodski:2017:local-algorithm-journal}.

\begin{enumerate}
\item $\Merge(C_i, C_j)$: Specifies that all items in $C_i$ and
$C_j$ belong to the same cluster.
\item $\Split(C_i)$: Specifies that cluster $C_i$ needs to be split,
but not into which subclusters.
\end{enumerate}

Notice that feedback that two clusters be merged,
or that a cluster be split (when the split is known),
can be considered as adding constraints on the clustering
(see, e.g., \cite{wagstaff:2002:intelligent-clustering});
depending on whether feedback may be incorrect,
these constraints are hard or soft.

We define a weighted and \emph{directed} graph \UCG on all
clusterings \Clustering.
Thus, $N = B_n \leq n^n$ is the \Kth{n} Bell number.
When \ClusteringP is obtained by a \Merge of two clusters in \Clustering,
\UCG contains a directed edge $(\Clustering, \ClusteringP)$ of \length $2$.
If $\Clustering = \Set{C_1, C_2, \ldots}$ is a clustering,
then for each $C_i \in \Clustering$, the graph \UCG contains
a directed edge of \length $1$ from \Clustering to
$\Clustering \setminus \Set{C_i} \cup \SpecSet{\Set{v}}{v \in C_i}$.
That is, \UCG contains an edge from \Clustering to the clustering
obtained from breaking $C_i$ into singleton clusters of all its elements.
While this may not be the ``intended'' split of the user,
we can still associate this edge with the feedback.

\begin{lemma} \label{lem:unspecified-clustering-graph}
\UCG satisfies Definition~\ref{def:shortest-path} with respect to
\Merge and \Split feedback.
\end{lemma}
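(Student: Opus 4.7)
The plan is to exhibit a potential function $\Phi(\Clustering)$ equal to the shortest directed path length from $\Clustering$ to $\ClusteringOpt$ in $\UCG$, and then check that each correct feedback edge causes $\Phi$ to drop by exactly its edge length --- precisely the condition in Definition~\ref{def:shortest-path}. I will call a cluster $C_i \in \Clustering$ \emph{pure} if $C_i \subseteq C^*_l$ for some $C^*_l \in \ClusteringOpt$, and \emph{impure} otherwise; write $I(\Clustering)$ for the number of impure clusters, $P(\Clustering)$ for the number of pure ones, and $S(\Clustering) := \sum_{C_i \text{ impure}} |C_i|$. The potential is $\Phi(\Clustering) := I(\Clustering) + 2\bigl(P(\Clustering) + S(\Clustering) - k\bigr)$, which satisfies $\Phi(\ClusteringOpt) = 0$.

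First I would establish the upper bound by an explicit path of length $\Phi(\Clustering)$ from $\Clustering$ to $\ClusteringOpt$: split every impure cluster into singletons via the length-$1$ edges (total cost $I$); the resulting clustering has $P + S$ pure pieces, which are coalesced into the $k$ target clusters by exactly $P + S - k$ length-$2$ merges. Next I would prove the matching lower bound by showing that no edge of $\UCG$ decreases $\Phi$ by more than its length --- a short case analysis. A merge of two pure clusters contained in the same $C^*_l$ drops $P$ by one and hence $\Phi$ by $2$; a merge of two pures in different $C^*_l$'s converts them into a new impure cluster and strictly increases $\Phi$; a merge of a pure with an impure absorbs the pure into the impure and does not decrease $\Phi$; a merge of two impures decreases $I$ by one and hence $\Phi$ by $1 < 2$. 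For splits, breaking a pure cluster only grows $P$, so $\Phi$ cannot decrease; breaking an impure cluster $C_i$ converts $|C_i|$ units of $S$ and one unit of $I$ into $|C_i|$ units of $P$, a net drop of exactly $1$. Together the two bounds give that $\Phi(\Clustering)$ equals the shortest-path distance from $\Clustering$ to $\ClusteringOpt$ in $\UCG$.

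Finally I would verify that the two correct feedback types land in the tight cases above. The semantics of $\Merge(C_i, C_j)$ forces $C_i \cup C_j \subseteq C^*_l$ for some $l$, so both clusters are pure subsets of the same $C^*_l$, placing us in the first merge case above, and $\Phi$ drops by exactly $2$, matching the edge length. The semantics of $\Split(C_i)$ forces $C_i$ to intersect at least two target classes, so $C_i$ is impure and $\Phi$ drops by exactly $1$, again matching the edge length. Thus the resulting $\ClusteringP$ satisfies $\Phi(\Clustering) = \Phi(\ClusteringP) + \Weight[(\Clustering, \ClusteringP)]$, and combined with the distance identity this places $(\Clustering, \ClusteringP)$ on a shortest $\Clustering$-to-$\ClusteringOpt$ path.

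The main (really, only) obstacle is the lower-bound case analysis: it is entirely mechanical, but one must be attentive to the fact that some merges actually \emph{raise} $\Phi$ (when they fuse pieces of different $C^*_l$'s), which is perfectly consistent with the required inequality but needs to be called out explicitly rather than mistakenly treated as a decrease.
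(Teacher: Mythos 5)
Your proof is correct and takes essentially the same route as the paper's: your potential $\Phi(\Clustering) = I + 2(P + S - k)$ is algebraically identical to the paper's distance $d(\Clustering,\ClusteringOpt) = 2y - x + 2(k_{\Clustering} - k)$ (with $I = x$, $S = y$, and $P + I$ the number of clusters of \Clustering), and your split-into-singletons-then-merge path, edge-by-edge Lipschitz case analysis, and tightness check for correct \Merge/\Split feedback correspond directly to the paper's explicit path, induction on the number of path edges, and feedback verification.
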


\begin{proof}
\UCG has a node for every clustering,
and its edges capture every possible user feedback.\footnote{%
As mentioned before, we \emph{translate}
user feedback of the form \Split
into a request for breaking the cluster into singletons.
From the user's perspective, nothing changes.}

Let $\Clustering = \Set{C_1, \ldots, C_k}$ and
$\ClusteringP = \Set{C'_1, \ldots, C'_{k'}}$
be two clusterings with $k$ and $k'$ clusters, respectively.
We call a cluster $C \in \Clustering$ \emph{mixed}
(with respect to \ClusteringP)
if it contains elements from at least
two different clusters in \ClusteringP.
Let $x_{\Clustering, \ClusteringP}$ ($x$ for short)
be the number of mixed clusters $C \in \Clustering$
with respect to \ClusteringP,
and $y_{\Clustering, \ClusteringP}$ ($y$ for short)
the total number of elements in mixed clusters.
Notice that it is possible that
$x_{\Clustering, \ClusteringP} \neq x_{\ClusteringP, \Clustering}$,
and similarly for $y$.
Define the (asymmetric) distance
$d(\Clustering, \ClusteringP)
= 2 y_{\Clustering, \ClusteringP}
- x_{\Clustering, \ClusteringP} + 2 (k - k')$.
We claim that the length of every shortest path
from \Clustering to \ClusteringP
with respect to the edge \lengths \Weight[e]
is $d(\Clustering, \ClusteringP)$.

First, we show that there exists a path of length
$d(\Clustering, \ClusteringP)$ from \Clustering to \ClusteringP.
Start the path by breaking the $x$ mixed clusters in \Clustering,
using \Split edges of \length $1$ each.
At this point, we have a clustering \ClusteringPP
with $y + k - x$ clusters,
each of which is a subset of one of the clusters in \ClusteringP.
Then, using $y + k - x - k'$ cluster merges (each of edge \length 2),
we obtain \ClusteringP.
The total \length of this path is
$x + 2(y + k - x - k') = d(\Clustering, \ClusteringP)$.

Next, we show that there is no path in \UCG
from \Clustering to \ClusteringP shorter than
the claimed bound of $d(\Clustering, \ClusteringP)$.
We do so by induction on the number of edges in the path.
In the base case of $0$ edges,
$\Clustering = \ClusteringP$, so the claimed bound of
$d(\Clustering, \ClusteringP) = 0$ is a lower bound.
Now consider an edge $(\Clustering, \ClusteringBar)$
which is the first edge on a path
from \Clustering to \ClusteringP.
Let $\bar{k}$ be the number of clusters in \ClusteringBar,
and $x = x_{\Clustering, \ClusteringP}$,
$y = y_{\Clustering, \ClusteringP}$,
$\bar{x} = x_{\ClusteringBar, \ClusteringP}$,
$\bar{y} = y_{\ClusteringBar, \ClusteringP}$.
We distinguish two cases based on the type of edge from
\Clustering to \ClusteringBar.

\begin{itemize}
\item If $(\Clustering, \ClusteringBar)$ is a \Merge edge,
then $\Weight[(\Clustering, \ClusteringBar)] = 2$,
and $\bar{k} = k - 1$.
We distinguish two subcases, based on the two clusters
$C_1, C_2 \in \Clustering$ that were merged:
\begin{enumerate}
\item If $C_1$ or $C_2$ was mixed, or $C_1 \cup C_2$ is not mixed,
then $\bar{x} \leq x$ and $\bar{y} \geq y$
(because merging two clusters
cannot remove any elements from mixed clusters).
In particular, $2\bar{y} - \bar{x} \geq 2y - x$.
\item If neither $C_1$ not $C_2$ was mixed,
but the new cluster $C_1 \cup C_2 $ is mixed,
then $\bar{x} = x + 1$ and
$\bar{y} = y + \SetCard{C_1} + \SetCard{C_2} \geq y + 2$.
Therefore, again
$2\bar{y} - \bar{x} \geq 2y + 4 - (x + 1) \geq 2y - x$.
\end{enumerate}
In either case, $2\bar{y} - \bar{x} \geq 2y - x$, so
\begin{align*}
d(\ClusteringBar, \ClusteringP)
& = 2\bar{y} - \bar{x} + 2 (\bar{k} - k')\\
& \geq  2y - x + 2((k-1) - k')\\
& = d(\Clustering, \ClusteringP) - 2\\
& = d(\Clustering, \ClusteringP)
- \Weight[(\Clustering, \ClusteringBar)].
\end{align*}
\item If $(\Clustering, \ClusteringBar)$ is a $\Split(C)$ edge,
then $\Weight[(\Clustering, \ClusteringBar)] = 1$,
and $\bar{k} = k + \SetCard{C} - 1$.
Again, there can be at most one fewer mixed cluster (namely, $C$).
If $C$ was mixed, then $\bar{x} = x - 1$ and $\bar{y} = y - \SetCard{C}$.
Otherwise, $\bar{x} = x$ and $\bar{y} = y$.
In both cases, we have that
$2\bar{y} - \bar{x} \geq 2y - x - 2\SetCard{C} + 1$.
Thus,
\begin{align*}
d(\ClusteringBar, \ClusteringP)
& = 2\bar{y} - \bar{x} + 2 (\bar{k} - k')\\
& \geq 2y - x - 2\SetCard{C} + 1 + 2(k + \SetCard{C} - 1 - k')\\
& = 2y - x + 2(k - k') - 1\\
& = d(\Clustering, \ClusteringP) - 1\\
& = d(\Clustering, \ClusteringP)
- \Weight[(\Clustering, \ClusteringBar)].
\end{align*}
\end{itemize}
In both cases, we can apply induction to \ClusteringP,
and conclude that there is no path of total \length less than 
$d(\Clustering, \ClusteringP)$ from \Clustering to \ClusteringP.

Finally, we verify that every correct feedback \ClusteringBar
to a queried clustering \Clustering
lies on a path of length $d(\Clustering, \ClusteringOpt)$
from \Clustering to the target clustering $\ClusteringOpt$.
\begin{itemize}
\item If $(\Clustering, \ClusteringBar)$
is a correct user response in the form of \Merge, then
$x_{\ClusteringBar, \ClusteringOpt} = x_{\Clustering, \ClusteringOpt}$
and $y_{\ClusteringBar, \ClusteringOpt} = y_{\Clustering, \ClusteringOpt}$.
However, \ClusteringBar has one fewer cluster than \Clustering,
so
\[
d(\ClusteringBar, \ClusteringOpt)
\; = \; d(\Clustering, \ClusteringOpt) - 2
\; = \; d(\Clustering, \ClusteringOpt)
- \Weight[(\Clustering, \ClusteringBar)].
\]
\item If $(\Clustering, \ClusteringP)$ is a correct response
in the form of \Split
and $C$ is the cluster that needs to be split/broken,
then $x_{\ClusteringBar, \ClusteringOpt} =
x_{\Clustering, \ClusteringOpt} - 1$
and $y_{\ClusteringBar, \ClusteringOpt} =
y_{\Clustering, \ClusteringOpt} - \SetCard{C}$.
Moreover, \ClusteringBar has $\SetCard{C} - 1$ more clusters
than \Clustering.
By applying all these equations,
similar to the earlier calculations,
we get that
\[
d(\ClusteringBar, \ClusteringOpt)
\; =  \; d(\Clustering, \ClusteringOpt) - 1
\; = \; d(\Clustering, \ClusteringOpt) - 
\Weight[(\Clustering, \ClusteringBar)].
\]
\end{itemize}
In both cases, by induction on $d(\Clustering, \ClusteringOpt)$,
we show that every correct user feedback from \Clustering
lies on a path of length
$d(\Clustering, \ClusteringOpt)$ to $\ClusteringOpt$.

Finally, we show that each edge is part of ``short'' cycle.
Consider any two clusterings \Clustering, \ClusteringP
with $k, k'$ clusters, respectively.
By using $k - 1$ \Split operations (of \length $1$ each),
we can first break \Clustering into all singletons.
Then, by using $n - k'$ \Merge operations
(the edges having \length $2$ each),
we can obtain \ClusteringP.
The total \length of this path is at most
$k - 1 + 2(n - k') \leq 3n - 3$.
In particular, for any edge $(\Clustering, \ClusteringP)$,
there is a ``returning'' path from \ClusteringP to \Clustering
of total \length at most $3n - 3$,
which together with the edge $(\Clustering, \ClusteringP)$
gives a cycle of total \length at most $3n - 1 \leq 3n$.
Because $(\Clustering, \ClusteringP)$ has \length at least 1,
it makes up a $\frac{1}{3n}$ fraction of the cycle's \length.
(With a little more care, this bound can be easily improved to
$\frac{1}{2n}$.)
\end{proof}

\UCG is directed, and
(as mentioned in the proof of Lemma~\ref{lem:unspecified-clustering-graph})
every edge makes up at least a
$\frac{1}{3n}$ fraction of the total \length
of at least one cycle it participates in.
Hence, Proposition~\ref{prop:approximate-median-exists}
gives an upper bound of $\frac{3n-1}{3n}$ on the value of $\beta$
in each iteration.
A more careful analysis exploiting the specific structure of \UCG
gives us the following:

\begin{lemma} \label{lem:good-bound-for-clustering}
In \UCG, for every non-negative
node \weightV function \NodeWeight,
there exists a clustering \Clustering with
$\Potential{\NodeWeight}{\Clustering} \leq \half$.
\end{lemma}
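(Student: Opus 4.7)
The plan is to exhibit a specific clustering $\Clustering$, built by a greedy merging procedure, and then verify the potential bound by reading off the reach of each out-edge from the shortest-path analysis already carried out in Lemma~\ref{lem:unspecified-clustering-graph}. Normalize so that $\NodeWeight[\AllStructures] = 1$. Call a set $C \subseteq [n]$ \emph{pure} with respect to a target $\ClusteringOpt$ if all of its items lie in a single cluster of $\ClusteringOpt$, and \emph{mixed} otherwise, and define $w(C) := \NodeWeight[\{\ClusteringOpt : C \text{ is pure in } \ClusteringOpt\}]$. Note that $w$ is antitone under inclusion (if $C \subseteq C'$ and $C'$ is pure then so is $C$) and that $w(\{v\}) = 1$ for every element $v$.

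The first step is to compute, in terms of $w$, the total weight of $\Reach{\Clustering}{\ClusteringP}$ for each of the two edge types out of a clustering $\Clustering$. Using the distance formula $d(\Clustering, \ClusteringOpt) = 2y - x + 2(k - k^{*})$ from the proof of Lemma~\ref{lem:unspecified-clustering-graph}: the merge edge from $\Clustering$ to the clustering $\ClusteringP$ that fuses two of its clusters $C_i, C_j$ lies on a shortest path to $\ClusteringOpt$ iff both $C_i$ and $C_j$ sit inside a single target cluster, which, since a mixed $C_i$ would force $C_i \cup C_j$ to be mixed, is equivalent to $C_i \cup C_j$ being pure. Hence $\NodeWeight[\Reach{\Clustering}{\ClusteringP}] = w(C_i \cup C_j)$. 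Similarly, the split edge from $\Clustering$ to the clustering that breaks some $C_i$ with $|C_i| \geq 2$ into singletons is on a shortest path to $\ClusteringOpt$ iff $C_i$ is mixed in $\ClusteringOpt$, so its reach has weight $1 - w(C_i)$; singleton ``splits'' are self-loops and do not affect the potential.

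The second step constructs $\Clustering$ greedily. Start from the clustering of all singletons, and while there exist two current clusters $C_i, C_j$ with $w(C_i \cup C_j) > \half$, replace them by their union. By induction on the iterations, every cluster in the evolving clustering has weight strictly greater than $\half$: singletons start at weight $1$, and a merge is performed only when the resulting set still has weight $> \half$. The procedure terminates because the number of clusters strictly decreases, and at termination $\Clustering$ satisfies $w(C_i \cup C_j) \leq \half$ for every pair of distinct clusters in $\Clustering$ (otherwise a merge would still be available), while $w(C_i) > \half$ for every current cluster.

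Combining the two steps, every merge edge out of $\Clustering$ has reach weight $w(C_i \cup C_j) \leq \half$ by the termination condition, and every non-trivial split edge has reach weight $1 - w(C_i) < \half$, which together give $\Potential{\NodeWeight}{\Clustering} \leq \half$ as required. The only place where any care is needed is the reach characterization in the first step, since it relies on unwinding the case analysis for $d(\Clustering, \ClusteringOpt)$ in Lemma~\ref{lem:unspecified-clustering-graph} (in particular, ruling out merges in which $C_i \cup C_j$ is mixed from lying on any shortest path); once that characterization is in hand, the greedy construction and its verification are essentially immediate.
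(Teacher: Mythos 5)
Your proof is correct and follows essentially the same route as the paper's: the identical greedy procedure (merge two clusters whenever the weight of clusterings keeping their union pure exceeds $\half$, starting from singletons) and the same two-case verification at termination. The only difference is that you make explicit the characterization of $\Reach{\Clustering}{\ClusteringP}$ for merge and split edges via the distance formula from Lemma~\ref{lem:unspecified-clustering-graph}, a detail the paper's proof leaves implicit.
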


\begin{emptyproof}
Without loss of generality,
assume that \NodeWeight is normalized, so that
$\sum\limits_{s} \NodeWeight[s] = 1$.
We describe an explicit greedy procedure for finding
a clustering \Clustering, similar to a procedure employed by
Awasthi and Zadeh~\cite{awasthi-zadeh:2010:supervised-clustering}.
Start with a clustering into singleton sets, i.e., 
$\Clustering = \SpecSet{\Set{v}}{v \text{ is an item}}$.
Repeatedly look for two clusters $C, C' \in \Clustering$
such that the total likelihood of all the clusterings 
that group all of $C \cup C'$ in one cluster is
strictly more than \half.
As long as such $C, C'$ exist, merge them into a new cluster,
and continue with the new clustering.
The procedure terminates with some clustering \Clustering for which
no pair of clusters can be further merged.
We will show that $\Potential{\NodeWeight}{\Clustering} \leq \half$,
by considering all clusterings \ClusteringP adjacent to \Clustering:

\begin{itemize}
\item If \ClusteringP is obtained by merging two clusters
$C, C' \in \Clustering$, then by the termination condition,
$\NodeWeight[\Reach{\Clustering}{\ClusteringP}] \leq \half$;
otherwise, $C$ and $C'$ would have been merged.
\item If \ClusteringP is obtained by splitting a cluster
$C \in \Clustering$,
then we first notice that $C$ cannot be a singleton cluster.
Therefore, it was created by merging two other clusters
at some point earlier in the greedy process.
By the merge condition, the total \weightV of the clusterings
\ClusteringPP that have all of $C$ in the same cluster
is strictly more than \half.
Therefore, the total \weightV of all clusterings \ClusteringPP
that prefer \emph{any} partitioning of $C$ is less than \half.
This implies $\NodeWeight[\Reach{\Clustering}{\ClusteringP}] \leq \half$
in the \Split case as well. \QED
\end{itemize}
\end{emptyproof}

In the absence of noise in the feedback,
Lemmas~\ref{lem:unspecified-clustering-graph} and
\ref{lem:good-bound-for-clustering}
and Theorem~\ref{thm:no-errors} imply an algorithm
that finds the true clustering using
$\log N = \log B(n) = \Theta(n \log n)$ queries.
Notice that this is worse than the ``trivial'' algorithm,
which starts with each node as a singleton cluster and always executes
the merge proposed by the user, until it has found the correct
clustering; hence, this bound is itself rather trivial.

Non-trivial bounds can be obtained
when clusters belong to a restricted set,
an approach also followed by
Awasthi and Zadeh~\cite{awasthi-zadeh:2010:supervised-clustering}.
If there are at most $M$ candidate clusters,
then the number of clusterings is $N_0 \leq M^k$.
For example, if there is a set system \Family of VC dimension
at most $d$ such that each cluster is in the range space of \Family,
then $M = O(n^d)$ by the Sauer-Shelah
Lemma~\cite{sauer:1972:density,%
shelah:1972:combinatorial-problem}.
Combining Lemmas~\ref{lem:unspecified-clustering-graph}
and \ref{lem:good-bound-for-clustering}
with Theorems~\ref{thm:no-errors} and \ref{thm:errors},
we obtain the existence of learning algorithms
with the following properties:

\begin{corollary}
\label{cor:unspecified-clustering-algo}
Assume that in response to each equivalence query,
the user responds with \Merge or \Split.
Also, assume that there are at most $M$ different candidate
clusters, and the clustering has (at most) $k$ clusters.
\begin{enumerate}
\item If all query responses are correct, then the target
clustering can be learned by an interactive algorithm using at most
$\log N = O(k \log M)$ equivalence queries.
Specifically when $M = O(n^d)$, this bound is $O(kd \log n)$.
This result recovers the main
result of \cite{awasthi-zadeh:2010:supervised-clustering}.\footnote{%
In fact, the algorithm in \cite{awasthi-zadeh:2010:supervised-clustering}
is implicitly computing and querying a
node with small \PotentialSym in this directed graph.}
\item If query responses are correct with probability $p > \half$,
the target clustering can be learned with probability at least
$1 - \ErrProb$ by an interactive algorithm using at most
$\frac{(1 - \ErrProb) k \log M}{1 - \Entr{p}} + o(k \log M) + O(\log^2(1/\ErrProb))$
equivalence queries in expectation.
Our framework provides the noise tolerance ``for free;''
\cite{awasthi-zadeh:2010:supervised-clustering}
instead obtain results for a different type of noise
in the feedback.
\end{enumerate}
\end{corollary}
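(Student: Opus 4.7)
The plan is straightforward: combine the structural Lemmas~\ref{lem:unspecified-clustering-graph} and \ref{lem:good-bound-for-clustering} about \UCG with the generic Theorems~\ref{thm:no-errors} and \ref{thm:errors}. The only actual work is (i) bounding the size of the initial candidate set and (ii) verifying the hypotheses of the generic theorems.

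First, I would set $\InitSet$ to be the collection of all $k$-clusterings whose individual clusters all lie in the given family of at most $M$ candidate clusters. Since an unordered $k$-clustering is specified by a subset of $k$ candidate clusters, $N_0 = \SetCard{\InitSet} \leq \binom{M}{k} \leq M^k$, so $\log N_0 \leq k \log M$. When the candidate clusters are the ranges of a set system of VC dimension $d$, the Sauer--Shelah lemma gives $M = O(n^d)$ and hence $\log N_0 = O(kd \log n)$, recovering the stated specialization.

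Next, Lemma~\ref{lem:unspecified-clustering-graph} certifies the shortest-path property (*) for \UCG, so both Algorithms~\ref{algo:no-errors} and \ref{algo:errors} apply, with the unknown target $\ClusteringOpt \in \InitSet$. Lemma~\ref{lem:good-bound-for-clustering} supplies, for every non-negative node-weight function on \UCG --- including the indicator $\NodeWeight_\ConsistentSet$ used in Algorithm~\ref{algo:no-errors} and the likelihood function maintained by Algorithm~\ref{algo:mult-weights} --- a clustering of potential at most $\half$, so we may take $\beta = \half$ throughout. For Part~1, substituting into Theorem~\ref{thm:no-errors} gives the bound $\log_2 N_0 \leq k \log M$. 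For Part~2, with $\beta = \half$ the threshold becomes $\AlgThreshold = \half p + \half(1-p) = \half$, so $\log(1/\AlgThreshold) = 1 > \Entr{p}$ for every $p > \half$, satisfying the hypothesis of Theorem~\ref{thm:errors}; the claimed bound then follows by plugging $N_0 \leq M^k$ into that theorem.

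There is essentially no "main obstacle" here, since all heavy lifting was done in the preceding lemmas and theorems. The one subtlety worth flagging in the write-up is that the approximate-median clustering guaranteed by Lemma~\ref{lem:good-bound-for-clustering} need not itself lie in $\InitSet$; this causes no harm because the algorithms only need to \emph{query} some clustering of small potential, and the user's (possibly noisy) feedback remains well-defined relative to $\ClusteringOpt \in \InitSet$ via property~(*). As a side remark, the explicit greedy merge procedure inside the proof of Lemma~\ref{lem:good-bound-for-clustering} coincides with the algorithm of Awasthi and Zadeh, which is why Part~1 recovers their result "for free" and Part~2 extends it to noisy feedback without additional effort.
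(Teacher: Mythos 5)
Your proposal is correct and follows exactly the paper's route: the paper derives this corollary by combining Lemma~\ref{lem:unspecified-clustering-graph} (property (*) for \UCG) and Lemma~\ref{lem:good-bound-for-clustering} ($\beta = \half$) with Theorems~\ref{thm:no-errors} and \ref{thm:errors}, using $N_0 \leq M^k$ for the initial candidate set. Your added remark that the approximate-median clustering need not lie in \InitSet is a worthwhile clarification that the paper leaves implicit.
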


\subsection{Interactive Clustering with Given Cluster Splits}
\label{sec:specified-clusters}

We now also consider a model in which the user specifies exactly
\emph{how} to split a cluster when proposing a split.
The operation $\Split(C_i, C', C'')$ specifies that
the cluster $C_i$ should be split into $C'$ and $C''$,
and thereby implies that none of the items in $C'$
should be clustered with any item in $C''$.
(Naturally, $C'$ and $C''$ must be disjoint,
and their union must be $C_i$.)
We require the same assumptions as for the model of ``unspecified
splits,'' and the bounds we obtain are the same.
Hence, the results in this model are weaker than those for the
``unspecified splits'' model.
We are including them because we believe them to be a clean and
natural application of the interactive learning framework.

We define an undirected weighted graph \GCG,
again containing a node for each clustering \Clustering.
There is an (undirected) edge between two clusterings
$\Clustering, \ClusteringP$ if and only if there exist
clusters $C_i \in \Clustering, C'_j, C'_{j'} \in \ClusteringP$
with
$\ClusteringP = \Clustering
\setminus \Set{C_i} \cup \Set{C'_j, C'_{j'}}$,
i.e., \ClusteringP is obtained from \Clustering
by splitting $C_i$ into $C'_j$ and  $C'_{j'}$.
The \length of the edge $(\Clustering, \ClusteringP)$ is
$\Weight_{(\Clustering, \ClusteringP)} =
2 \SetCard{C'_j} \cdot \SetCard{C'_{j}}$.

\begin{lemma} \label{lem:specified-clustering-graph}
\GCG satisfies Definition~\ref{def:shortest-path}
with respect to \Merge and $\Split(C_i, C', C'')$ feedback.
\end{lemma}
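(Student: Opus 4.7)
The plan is to define a natural metric $d$ on clusterings and show both that (i) the shortest-path distance in \GCG with respect to the \lengths \Weight coincides with $d$, and (ii) every correct user feedback to \Clustering moves exactly one edge closer to $\ClusteringOpt$ under $d$. For two clusterings $\Clustering, \ClusteringP$, let $T(\Clustering,\ClusteringP)$ be the number of unordered pairs $\{u,v\}$ that are together in \Clustering but apart in \ClusteringP, and let $A(\Clustering,\ClusteringP)$ be the number of pairs apart in \Clustering but together in \ClusteringP. Define $d(\Clustering,\ClusteringP) := 2(T(\Clustering,\ClusteringP)+A(\Clustering,\ClusteringP))$. Observe that for any edge $(\Clustering,\ClusteringP)$ of \GCG corresponding to splitting some $C_i$ into $C'_j, C'_{j'}$, exactly $|C'_j|\cdot |C'_{j'}|$ pairs change their together/apart status, so $d(\Clustering,\ClusteringP)=2|C'_j||C'_{j'}|=\Weight_{(\Clustering,\ClusteringP)}$.

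Next I verify that the $d$-distance lower-bounds any path length, and that this lower bound is attained. The lower bound follows from a direct triangle-inequality argument: along any path $\Clustering=\Clustering_0,\Clustering_1,\ldots,\Clustering_m=\ClusteringP$, each pair that is in different status at the endpoints must flip an odd number of times, and every flip contributes $2$ to the total edge \length. For the matching upper bound, go through the common refinement $\Clustering\wedge\ClusteringP$ (two items are together iff they are together in both): first perform a sequence of binary splits to break each $C_i\in\Clustering$ into the pieces $C_i\cap C'_1,\ldots,C_i\cap C'_k$, then merge these pieces up into the clusters of $\ClusteringP$. The key combinatorial fact (a standard ``binary tree cost'' argument) is that no matter in which order the binary splits of $C_i$ are performed, the total edge \length equals $2\sum_{j<j'}|C_i\cap C'_j|\cdot|C_i\cap C'_{j'}|$, because each pair gets separated at exactly one split and contributes $2$ to the cost of that split. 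Summing over all $C_i$ gives a split-phase cost of $2T(\Clustering,\ClusteringP)$, and symmetrically the merge phase costs $2A(\Clustering,\ClusteringP)$, for a total of $d(\Clustering,\ClusteringP)$.

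Finally, I check property (*). If the user correctly proposes $\Merge(C_i,C_j)$ yielding \ClusteringP, then all pairs between $C_i$ and $C_j$ were apart in \Clustering and together in \ClusteringOpt, so they contributed $|C_i||C_j|$ to $A(\Clustering,\ClusteringOpt)$ and now contribute nothing; no other pair changes status. Hence $d(\ClusteringP,\ClusteringOpt)=d(\Clustering,\ClusteringOpt)-2|C_i||C_j|=d(\Clustering,\ClusteringOpt)-\Weight_{(\Clustering,\ClusteringP)}$. Dually, if $\Split(C_i,C',C'')$ is correct then every pair with one endpoint in $C'$ and the other in $C''$ was together in \Clustering but apart in \ClusteringOpt, so exactly $|C'||C''|$ pairs are removed from $T(\Clustering,\ClusteringOpt)$ and nothing else changes, yielding $d(\ClusteringP,\ClusteringOpt)=d(\Clustering,\ClusteringOpt)-2|C'||C''|=d(\Clustering,\ClusteringOpt)-\Weight_{(\Clustering,\ClusteringP)}$. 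In both cases \ClusteringP lies on a shortest $\Clustering$-$\ClusteringOpt$ path, so \GCG satisfies Definition~\ref{def:shortest-path}.

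The main obstacle is the binary-tree cost computation for the split phase: one must be confident that an arbitrary sequence of \emph{binary} splits refining a cluster $C_i$ into the cells $C_i\cap C'_j$ always incurs exactly the pair-count cost, independent of the ordering of splits. Once that order-independence is established by the ``each separated pair is charged exactly once at its split'' argument, everything else reduces to routine bookkeeping of how $T$ and $A$ change under a single merge or split.
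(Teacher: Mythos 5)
Your proof is correct and takes essentially the same route as the paper: your metric $2(T+A)$ is exactly the Hamming distance between the pairwise-adjacency matrices that the paper uses, the edge-weight/metric matching and the verification that correct \Merge or \Split feedback decreases the distance by exactly the edge \length are identical, and your two-phase path through the common refinement is a minor repackaging of the paper's inductive one-split-at-a-time construction. The ``binary tree cost'' order-independence step you flag as the main obstacle is sound (each separated pair is charged at exactly the first split that separates it) and is implicitly the same accounting the paper performs via its induction.
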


\begin{emptyproof}
Corresponding to each clustering \Clustering,
we define an $n \times n$ adjacency matrix \Adjacency[\Clustering]
with $\Adjacency[\Clustering]_{i, j} = 1$ if items $i$ and $j$
are in the same cluster in \Clustering,
and $\Adjacency[\Clustering]_{i, j} = 0$, otherwise.
(By definition,
$\Adjacency[\Clustering]_{i, i} = 1$ for every item $i$.)
For two clusterings \Clustering and \ClusteringP,
define their distance \Diff{\Clustering}{\ClusteringP}
to be the Hamming distance of their adjacency matrices
\Adjacency[\Clustering] and \Adjacency[\ClusteringP],
i.e., the total number of bits in which their adjacency matrices differ.

If there is an edge in \GCG between \Clustering and \ClusteringP,
then \Clustering and \ClusteringP will differ by exactly one cluster
being split into two (or two clusters being merged into one).
Let $C, C'$ be the two merged clusters
(or the clusters resulting from the split).
The merge/split changes exactly $2 \SetCard{C} \cdot \SetCard{C'}$
bits in the adjacency matrix,
and the \length assigned to the edge $(\Clustering, \ClusteringP)$
is exactly
$\Weight_{(\Clustering, \ClusteringP)}
= 2 \SetCard{C} \cdot \SetCard{C'} = \Diff{\Clustering}{\ClusteringP}$.

We now show that for each pair \Clustering, \ClusteringP,
there is a path in \GCG of total edge \length
exactly equal to \Diff{\Clustering}{\ClusteringP}.
We show this by induction on \Diff{\Clustering}{\ClusteringP},
the base case $\Diff{\Clustering}{\ClusteringP} = 0$ being trivial
because $\Clustering = \ClusteringP$.
Suppose that $\Clustering \neq \ClusteringP$.
Then, there exist\footnote{Technically, one might have to switch the
roles of \Clustering and \ClusteringP for this to be true.}
$C \in \Clustering, C'_1, C'_2 \in \ClusteringP$
such that $C \cap C'_1 \neq \emptyset$ and $C \cap C'_2 \neq \emptyset$.
Consider the move $\Split(C, C \cap C'_1, C \setminus C'_1)$.
Call the resulting clustering \ClusteringPP.
Its adjacency matrix has
$\Adjacency[\ClusteringPP]_{i,j} = 0 = \Adjacency[\ClusteringP]_{i,j}$
for all $i \in C \cap C'_1, j \in C \setminus C'_1$
and $i \in C \setminus C'_1, j \in C \cap C'_1$,
while $\Adjacency[\Clustering]_{i,j} = 1$ for all $i, j \in C$.
Thus,
\begin{align*}
\Diff{\Clustering}{\ClusteringP}
& = \; \Diff{\ClusteringPP}{\ClusteringP}
+ 2\SetCard{C \cap C'_1} \cdot \SetCard{C \cap C'_2}
\; = \; \Diff{\ClusteringPP}{\ClusteringP}
+ \Weight_{(\Clustering,\ClusteringPP)}.
\end{align*}
By induction hypothesis, there is a path of total \length
\Diff{\ClusteringPP}{\ClusteringP} from \ClusteringPP to \ClusteringP
in \GCG, which combined with the edge $(\Clustering, \ClusteringPP)$
gives the desired path from \Clustering to \ClusteringP,
completing the inductive proof.

We can now show that when a user correctly proposes a move
corresponding to an edge $(\Clustering, \ClusteringP)$,
it indeed lies on a shortest path from \Clustering to \OptClustering.
We consider two cases:
\begin{itemize}
\item The user proposes $\Merge(C, C')$. This means that
all of $C$ and $C'$ belong to one cluster in \ClusteringOpt;
in particular, all matrix entries for $i, j \in C \cup C'$ are 1.
In \Adjacency[\Clustering], all entries for $i, j \in C$ are 1,
as are all entries for $i, j \in C'$; on the other hand,
all entries $\Adjacency[\Clustering]_{i,j}$
for $i \in C, j \in C'$ or for $i \in C', j \in C$ are $0$.
After the \Merge, all these entries are $1$ as well,
decreasing the Hamming distance
by $2\SetCard{C} \cdot \SetCard{C'}$.
Since this is the \length of the edge $(\Clustering, \ClusteringP)$ as well, 
and there is a path from \ClusteringP to \ClusteringOpt
with total \length equal to their Hamming distance
(the argument of the previous paragraph),
\ClusteringP indeed lies on a shortest path
from \Clustering to \OptClustering.
\item The user proposes $\Split(C, C', C'')$.
This means that no pair $i \in C', j \in C''$
belongs to the same cluster in \ClusteringOpt,
whereas they are all grouped together in \Clustering,
meaning that $\Adjacency[\Clustering]_{i, j} = 1$
for all $i, j \in C$.
Thus, $\Diff{\ClusteringP}{\ClusteringOpt} =
\Diff{\Clustering}{\OptClustering}
- 2 \SetCard{C'} \cdot \SetCard{C''} =
\Diff{\Clustering}{\ClusteringOpt}
- \Weight_{(\Clustering,\ClusteringP)}$.
Again by the argument from the previous paragraph, 
there is a path of total \length \Diff{\ClusteringP}{\ClusteringOpt}
from \ClusteringP to \ClusteringOpt, so $(\Clustering, \ClusteringP)$
indeed lies on a shortest path from \Clustering to \ClusteringOpt.\QED
\end{itemize}
\end{emptyproof}

As in the clustering model in Section~\ref{sec:clustering},
the obvious $\log N = \log B(n) = \Theta(n \log n)$ bound
on the number of queries can be improved
when clusters belong to a restricted set of size at most $M$,
giving us the following result:

\begin{corollary} \label{cor:specified-clustering-algo}
Assume that in response to each equivalence query,
the user responds with \Merge or $\Split(C_i, C', C'')$.
Also, assume that there are at most $M$ different candidate clusters,
and the clustering has $k$ clusters.
\begin{enumerate}
\item If all query responses are correct, then the target
clustering can be learned by an interactive algorithm using at most
$O(k \log M)$ equivalence queries.
Specifically when $M = O(n^d)$, this bound is $O(k d \log n)$.
\item If query responses are correct with probability $p > \half$,
the target clustering can be learned with probability at least
$1 - \ErrProb$ by an interactive algorithm using at most
$\frac{(1 - \ErrProb) k \log M}{1-\Entr{p}} + o(k \log M) + O(\log^2(1/\ErrProb))$
equivalence queries in expectation.
\end{enumerate}
\end{corollary}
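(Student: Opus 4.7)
The plan is to combine Lemma~\ref{lem:specified-clustering-graph} with the generic guarantees of Theorem~\ref{thm:no-errors} and Theorem~\ref{thm:errors} in exactly the same pattern used for Corollary~\ref{cor:unspecified-clustering-algo}, with one significant simplification: because \GCG is \emph{undirected} (by construction in Section~\ref{sec:specified-clusters}), Proposition~\ref{prop:approximate-median-exists} applies with $c = 2$, so an approximate median with $\Potential{\NodeWeight}{\Clustering} \leq \half$ always exists regardless of the current likelihood function \NodeWeight. Hence we may take $\beta = \half$, which is precisely what drives the factor $1 - \Entr{p}$ in the denominator of the noisy bound.

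First, I bound the size of the candidate set. Let \InitSet consist of all clusterings with at most $k$ clusters, each of which is among the at most $M$ allowed candidate clusters. Then $N_0 = \SetCard{\InitSet} \leq M^k$, and by assumption $\ClusteringOpt \in \InitSet$. Second, Lemma~\ref{lem:specified-clustering-graph} guarantees property (*) of Definition~\ref{def:shortest-path} for \GCG, so every correct user response to a queried clustering \Clustering lies on a shortest path from \Clustering to \ClusteringOpt. Third, because \GCG is undirected, Corollary~\ref{cor:no-errors:half} directly gives the bound
\[
\log_2 N_0 \; \leq \; k \log M
\]
for the noiseless case, proving part (1). Specializing to a set system of VC-dimension $d$, the Sauer-Shelah lemma gives $M = O(n^d)$ and thus the bound $O(kd \log n)$.

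For part (2), I apply Corollary~\ref{cor:errors:half} with the same \InitSet. The corollary yields the target clustering with probability at least $1 - \ErrProb$ using at most
\[
\frac{(1 - \ErrProb)}{1 - \Entr{p}} \log_2 N_0 + o(\log N_0) + O(\log^2(1/\ErrProb))
\]
queries in expectation, which after substituting $\log_2 N_0 \leq k \log M$ gives the claimed bound.

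The one subtlety to verify is that running the algorithm with node \weightsV supported on \InitSet does not break Proposition~\ref{prop:approximate-median-exists}: but the proposition holds for an arbitrary non-negative \NodeWeight on the full node set of \GCG, and only requires property (*) to hold in \GCG itself (which Lemma~\ref{lem:specified-clustering-graph} provides). So the approximate-median step of Line~\ref{line:potential}/\ref{line:errors-potential} remains well-defined and achieves $\beta = \half$. I do not address computational efficiency here, since the corollary only asserts a bound on the number of equivalence queries, matching the scope of Corollary~\ref{cor:unspecified-clustering-algo}.
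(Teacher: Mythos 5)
Your proposal is correct and follows exactly the route the paper intends: Lemma~\ref{lem:specified-clustering-graph} supplies property (*), the undirectedness of \GCG gives $\beta = \half$ immediately via Proposition~\ref{prop:approximate-median-exists}, and restricting \InitSet to the at most $M^k$ clusterings built from candidate clusters yields the stated bounds through Corollaries~\ref{cor:no-errors:half} and~\ref{cor:errors:half}. The paper leaves this corollary's proof implicit for precisely this reason, and your note that the node weights supported on \InitSet do not affect the approximate-median guarantee is a correct (if minor) point of diligence.
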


We saw earlier that a trivial
algorithm in the weaker model achieved a bound of $n-k$ queries.
The situation is even more extreme with more informative feedback: 
if there are no errors in the feedback, a bound of $k - 1$ can
actually be obtained by another trivial algorithm.
The algorithm starts from a clustering of all items in one cluster,
and repeatedly obtains feedback,
which must provide a correct split of one cluster into two.
While this algorithm uses significantly fewer queries,
it is not clear how to generalize it to the case of incorrect feedback:
in particular, repeating the same query multiple times
to obtain higher assurance will not work,
as there are many correct answers the algorithm could receive.
Thus, the algorithm cannot rely on a majority vote.

\Eat{Corollary~\ref{cor:specified-clustering-algo}
follows more straightforwardly from the framework:
because this graph is undirected,
one gets $\beta = \half$ immediately.}

\section{Application III: Learning a Classifier} \label{sec:classification}

Learning a binary classifier is the original
and prototypical application
of the equivalence query model
of Angluin~\cite{angluin:1988:queries-concept},
which has seen a large amount of follow-up work since
(see, e.g., \cite{maass-turan:1992:lower-bound-methods,%
maass-turan:1994:geometrical-concepts}).
Naturally, if no assumptions are made on the classifier,
then $n$ queries are necessary in the worst case.
In general, applications therefore restrict the concept classes to
smaller sets, such as assuming that they have bounded VC dimension.
We use \Family to denote the set of all possible concepts,
and write $M = \Size{\Family}$;
when \Family has VC dimension $d$,
the Sauer-Shelah Lemma~\cite{sauer:1972:density,%
shelah:1972:combinatorial-problem}
implies that $M = O(n^d)$.

Learning a binary classifier for $n$ points is an almost
trivial application of our framework\footnote{The results extend
readily to learning a classifier
with $k \geq 2$ labels.}.
When the algorithm proposes a candidate classifier,
the feedback it receives is a point with a corrected label
(or the fact that the classifier was correct on all points).

We define the graph \CLG to be the $n$-dimensional
hypercube%
\footnote{When there are $k$ labels, \CLG is a graph with $k^n$ nodes.}
with unweighted and undirected edges between every pair of nodes at
Hamming distance 1.
Because the distance between two classifiers
\Classifier, \ClassifierP is exactly the number of points on
which they disagree,
\CLG satisfies Definition~\ref{def:shortest-path}.
Hence, we can apply Corollary~\ref{cor:no-errors:half}
and Theorem~\ref{thm:errors} with 
\InitSet equal to the set of all $M$ candidate classifiers 
to obtain the following:

\begin{corollary}
\begin{itemize}
\item With perfect feedback, the target classifier is learned using
$\log M$ queries%
\footnote{With $k$ labels, this bound becomes $(k-1) \log M$.}.
\item When each query response is correct with probability $p > \half$,
there is an algorithm learning the true binary classifier with
probability at least $1 - \ErrProb$ using at most
$\frac{(1 - \ErrProb) \log M}{1 - \Entr{p}} +
o(\log M) + O(\log^2(1/\ErrProb))$
queries in expectation.
\end{itemize}
\end{corollary}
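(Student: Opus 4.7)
The plan is to plug directly into the machinery of Corollary~\ref{cor:no-errors:half} and Theorem~\ref{thm:errors}, taking the initial candidate set \InitSet to be the concept class \Family of size $M$. The key structural fact, that \CLG satisfies Definition~\ref{def:shortest-path}, has essentially already been dispatched in the narrative preceding the corollary: when the learner proposes \Classifier and the user correctly responds with a point $x$ whose label under \Classifier is wrong, the corrected classifier \ClassifierP differs from \Classifier in exactly that one coordinate, so $(\Classifier, \ClassifierP)$ is a hypercube edge; since flipping a genuinely wrong label decreases the Hamming distance to $\Classifier^*$ by exactly $1$, this edge lies on a shortest $\Classifier$-to-$\Classifier^*$ path.

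For the first bullet, I would apply Corollary~\ref{cor:no-errors:half} with $N_0 = \Size{\Family} = M$. Since \CLG is undirected and unweighted, the corollary guarantees identification of $\Classifier^*$ in at most $\log_2 M$ queries. One small subtlety to flag is that the approximate-median \structure $s$ supplied by Proposition~\ref{prop:approximate-median-exists} need not itself lie in \Family; this is harmless because the equivalence query model allows the learner to propose any labeling of the $n$ points. The $k$-label extension from the footnote follows by replacing the hypercube with the grid $\{0, 1, \ldots, k-1\}^n$ and observing that the shortest-path property still holds with respect to Hamming distance; the $(k-1)$ factor arises from the larger out-degree per coordinate when each flip specifies a new label out of $k-1$ alternatives.

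For the second bullet, I would invoke Theorem~\ref{thm:errors} with $\beta = \half$. Then $\AlgThreshold = \half p + \half (1-p) = \half$, so $\log(1/\AlgThreshold) = 1$ and the denominator $\log(1/\AlgThreshold) - \Entr{p}$ appearing in the theorem's bound simplifies to $1 - \Entr{p}$. Substituting $N_0 = M$ yields exactly the advertised bound $\frac{(1-\ErrProb)\log M}{1 - \Entr{p}} + o(\log M) + O(\log^2(1/\ErrProb))$. There is no real obstacle: the entire corollary is a substitution into results already proved, and the only piece of genuine content is the shortest-path verification for \CLG described in the first paragraph, which in turn reduces to the one-line observation that flipping a wrong bit moves strictly closer to $\Classifier^*$ in Hamming distance.
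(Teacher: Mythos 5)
Your proposal is correct and follows essentially the same route as the paper: observe that \CLG is the $n$-dimensional hypercube whose distances are Hamming distances (hence the number of disagreements), so Definition~\ref{def:shortest-path} holds, and then instantiate Corollary~\ref{cor:no-errors:half} and Theorem~\ref{thm:errors} with $\InitSet = \Family$ and $N_0 = M$ (with $\beta = \half$, $\AlgThreshold = \half$ in the noisy case). Your side remarks --- that the queried median may lie outside \Family (improper learning) and the grid construction for $k$ labels --- match the paper's own footnotes and its subsequent discussion.
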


Thus, we recover the classic result on learning a classifier
in the equivalence query model
when feedback is perfect and extend it to the noisy setting.

\subsection{Proper Learning of Hyperplanes}

While the learning algorithm we described will always terminate
with the correct classifier, and in particular one from \Family,
as part of the learning process, it may propose classifiers
outside of \Family, somewhat akin to improper learning.
Angluin's original paper \cite{angluin:1988:queries-concept}
already observed that when each query has to be in \Family,
a large number of queries may be necessary even when \Family
has very small VC dimension:
a particularly stark example is when \Family
consists of all singleton sets.

Since bounded VC dimension is not sufficient
to ensure query-efficient proper learning
in the equivalence query model,
a large body of subsequent work
(see, e.g., \cite{maass-turan:1994:geometrical-concepts}
for an overview)
has focused on specific geometric classes
such as hyperplanes or axis-aligned boxes.
Here, we show that the results on learning hyperplanes
in $\R^d$ can be obtained directly in our framework.
Let \Hyperplanes be the family of all subsets of the $n$ points
that are separable using a $d$-dimensional hyperplane.
The key insight is:

\begin{lemma} \label{lem:hyperplane-classifier}
For every \weightV function
$\NodeWeight : \Hyperplanes \to \R_{\geq 0}$
on linear classifiers,
there exists a linear classifier
$\Classifier \in \Hyperplanes$
with $\Potential{\NodeWeight}{\Classifier} \leq \frac{d + 1}{d + 2}$.
\end{lemma}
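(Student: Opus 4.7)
The plan is to reduce the lemma to the classical Rado centerpoint theorem applied in the dual/parameter space of hyperplanes. An affine hyperplane classifier in $\R^d$ is identified with a vector $(w,b) \in \R^{d+1}$ via the rule $\Classifier(x) = \text{sign}\langle w, x\rangle - b$; equivalently, after lifting each data point to $\tilde{x}_i = (x_i, -1) \in \R^{d+1}$, each classifier is parameterized by a point $v \in \R^{d+1}$, and the labeling it induces is determined by the signs of the inner products $\langle v, \tilde{x}_i\rangle$. The $n$ linear hyperplanes $\{v : \langle v, \tilde{x}_i\rangle = 0\}$ partition $\R^{d+1}$ into open cells, one per realizable labeling, so $\Hyperplanes$ corresponds (in bijection) to the set of full-dimensional cells of this arrangement.

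Next, I would transfer the \weightV function \NodeWeight to a point-mass distribution in $\R^{d+1}$ by placing mass $\NodeWeight[\Classifier]$ at an arbitrary interior representative of the corresponding cell. Rado's centerpoint theorem, applied in $\R^{d+1}$, yields a point $p^\star \in \R^{d+1}$ with the property that every closed halfspace of $\R^{d+1}$ containing $p^\star$ collects at least a $\frac{1}{d+2}$ fraction of the total mass. After an infinitesimal generic-position perturbation (of the representatives, or of $p^\star$), one may assume that $p^\star$ lies in the interior of some cell, and I take \Classifier to be the classifier associated with that cell.

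To conclude, I would verify the bound $\Potential{\NodeWeight}{\Classifier} \le \frac{d+1}{d+2}$. Any candidate user response $\ClassifierP$ at the queried \Classifier corresponds to flipping the label on some single point $x_i$ (the one at which the target and \Classifier disagree). As discussed for $\CLG$, the set $\Reach{\Classifier}{\ClassifierP}$ is exactly the set of classifiers in $\Hyperplanes$ that disagree with \Classifier at $x_i$, which in dual coordinates is the set of cells lying in the open halfspace $\{v : \text{sign}\langle v, \tilde{x}_i\rangle \neq \text{sign}\langle \Classifier, \tilde{x}_i\rangle\}$, i.e., the open halfspace on the opposite side of $\{v : \langle v, \tilde{x}_i\rangle = 0\}$ from $p^\star$. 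The complementary closed halfspace contains $p^\star$ and hence has mass $\geq \frac{1}{d+2}$ by the centerpoint property, so $\NodeWeight[\Reach{\Classifier}{\ClassifierP}] \leq \frac{d+1}{d+2}$ of the total. Taking the worst case over $i$ gives the stated bound.

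The main obstacle is really just bookkeeping: ensuring a well-defined cell for $p^\star$ (handled by perturbation), and checking that $p^\star$ is never on a separating hyperplane in a way that ruins the inequality; but in both cases a suitable symmetry-breaking choice preserves the centerpoint guarantee on the side opposite \Classifier. Once this is in place, the argument is essentially the standard centerpoint application, with $\frac{1}{d+2}$ arising precisely because the parameter space has dimension $d+1$.
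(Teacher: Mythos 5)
Your proof is correct, but it takes a genuinely different route from the paper's. The paper works entirely in the primal space of the $n$ sample points: it defines the weighted average label $\phi(x)$, sets $P = \{x : \phi(x) < \tfrac{1}{d+2}\}$ and $Q = \{x : \phi(x) > 1 - \tfrac{1}{d+2}\}$, proves via a generalization of Carath\'{e}odory's theorem (any two intersecting convex hulls intersect already on subsets of total size $d+2$) plus a union bound that $\conv(P) \cap \conv(Q) = \emptyset$, and then queries a hyperplane separating $P$ from $Q$. You instead pass to the dual arrangement in $\R^{d+1}$, realize \NodeWeight as a discrete measure on cell representatives, and invoke the centerpoint theorem to get a point $p^\star$ whose cell yields the queried classifier; the $\tfrac{1}{d+2}$ then falls out as $\tfrac{1}{(d+1)+1}$. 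Both arguments are sound and both deliver a \emph{proper} query (a genuine hyperplane), and at bottom they are cousins --- the centerpoint theorem is itself proved via Helly/Radon-type arguments of the same flavor as the paper's Carath\'{e}odory generalization. Your version is shorter modulo citing Rado's theorem and makes the origin of the constant transparent; the paper's version is self-contained and produces as a byproduct a combinatorial lemma of independent interest. One point worth tightening in your write-up: the perturbation step is cleanest if you observe that when $p^\star$ lies \emph{on} an arrangement hyperplane $\{v : \langle v, \tilde{x}_i\rangle = 0\}$, \emph{both} closed halfspaces it bounds contain $p^\star$ and hence both carry mass at least $\tfrac{1}{d+2}$, so moving $p^\star$ into any adjacent full-dimensional cell (and taking the move small enough not to cross any other hyperplane) preserves the guarantee for every coordinate simultaneously; no further symmetry-breaking care is needed.
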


The key to the proof of Lemma~\ref{lem:hyperplane-classifier}
is the following generalization of Carath\'{e}odory's Theorem.
We suspect that Lemma~\ref{lem:caratheodory-generalization}
must be known,
but since we could not find a statement of it despite a long search, 
we provide a self-contained proof here.

\begin{lemma} \label{lem:caratheodory-generalization}
Let $P, Q$ be sets in $\R^d$ whose convex hulls intersect.
Then, there exist sets $P' \subseteq P, Q' \subseteq Q$
with $\SetCard{P'} + \SetCard{Q'} \leq d + 2$
such that their convex hulls intersect.
\end{lemma}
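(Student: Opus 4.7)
The plan is to reduce to a linear-algebraic rank argument on the coefficients of a convex combination. By assumption, there exist finite subsets $\{p_1,\ldots,p_k\} \subseteq P$ and $\{q_1,\ldots,q_\ell\}\subseteq Q$ and non-negative coefficients $\alpha_i,\beta_j$ with $\sum_i\alpha_i=\sum_j\beta_j=1$ such that $\sum_i\alpha_i p_i=\sum_j\beta_j q_j$. Among all such representations, choose one minimizing the total support size $k+\ell$, where $k=|\{i:\alpha_i>0\}|$ and $\ell=|\{j:\beta_j>0\}|$; after discarding zero coefficients we may assume $\alpha_i,\beta_j>0$ for all $i,j$. Set $P'=\{p_1,\ldots,p_k\}$ and $Q'=\{q_1,\ldots,q_\ell\}$. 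The goal is to show $k+\ell\le d+2$.

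Suppose for contradiction $k+\ell\ge d+3$. Consider the homogeneous linear system in unknowns $(\gamma_1,\ldots,\gamma_k,\delta_1,\ldots,\delta_\ell)$ given by
\begin{align*}
\sum_{i=1}^k \gamma_i\, p_i \;-\; \sum_{j=1}^\ell \delta_j\, q_j \;=\; 0, \qquad
\sum_{i=1}^k \gamma_i \;=\; 0, \qquad
\sum_{j=1}^\ell \delta_j \;=\; 0.
\end{align*}
This is a system of $d+2$ linear equations in $k+\ell\ge d+3$ unknowns, so it has a non-trivial solution $(\gamma,\delta)\neq 0$. Since the two scalar equations force some $\gamma_i\neq 0$ or some $\delta_j\neq 0$ (in fact both, unless $\gamma=\delta=0$, by the sum-zero constraints), the vector $(\gamma,\delta)$ has a strictly positive component somewhere in at least one of its two blocks.

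Now define the perturbed coefficients $\alpha_i(t)=\alpha_i+t\gamma_i$ and $\beta_j(t)=\beta_j+t\delta_j$. By the choice of $(\gamma,\delta)$, for every $t\in\mathbb{R}$ these coefficients still satisfy $\sum_i\alpha_i(t)=\sum_j\beta_j(t)=1$ and $\sum_i\alpha_i(t)\,p_i=\sum_j\beta_j(t)\,q_j$. Starting from $t=0$ and moving $t$ in the direction for which some component of $(\gamma,\delta)$ is negative (such a direction exists because the sums of $\gamma$ and $\delta$ are zero and not all entries vanish), there is a first value $t^\ast$ at which some coefficient $\alpha_i(t^\ast)$ or $\beta_j(t^\ast)$ hits $0$, with all others still non-negative. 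The resulting representation certifies $\mathrm{conv}(P')\cap\mathrm{conv}(Q')\ni\sum_i\alpha_i(t^\ast)p_i$ using strictly fewer than $k+\ell$ positive coefficients, contradicting minimality. Hence $k+\ell\le d+2$, as required.

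The main obstacle is the bookkeeping in the perturbation step: one must verify that the two affine constraints $\sum\gamma_i=\sum\delta_j=0$ together with non-triviality of $(\gamma,\delta)$ actually guarantee both a positive and a negative entry in the $(\gamma,\delta)$ vector (restricted to the appropriate block), so that an appropriate sign of $t$ produces a first zero-crossing at some finite $t^\ast\neq 0$. This is immediate from the sum-zero conditions, but it is the one place where care is needed. Everything else is a dimension-counting argument parallel to the standard proof of Carath\'eodory's theorem, lifted to the joint system that encodes ``same point in both convex hulls.''
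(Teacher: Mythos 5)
Your proof is correct and takes essentially the same route as the paper's: both bound the support of a common point of the two convex hulls by counting the $d+2$ affine constraints (the $d$ coordinate equations plus the two normalization equations $\sum_i \lambda_i = \sum_j \mu_j = 1$). The paper packages this dimension count as a basic-feasible-solution argument for an explicit LP (after first invoking Carath\'eodory to reduce $P$ and $Q$ to $d+1$ points each), whereas you carry out the equivalent minimal-support perturbation argument by hand; the two are interchangeable, and your handling of the zero-sum conditions guaranteeing a finite first zero-crossing is sound.
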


Notice that Carath\'{e}odory's Theorem is
the special case when $\SetCard{P} = 1$.
The lemma can also be easily generalized to points
in the intersection of $k$ sets,
rather than just the special case $k = 2$.
  
\begin{proof}
The proof is quite similar to one of the standard proofs
for Carath\'{e}odory's Theorem,
and based on properties of basic feasible solutions of an LP.

First, to avoid notational inconvenience,
we may assume w.l.o.g.~that $P$ and $Q$ are finite; 
in fact, that $\SetCard{P}, \SetCard{Q} \leq d + 1$.
This is due to Carath\'{e}dory's Theorem.
Let $x \in \conv(P) \cap \conv(Q)$.
Then, because $x \in \conv(P)$,
there exists $P' \subseteq P$ of cardinality at most $d + 1$
such that $x \in \conv(P')$; similarly for $Q$. 
Hence, we have exhibited subsets
$P' \subseteq P, Q' \subseteq Q$
of size at most $d + 1$ each, whose convex hulls intersect.
For the remainder of the proof,
we can therefore focus on $P', Q'$,
and will rename them to $P, Q$.

Write $P = \Set{p_1, p_2, \ldots, p_{d+1}}$ and
$Q = \Set{q_1, q_2, \ldots, q_{d + 1}}$.
A point $x$ is in $\conv(P)$ iff there exist
$\lambda_1, \ldots, \lambda_{d + 1} \geq 0$
with $\sum_i \lambda_i = 1$
such that $x = \sum_i \lambda_i p_i$.
Similarly, $x \in \conv(Q)$ iff there exist
$\mu_1, \ldots, \mu_{d + 1} \geq 0$
with $\sum_j \mu_j = 1$  
such that $x = \sum_j \mu_j q_j$.
We can therefore characterize the intersection
$\conv(P) \cap \conv(Q)$ using the following linear program
with variables $\lambda_i, \mu_j$:

\begin{align*}
\sum_i \lambda_i p_i & = \sum_j \mu_j q_j\\
\sum_i \lambda_i & = 1\\
\sum_j \mu_j & = 1\\
\lambda_i & \geq 0 \quad \mbox{ for all } i\\
\mu_j & \geq 0 \quad \mbox{ for all } j.
\end{align*}

Notice that the first ``constraint''
is actually $d$ constraints, one for each dimension.
Hence, the linear program has $2d + 2$ variables
and $3d + 4$ constraints.
By assumption, this LP has a feasible solution,
so it must have a \emph{basic} feasible solution.
A basic feasible solution is characterized by
$2d + 2$ constraints that
hold with equality.
Therefore, at most $d + 2$ inequalities can be strict.
The only inequalities in the LP are
the non-negativity constraints,
implying that at most $d + 2$ variables $\lambda_i, \mu_j$
can be strictly positive.
Hence, there is a point in the intersection
that can be written as
a convex combination of points $p_i$ and $q_j$,
using at most $d + 2$ points total.
\end{proof} 

Using Lemma~\ref{lem:caratheodory-generalization},
the proof of Lemma~\ref{lem:hyperplane-classifier}
is fairly straightforward.

\begin{extraproof}{Lemma~\ref{lem:hyperplane-classifier}}
We will use the terms ``linear classifier'' and ``hyperplane''
interchangeably in the proof,
using whichever term better emphasizes the concept
we are illustrating at the time. 
W.l.o.g., we assume that the \weightsV \NodeWeight
assigned to linear classifiers are normalized
so that they add up to $1$.

For each linear classifier $\Classifier \in \Hyperplanes$
and sample point $x$, let $\Classifier(x) \in \Set{0, 1}$
be the assigned binary label.
Define $\phi(x) := \sum_{\Classifier \in \Hyperplanes}
\NodeWeight[\Classifier]  \cdot\Classifier(x)$
as the weighted average label assigned to $x$
by all classifiers.
Define $P := \SpecSet{x}{\phi(x) < \frac{1}{d + 2}}$
and $Q := \SpecSet{x}{\phi(x) > 1-\frac{1}{d + 2}}$.

We claim that the convex hulls of $P$ and $Q$ do not intersect.
Suppose for contradiction that they did;
then, by Lemma~\ref{lem:caratheodory-generalization},
there exist $P' \subseteq P, Q' \subseteq Q$
with $\SetCard{P'} + \SetCard{Q'} \leq d + 2$
and $x \in \conv(P') \cap \conv(Q')$.
Write $P' = \Set{p_1, \ldots, p_k}$ and
$Q' = \Set{q_1, \ldots, q_\ell}$
with $k + \ell \leq d + 2$.
For each $p_i$, let
$\HyperplanesS{p}{i} :=
\SpecSet{\Classifier \in \Hyperplanes}{\Classifier(p_i) = 0}$
be the set of all linear classifiers
that assign label $0$ to $p_i$;
similarly, let 
$\HyperplanesS{q}{j} :=
\SpecSet{\Classifier \in \Hyperplanes}{\Classifier(q_j) = 1}$
be the set of all linear classifiers
that assign label $1$ to $q_j$.
By the definition of $P$ and $Q$, 
$\NodeWeight[\HyperplanesS{p}{i}] > 1 - \frac{1}{d + 2}$
for all $i$, and
$\NodeWeight[\HyperplanesS{q}{j}] > 1 - \frac{1}{d + 2}$
for all $j$, or, taking complements,
$\NodeWeight[\Compl{\HyperplanesS{p}{i}}] < \frac{1}{d + 2}$
for all $i$, and
$\NodeWeight[\Compl{\HyperplanesS{q}{j}}] < \frac{1}{d + 2}$
for all $j$.
Because
\begin{align*}
\NodeWeight[\bigcup_i \Compl{\HyperplanesS{p}{i}} \cup
\bigcup_j \Compl{\HyperplanesS{q}{j}}]
& \leq \sum_i \NodeWeight[\Compl{\HyperplanesS{p}{i}}]
+ \sum_j \NodeWeight[\Compl{\HyperplanesS{q}{j}}]
\; < \; k \cdot \frac{1}{d+2} + \ell \cdot \frac{1}{d+2}
\; < \; 1,
\end{align*}
there must exist at least one linear classifier
$\Classifier \in \bigcap_i \HyperplanesS{p}{i}
\cap \bigcap_j \HyperplanesS{q}{j}$.
Because $\Classifier(p_i) = 0$ for all $p_i \in P'$,
and $x \in \conv(P')$, we must have $\Classifier(x) = 0$.
But because $\Classifier(q_j) = 1$ for all $q_j \in Q'$,
and $x \in \conv(Q')$, we must also have $\Classifier(x) = 1$.
This is a contradiction, and we have proved that
the convex hulls of $P$ and $Q$ are disjoint.

Because $\conv(P) \cap \conv(Q) = \emptyset$,
the Hyperplane Separation Theorem implies that
there is a hyperplane \Classifier separating $P$ and $Q$.
We will show that any such \Classifier
(labeling every point in $P$ with $0$ every point in $Q$ with $1$)
satisfies the claim of the
lemma; thereto, fix one arbitrarily.
Consider any feedback that could be given to the algorithm;
because the graph \CLG is the $n$-dimensional hypercube,
this feedback is in the form of a point $x$
which \Classifier mislabels. We distinguish three cases for $x$:

\begin{itemize}
\item If $x \in P$, then $\Classifier(x) = 0$.
By definition of $P$, the total weight of classifiers
labeling $x$ with $0$ is more than $1 - \frac{1}{d + 2}$,
and all these classifiers are inconsistent with the feedback.
Therefore, the total weight of all classifiers
consistent with the feedback decreased by a factor $d + 2$,
which is much stronger than the claim of the lemma.
\item If $x \in Q$, then $\Classifier(x) = 1$; apart from this,
the proof is identical to the case $x \in P$.
\item If $x \notin P \cup Q$,
then either $\Classifier(x) = 0$ or
$\Classifier(x) = 1$ are possible.
The fractional label $\phi(x) \in [\frac{1}{d + 2}, 1-\frac{1}{d + 2}]$
was inconclusive.
But because $\phi(x) \geq \frac{1}{d + 2}$,
at least a $\frac{1}{d + 2}$ (weighted) fraction of classifiers 
labeled $x$ with $1$;
similarly, because $\phi(x) \leq 1 - \frac{1}{d + 2}$,
at least a $\frac{1}{d+2}$ (weighted) fraction of classifiers 
labeled $x$ with $0$.
Thus, whichever label \Classifier assigned to $x$
and was corrected about,
at least a $\frac{1}{d + 2}$ weighted fraction of classifiers
are inconsistent with the feedback.
\end{itemize}

Thus, in each case, we obtain that the total weight of classifiers
consistent with the feedback is at most a
$\max(\frac{1}{d + 2}, 1-\frac{1}{d + 2})
= \frac{d + 1}{d + 2}$ fraction of the total weight,
and the lemma follows.
\end{extraproof}

Using Lemma~\ref{lem:hyperplane-classifier},
Theorem~\ref{thm:no-errors} with
$\beta = \frac{d + 1}{d + 2}$,
and the fact that $M \leq n^d$,
we immediately recover Theorem 5
of \cite{maass-turan:1990:complexity}
for binary classification:

\begin{corollary}[Theorem 5 of \cite{maass-turan:1990:complexity}]
In the absence of noise, a hyperplane can be properly learned using 
at most $O(d \log_{\frac{d + 2}{d + 1}} n) = O(d^2 \log n)$
equivalence queries.
\end{corollary}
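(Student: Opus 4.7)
The plan is to instantiate Algorithm~\ref{algo:no-errors} on the hypercube graph \CLG, but with the initial candidate set $\InitSet = \Hyperplanes$ (the set of linearly separable labelings of the $n$ sample points), and then invoke Theorem~\ref{thm:no-errors} with the $\beta$ supplied by Lemma~\ref{lem:hyperplane-classifier}. Restricting the node weights to be supported on \Hyperplanes throughout the execution is what makes the learner \emph{proper}: each classifier eliminated is still a hyperplane, and the classifier produced in Line~\ref{line:potential} will be a hyperplane because Lemma~\ref{lem:hyperplane-classifier} guarantees the existence of one with small potential.

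First I would bound $N_0 = \SetCard{\Hyperplanes}$. Since linear classifiers in $\R^d$ form a set system of VC dimension $O(d)$, the Sauer--Shelah Lemma gives $N_0 = M = O(n^d)$, so $\log N_0 = O(d \log n)$. Next, for every node-weight function on \Hyperplanes, Lemma~\ref{lem:hyperplane-classifier} produces a proper classifier $\Classifier \in \Hyperplanes$ with $\Potential{\NodeWeight}{\Classifier} \leq \frac{d+1}{d+2}$. Therefore Algorithm~\ref{algo:no-errors} can always choose $\Classifier$ from \Hyperplanes in Line~\ref{line:potential} with $\beta = \frac{d+1}{d+2}$, and Theorem~\ref{thm:no-errors} immediately bounds the number of equivalence queries by
\[
\log_{1/\beta} N_0 \;=\; \frac{\log N_0}{\log\!\left(\frac{d+2}{d+1}\right)} \;=\; O\!\left( d \log_{\frac{d+2}{d+1}} n \right).
\]

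Finally, I would convert this to the stated $O(d^2 \log n)$ form by noting that $\log\!\bigl(\tfrac{d+2}{d+1}\bigr) = \log\!\bigl(1 + \tfrac{1}{d+1}\bigr) = \Theta(1/d)$, using the elementary estimate $\ln(1 + x) \geq \tfrac{x}{1+x}$ applied at $x = 1/(d+1)$. Substituting this in the denominator yields the bound $O(d^2 \log n)$.

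No step is really an obstacle: the VC bound and Theorem~\ref{thm:no-errors} are quoted, the asymptotic simplification is a one-line Taylor estimate, and the proper-learning aspect is already encoded in Lemma~\ref{lem:hyperplane-classifier}, whose conclusion is that the low-potential witness is itself a hyperplane. The only thing worth flagging is that one must be slightly careful to define \NodeWeight so that it has support only on \Hyperplanes when invoking the lemma, so that the ``improper'' classifiers do not contribute to the potential computation; this is immediate from Algorithm~\ref{algo:no-errors} starting with $\InitSet = \Hyperplanes$ and only ever shrinking \ConsistentSet.
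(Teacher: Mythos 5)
Your proposal is correct and follows essentially the same route as the paper: restrict the candidate set to \Hyperplanes, invoke Lemma~\ref{lem:hyperplane-classifier} to get a proper query with $\beta = \frac{d+1}{d+2}$, apply Theorem~\ref{thm:no-errors} with $M = O(n^d)$ from Sauer--Shelah, and simplify $\log\frac{d+2}{d+1} = \Theta(1/d)$. The extra care you take about keeping the node weights supported on \Hyperplanes is exactly what the paper's one-line derivation implicitly assumes.
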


As with the result for improper learning,
we obtain a bound in the case of imperfect feedback by using
Theorem~\ref{thm:errors} in place of Theorem~\ref{thm:no-errors}.

\section{Discussion and Conclusions} \label{sec:conclusions}

We defined a general framework for interactive learning
from imperfect responses to equivalence queries,
and presented a general algorithm that achieves
a small number of queries.
We then showed how query-efficient interactive learning algorithms
in several domains can be derived with practically no effort
as special cases; these include some previously known results
(classification and clustering) as well as 
new results on ranking/ordering.

Our work raises several natural directions for future work.
Perhaps most importantly, for which domains can the algorithms
be made computationally efficient (in addition to query-efficient)?
We provided a positive answer for ordering with perfect query responses,
but the question is open for ordering when feedback is imperfect.
For classification, when the possible clusters have VC dimension $d$,
the time is $O(n^d)$, which is unfortunately still impractical
for real-world values of $d$.
Maass and Tur\'{a}n \cite{maass-turan:1990:complexity}
show how to obtain better bounds
specifically when the sample points form a $d$-dimensional grid;
to the best of our knowledge, the question is open
when the sample points are arbitrary.
The Monte Carlo approach of Theorem~\ref{thm:main-sampling-theorem}
reduces the question to the question of sampling
a uniformly random hyperplane,
when the uniformity is over the \emph{partition}
induced by the hyperplane (rather than some geometric representation).
For clustering, even less appears to be known.

Another natural question is motivated by the discussion
in Section~\ref{sec:classification}
and its application to clustering.
We saw that the number of queries can increase significantly
in proper interactive learning, i.e.,
when the set of \structures that can be queried is restricted
to those that are themselves candidates.
When concepts were specifically defined as hyperplane partitions,
the increase can be bounded by $O(\log d)$.
Can similar bounds be obtained for other concept classes?
What happens in the case of clustering if each proposed clustering
must only have clusters from the allowed set?

Finally, we are assuming a uniform noise model.
An alternative would be that the probability of an incorrect response
depends on the type of response.
In particular, false positives could be extremely likely, for instance,
because the user did not try to classify
a particular incorrectly labeled data point,
or did not see an incorrect ordering of items
far down in the ranking.
Similarly, some wrong responses may be more likely than others;
for example, a user proposing a
merge of two clusters (or split of one)
might be ``roughly'' correct, but miss out on a few points
(the setting that
\cite{awasthi-zadeh:2010:supervised-clustering,%
awasthi-balcan-voevodski:2017:local-algorithm-journal}
studied).
We believe that several of these extensions should be fairly
straightforward to incorporate into the framework,
and would mostly lead to additional complexity in notation
and in the definition of various parameters.
But a complete and principled treatment would be an
interesting direction for future work.

\bibliographystyle{abbrv}
\bibliography{names,conferences,references}




\end{document}